\theoremstyle{plain}
\newtheorem{theorem}{Theorem}
\newtheorem{lemma}[theorem]{Lemma} 
\newtheorem{corollary}[theorem]{Corollary}  
\newtheorem{proposition}[theorem]{Proposition}  
\newtheorem{fact}[theorem]{Fact}
\newtheorem{definition}[theorem]{Definition}
\newcommand{\cost}{\mathrm{cost}}
\newcommand{\cut}{\mathrm{cut}}
\newcommand{\cc}{\mathrm{cc}}
\newcommand{\parent}{\mathrm{parent}}
\newcommand{\ith}{i^{\scriptsize \mbox{{\rm th}}}}
\newcommand{\eps}{\varepsilon}
\newcommand{\R}{\mathbb{R}}
\newcommand{\Oh}{\mathcal{O}}
\newcommand{\tOh}{\widetilde{\mathcal O}}
\newcommand{\cC}{\mathcal{C}}
\newcommand{\cL}{\mathcal{L}}
\newcommand{\cS}{\mathcal{S}}
\newcommand{\cT}{\mathcal{T}}
\newcommand{\cA}{\mathcal{A}}
\newcommand{\cB}{\mathcal{B}}
\newcommand{\cD}{\mathcal{D}}
\newcommand{\cF}{\mathcal{F}}
\newcommand{\cK}{\mathcal{K}}
\newcommand{\cM}{\mathcal{M}}
\newcommand{\cP}{\mathcal{P}}
\newcommand{\cW}{\mathcal{W}}
\newcommand{\cX}{\mathcal{X}}
\newcommand{\cY}{\mathcal{Y}}
\newcommand{\cZ}{\mathcal{Z}}
\newcommand{\edgecolor}{\mathrm{color}}
\newcommand{\score}{\mathrm{score}}
\newcommand{\update}{\mathrm{update}}
\newcommand{\interval}{\mathrm{int}}
\newcommand{\NULL}{\mathrm{NULL}}
\newcommand{\Found}{\mathrm{Found}}
\DeclareMathOperator*{\argmin}{arg\,min}
\newcommand{\lca}{\mathrm{lca}}
\newcommand{\Round}{\mathsf{RoundEdges}}
\algnewcommand\True{\textbf{true}\space}
\algnewcommand\False{\textbf{false}\space}
\newcommand{\pluseq}{\mathrel{+}=}
\newcommand{\minuseq}{\mathrel{-}=}
\newcommand{\M}{M^{(h,h')}}
\begin{document}
\title{Finding the KT partition of a weighted graph in near-linear time}
\author{Simon Apers\thanks{IRIF, CNRS, Paris. Email: smgapers@gmail.com} \qquad
Pawe{\l} Gawrychowski\thanks{Institute of Computer Science, University of Wroc{\l}aw, Poland. Email: gawry@cs.uni.wroc.pl. Partially supported by the Bekker programme of the Polish National Agency for Academic Exchange (PPN/BEK/2020/1/00444).} \qquad
Troy Lee\thanks{Centre for Quantum Software and Information, University of Technology Sydney. Email: troyjlee@gmail.com. Supported in part by the Australian Research Council Grant No: DP200100950}}
\date{\vspace{-5mm}}
\maketitle

\begin{abstract}
In a breakthrough work, Kawarabayashi and Thorup (J.~ACM'19) gave a near-linear time deterministic algorithm to compute the weight of a minimum cut in 
a simple graph $G = (V,E)$.  A key component of this algorithm is finding the $(1+\eps)$-KT partition of $G$, the coarsest partition $\{P_1, \ldots, P_k\}$ of $V$ such that for every non-trivial $(1+\eps)$-near minimum 
cut with sides $\{S, \bar{S}\}$ it holds that $P_i$ is contained in either $S$ or $\bar{S}$, for $i=1, \ldots, k$.  In this work we give a near-linear time randomized algorithm to find the $(1+\eps)$-KT partition of a 
\emph{weighted} graph.  Our algorithm is quite different
from that of Kawarabayashi and Thorup and builds on Karger's framework of tree-respecting cuts (J.~ACM'00).

We describe a number of applications of the algorithm.
(i) The algorithm makes progress towards a more efficient algorithm for constructing the polygon representation of the set of near-minimum cuts in a graph.
This is a generalization of the cactus representation, and was initially described by Bencz\'ur (FOCS'95).  (ii) We improve the time complexity of a recent quantum algorithm for minimum cut in a simple graph in the adjacency list model from 
$\tOh(n^{3/2})$ to $\tOh(\sqrt{mn})$, when the graph has  $n$ vertices and $m$ edges. (iii) We describe a new type of randomized 
algorithm for minimum cut in simple graphs with complexity $\Oh(m + n \log^{6} n)$.
For graphs that are not too sparse, this matches the complexity of the current best $\Oh(m + n \log^2 n)$ algorithm which uses a different approach based on random contractions.

The key technical contribution of our work is the following.  Given a weighted graph $G$ with $m$ edges and a spanning tree $T$ of $G$, consider the graph $H$ whose nodes are the edges of $T$, and 
where there is an edge between two nodes of $H$ iff the corresponding 2-respecting cut of $T$ is a non-trivial near-minimum cut of $G$.  We give a $\Oh(m \log^4 n)$ time deterministic algorithm to compute a 
spanning forest of $H$.
\end{abstract}

\section{Introduction}

Given a weighted and undirected graph $G$ with $n$ vertices and $m$ edges,\footnote{Throughout this paper we will use $n$ and $m$ to denote the number of vertices and edges of the input graph.}
the minimum cut problem is to find the minimum weight $\lambda(G)$ of a set of edges whose removal disconnects $G$.  When $G$ is unweighted, this is simply the minimum number of edges 
whose removal disconnects $G$, also known as the edge connectivity of $G$.  The minimum cut problem is a fundamental problem in theoretical computer science whose study goes back to at least the 1960s when the first polynomial time algorithm 
computing edge connectivity was given by Gomory and Hu \cite{GH61}.  In the current state-of-the-art, there are near-linear time randomized algorithms for the minimum cut problem in weighted graphs \cite{Karger, 
GawrychowskiMW20, MN20} and near-linear time \emph{deterministic} algorithms in the case of simple graphs\footnote{A simple graph is an unweighted graph with no self loops and at most one edge between any pair of vertices.}
\cite{KT19, HRW20}.  Very recently, Li \cite{Li21} has given an almost-linear time (i.e.\ time $\Oh(m^{1+ o(1)})$) deterministic algorithm for weighted graphs as well.

The best known algorithms for weighted graphs all rely on a framework developed by Karger \cite{Karger} which, for an input graph $G$, relies on finding $\Oh(\log n)$ spanning trees of $G$ such that with high 
probability one of these spanning trees will contain at most 2 edges from a minimum cut of $G$.  In this case the cut is said to 2-respect the tree.  A key insight of Karger is that, given a spanning 
tree $T$ of $G$, the problem of finding a 2-respecting cut of $T$ that has minimum weight in $G$ can be solved deterministically in near-linear time, specifically time $\Oh(m \log^2 n)$.  
After standing for 20 years, the bound for this minimum-weight 2-respecting cut problem was recently improved by Gawrychowski, Mozes, and 
Weimann \cite{GawrychowskiMW20}, who gave a deterministic $\Oh(m\log n)$ time algorithm, and independently by Mukhopadhyay and Nanongkai \cite{MN20} who gave a randomized algorithm with 
complexity $\Oh(m\log n + n \log^4 n)$.

The best algorithms in the case of a simple graph $G$ rely on a quite 
different approach, pioneered by Kawarabayashi and Thorup \cite{KT19}.  This approach begins by finding the minimum degree $d$ of a vertex in $G$.  Then 
the question becomes if there is a non-trivial cut, i.e.\ a cut where both sides of the corresponding bipartition have cardinality at least 2, whose weight is less than $d$.  This problem 
is solved by finding what we call the $(1+\eps)$-\emph{KT partition} of the graph.  Let $\cB_\eps^{nt}(G)$ be the set of all bipartitions $\{S, \bar{S}\}$ of the vertex set corresponding to non-trivial cuts whose weight is at most $(1+\eps)\lambda(G)$.  
The $(1+\eps)$-KT partition of $G$ is the coarsest partition $\{P_1, \ldots, P_k\}$ of the vertex set such that for any $\{S,\bar{S}\} \in \cB_\eps^{nt}(G)$ it holds that $P_i$ is contained in either $S$ or $\bar{S}$, for each $i=1, \ldots, k$.  
If one considers the multigraph $G'$ formed from $G$ by identifying vertices in the same set $P_i$, then $G'$ preserves all non-trivial 
$(1+\eps)$-near minimum cuts of $G$.  Kawarabayashi and Thorup further show that for any $\eps < 1$ the graph $G'$ only has $\tOh(n)$ edges.  This bound crucially uses that the original graph is simple.  
The edge connectivity of $G$ is thus the minimum of $d$ and the edge connectivity of $G'$.
One can use Gabow's deterministic $\Oh(\lambda m' \log n)$ edge connectivity algorithm \cite{Gabow95} for a multigraph with $m'$ edges and edge connectivity $\lambda$  
to check in time $\tOh(n d \log n) = \tOh(m)$ if the edge connectivity of $G'$ is less than $d$ and, if so, compute it.    
In the most technical part of their work, Kawarabayashi and Thorup give a deterministic algorithm to find the $(1+\eps)$-KT partition of a simple graph $G$ in time $\tOh(m)$, 
giving an $\tOh(m)$ time deterministic algorithm overall for edge connectivity.
The key tool in their algorithm is the PageRank algorithm, which they use for finding low conductance cuts in the graph.

The KT partition has proven to be a very useful concept.  Rubinstein, Schramm, and Weinberg \cite{RSW18} also go through the $(1+\eps)$-KT partition to give a near-optimal $\tOh(n)$ randomized query algorithm determining the edge connectivity of 
a simple graph in the \emph{cut query} model.  In the cut query model one can query a subset of the vertices $S$ and receive in return the number of edges with exactly one endpoint in $S$.
En route to their result, \cite{RSW18} also improved the bound on the number of inter-component edges in the $(1+\eps)$-KT partition of a simple graph  to $\Oh(n)$, for any $\eps < 1$.  In the case $\eps =0$ this was independently 
done by Lo, Schmidt, and Thorup \cite{LST20}.  The KT partition approach is also used in the current best \emph{randomized} algorithm for edge connectivity, which runs in time 
$\Oh(\min\{m + n\log^2 n, m\log n\})$ \cite{GNT20}.\footnote{The bound quoted in \cite{GNT20} is $\Oh(m + n\log^3 n)$ but the improvement to Karger's algorithm by \cite{GawrychowskiMW20} reduces this to $\Oh(m + n\log^2 n)$.}

\subsection{Our results}
In this work we give the first near-linear time randomized algorithm to find the $(1+\eps)$-KT partition of a \emph{weighted} graph, for $0 \le \eps \le 1/16$.  An interesting aspect of our algorithm is that it uses Karger's 2-respecting cut framework to find the $(1+\eps)$-KT partition, thereby combining the aforementioned major lines of work on the minimum cut problem.
This makes progress on a number of problems.

\begin{enumerate}
\item
The polygon representation is a compact representation of the set of near-minimum cuts of a weighted graph, originally described by Bencz\'ur \cite{Benczur95,Ben97} and Bencz\'ur-Goemans \cite{BG08}.
It extends the cactus representation \cite{DKL76}, which only works for the set of exact minimum cuts, and has played a key role in recent breakthroughs on the traveling salesperson problem \cite{GSS11,KKG21}.
For a general weighted graph the polygon representation has size $O(n^2)$, and Bencz\'ur has given a randomized algorithm to construct a polygon representation of the $(1+\eps)$-near mincuts of a graph in time $\Oh(n^{2(1+\eps)})$ 
\cite[Section 6.3]{Ben97} by building on the Karger-Stein algorithm.
It is an open question whether we can construct a polygon representation in time $\tOh(n^2)$ for $\eps > 0$.  In his thesis \cite[pg.\ 126]{Ben97}, Bencz{\'u}r says, ``It already seems hard to directly identify the system of atoms within the $\tOh(n^2)$ time 
bound," where the system of atoms is defined analogously to the $(1+\eps)$-KT partition but for the set of all $(1+\eps)$-near minimum cuts, not just the non-trivial ones.  One can easily construct the set of atoms from a 
$(1+\eps)$-KT partition, thus our KT partition algorithm gives a $\tOh(m)$ time algorithm for this task as well, making progress on this open question.

\item
The $(1+\eps)$-KT partition of a weighted graph is exactly what is needed to give an optimal 
\emph{quantum} algorithm for minimum cut: Apers and Lee \cite{ApersL20} showed that the quantum query and time complexity of minimum cut in the adjacency matrix model 
is $\widetilde \Theta(n^{3/2} \sqrt{\tau})$ for a weighted graph where the ratio of the largest to smallest edge weights is $\tau$, with the algorithm proceeding by finding a $(1+\eps)$-KT partition.

In the case where the graph is instead represented as an adjacency list, they gave an algorithm with query complexity $\tOh(\sqrt{mn\tau})$ but whose running time is larger at 
$\tOh(\sqrt{mn\tau} + n^{3/2})$.  The bottleneck in the time complexity is the time taken to find a $(1+\eps)$-KT partition of a weighted graph with $\tOh(n)$ edges.  Using the near-linear time randomized algorithm we give 
to find a $(1+\eps)$-KT partition here improves the time complexity of this algorithm to $\tOh(\sqrt{mn\tau})$, matching the query complexity.
We detail the full algorithm in \cref{sec:quant-algo}.

Both quantum algorithms also used the following observation \cite[Lemma 2]{ApersL20}: if in a weighted graph $G$ the ratio of the largest edge weight to the smallest is $\tau$, then the total weight of inter-component edges in a $(1+\eps)$-KT partition of $G$ for $\eps < 1$ is $\Oh(\tau n)$, which can be tight.

\item
The best randomized algorithm to compute the edge connectivity of a simple graph is the 2-out contraction approach of Ghaffari, Nowicki, and Thorup \cite{GNT20}, 
which has running time $\Oh(\min\{m + n\log^2 n, m\log n\})$.  Using our algorithm to find a $(1+\eps)$-KT partition in a weighted graph we can follow
Karger's 2-respecting tree approach to compute the edge connectivity of a simple graph in time 
$\Oh(m + n \log^{6} n)$, thus also achieving the optimal bound on graphs that are not too sparse.  
We postpone details to \cref{sec:conn-algo}.
\end{enumerate}
Apart from these examples, we are hopeful that our near-optimal randomized algorithm for finding the KT partition of a weighted graph will find further applications.

In order to find a $(1+\eps)$-KT partition in near-linear time, Apers and Lee \cite{ApersL20} show that it suffices to solve the following problem in near-linear time.  
Let $G$ be a connected weighted graph and $T$ a spanning tree of $G$.  
Consider a graph $H$ whose nodes are the edges of $T$, and where two nodes $e,f$ of $H$ are connected by an edge iff the 2-respecting cut 
defined by $e,f$ is a non-trivial $(1+\eps)$-near minimum cut of $G$.  Then the problem is to find a spanning forest of $H$.  Our main technical 
contribution is to give a $\Oh(m \log^4 n)$ time deterministic algorithm to solve this problem, where $m$ is the number of edges of the original graph $G$.

It is interesting to compare the problem of finding a spanning forest of $H$ with the original problem solved by Karger of finding a minimum-weight 2-respecting cut of $T$.  
To find a spanning forest of $H$ we potentially have to find $\Omega(n)$ many $(1+\eps)$-near minimum cuts, which we accomplish with only an 
additional logarithmic overhead in the running time.  The first insight to how this might be possible is to note that Karger's original algorithm to find the minimum weight 2-respecting cut actually does something stronger 
than needed.  Let $\cost(e,f)$ be the weight of the 2-respecting cut of $T$ defined by $\{e,f\}$.  For \emph{every} edge $e$ of $T$ Karger's algorithm attempts to find an 
$f^* \in \argmin_f \cost(e,f)$.  It does not always succeed in this task, but if the candidate $f'$ returned for edge $e$ is not such a minimizer, then for $f^* \in \argmin_f \cost(e,f)$ it 
must be the case that the candidate $g$ returned for $f^*$ satisfies $\cost(f^*,g) \le \cost(e,f^*)$.  In this way, the algorithm still succeeds to find a minimum weight 2-respecting cut in the end.  

In contrast, we give an algorithm that for every edge $e$ of $T$ actually finds 
\[
f^* \in \argmin_f \large\{\cost(e,f): \{e,f\} \text{ defines a non-trivial cut} \large\} \enspace .
\] 
We then show that this suffices to implement a round of Bor\r{u}vka's spanning forest algorithm \cite{NMN01} on $H$ in near-linear time.  Bor\r{u}vka's spanning forest algorithm consists of $\log n$ rounds and maintains the 
invariant of having a partition $\{S_1, \ldots, S_k\}$ of the vertex set and a spanning tree for each set $S_i$.  The algorithm terminates when there is no outgoing 
edge from any set of the partition, at which point the collection of spanning trees for the sets of the partition is a spanning forest of $H$.  The sets of the partition are initialized to be individual 
nodes of $H$.  

In each round of Bor\r{u}vka's algorithm the goal is to find an outgoing edge from each set $S_i$ of the partition which has one.  Consider a node $e$ of $H$ with $e \in S_i$.  We can find the best 
partner $f$ for $e$ and check if $\{e,f\}$ indeed gives rise to a non-trivial $(1+\eps)$-near minimum cut and so is an edge of $H$.  The problem is that
$f$ could also be in $S_i$ in which case the edge $\{e,f\}$ is not an outgoing edge of $S_i$ as desired.
To handle this, we maintain a data structure that allows us to find both the best partner $f$ for $e$, but also the best partner $f'$ for $e$ that lies in a different set of the partition from $f$.  
We call this operation a \emph{categorical top two} query.  If 
there actually is an edge of $H$ with one endpoint $e$ and the other endpoint outside of $S_i$ then either $\{e,f\}$ or $\{e,f'\}$ will be such an edge.  Following the approach of \cite{GawrychowskiMW20} to the 
minimum-weight 2-respecting cut problem, combined with an efficient data structure for handling categorical top two queries, we are able to do this for all nodes $e$ of $H$ in near-linear time, which allows us to implement a 
round of Bor\r{u}vka's algorithm in near-linear time.

\subsection{Technical overview}\label{sec:technical}
We now give a more detailed description of our main result.  Let $G = (V,E,w)$ be a weighted graph, where $E$ is the set of edges and $w:E \rightarrow \R_+$ assigns 
a positive weight to each edge.  For a set $S \subseteq V$ let $\Delta_G(S)$ be the set of all edges of $G$ with exactly one endpoint in $S$.  A \emph{cut} of $G$ is a set of edges 
of the form $\Delta_G(S)$ for some $\emptyset \ne S \subsetneq V$.  We call $S$ and $\bar{S}$ the shores of the cut.  Let $w(\Delta_G(S)) = \sum_{e \in \Delta(S)} w(e)$.  
We use $\lambda(G) = \min_{\emptyset \ne S \subsetneq V} w(\Delta(S))$ for the minimum weight of a cut in~$G$.  

We will be interested in partitions of $V$ and the partial order on partitions induced by \emph{refinement}.  For two partitions $\cX,\cY$ of $V$ we say that $\cX \preceq \cY$ iff for every $X \in \cX$ there is a $Y \in \cY$ with $X \subseteq Y$.  
In this case we say $\cX$ is a \emph{refinement} of $\cY$.  The \emph{meet} of two partitions $\cX$ and $\cY$, denoted $\cX \wedge \cY$, is the partition $\cZ$ such that $\cZ \preceq \cX, \cZ \preceq \cY$ and for any other partition $\cW$ satisfying these 
two conditions $\cW \preceq \cZ$.  In other words, $\cX \wedge \cY$ is the greatest lower bound on $\cX$ and $\cY$ under $\preceq$.  For a set of partitions $\cD = \{\cD_1, \ldots, \cD_K\}$ we write 
$\bigwedge \cD = \cD_1 \wedge \cdots \wedge \cD_K$.

For our applications we need to consider not only minimum cuts, but also near-minimum cuts.  For $\eps \ge 0$, let $\cB_\eps(G) = \{ \{S, \bar{S}\} : w(\Delta_G(S)) \le (1+\eps)\lambda(G) \}$ be the set of 
all bipartitions of $V$ corresponding to $(1+\eps)$-near minimum cuts.  Let $\cB_\eps^{nt}(G) \subseteq \cB_\eps(G)$ be the set of all the non-trivial cuts in $\cB_\eps(G)$.  The $(1+\eps)$-KT partition of $G$ is 
exactly $\bigwedge \cB_\eps^{nt}(G)$.

Both $\bigwedge \cB_\eps(G)$ and $\bigwedge \cB_\eps^{nt}(G)$ are important sets for understanding the structure of (near)-minimum cuts in a graph.  Consider first $\bigwedge \cB_0(G)$, the meet of the set of all bipartitions corresponding to 
minimum cuts.  This set arises in the cactus decomposition of $G$ \cite{DKL76}, a compact representation of all minimum cuts of $G$.  
A cactus is a connected multigraph where every edge appears in exactly one cycle.  The edge connectivity of 
a cactus is 2 and the minimum cuts are obtained by removing any two edges from the same cycle.  A \emph{cactus decomposition} of a graph $G$ is a cactus $H$ on $\Oh(n)$ vertices and a mapping $\phi: V(G) \rightarrow V(H)$ 
such that $\Delta_G(\phi^{-1}(S))$ is a mincut of $G$ iff $\Delta_H(S)$ is a mincut of $H$.  The mapping $\phi$ does not have to be injective, so multiple vertices of $G$ can map to the same vertex of $H$.  In this case, 
however, the cactus decomposition property means that all vertices in $\phi^{-1}(\{v\})$ must be on the same side of every minimum cut of $G$, for every $v \in V(H)$.  Thus as $v$ ranges over $V(H)$ the 
sets $\phi^{-1}(\{v\})$ give the elements of $\bigwedge \cB_0(G)$ (note that $\phi^{-1}(\{v\})$ can also be empty).  A cactus decomposition of a weighted graph can be constructed by a randomized algorithm in near-linear time \cite{KP09}, 
thus this also gives a near-linear time randomized algorithm to compute $\bigwedge \cB_0(G)$.

Lo, Schmidt, and Thorup \cite{LST20} give a version of the cactus decomposition that only represents the non-trivial minimum cuts.  In fact, they give a deterministic $\Oh(n)$ time algorithm that converts a standard cactus 
into one representing the non-trivial minimum cuts.  Combining this with the near-linear time algorithm to compute a cactus decomposition, this gives a near-linear time randomized algorithm to compute $\bigwedge \cB_0^{nt}(G)$
as well.

The situation changes once we go to near-minimum cuts, which can no longer be represented by a cactus, but require the deformable polygon representation from \cite{Benczur95,Ben97,BG08}.  This construction is fairly intricate, and the best known randomized algorithm to construct a deformable polygon representation of the $(1+\eps)$-near mincuts of a graph builds on the Karger-Stein algorithm and takes time $\Oh(n^{2(1+\eps)})$ \cite[Section 6.3]{Ben97}.  A prerequisite to constructing a deformable polygon representation is being able to compute $\bigwedge \cB_\eps(G)$ as, analogously to the case of a cactus, these sets will be the ``atoms'' that label regions of the polygons.

Our main result in this work is to give a randomized algorithm to compute $\bigwedge \cB_\eps(G)$ and $\bigwedge \cB_\eps^{nt}(G)$ in time $\Oh(m \log^5 n)$.  

\begin{restatable}{theorem}{main}
\label{thm:main}
Let $G = (V,E,w)$ be a graph with $n$ vertices and $m$ edges.  For $0 \le \eps \le 1/16$ let $\cB_\eps = \{ \{S, \bar{S}\} : w(\Delta(S)) \le (1+\eps)\lambda(G) \}$ and $\cB_\eps^{nt} \subseteq \cB_\eps$ 
be the subset of $\cB_\eps$ containing only non-trivial cuts.  
Both $\bigwedge \cB_\eps$ and $\bigwedge \cB_\eps^{nt}$ can be computed with high probability by a randomized algorithm with running time $\Oh(m \log^5 n)$.
\end{restatable}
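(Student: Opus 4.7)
The plan is to combine three ingredients. First, invoke Karger's tree packing framework to sample $\Oh(\log n)$ random spanning trees of $G$ so that with high probability every non-trivial $(1+\eps)$-near minimum cut 2-respects at least one of the sampled trees; the restriction $\eps \le 1/16$ is exactly what makes this tree-packing guarantee go through. Second, apply the Apers--Lee reduction: for each sampled tree $T$, define the auxiliary graph $H_T$ whose nodes are the edges of $T$ and whose edges are the pairs $\{e,f\}$ such that the 2-respecting cut of $T$ induced by $\{e,f\}$ is a non-trivial $(1+\eps)$-near minimum cut of $G$. Apers--Lee show that from a spanning forest of each $H_T$ one can read off a partition $\cP_T$ of $V$, and that $\bigwedge_T \cP_T$ equals $\bigwedge \cB_\eps^{nt}$. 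Third, the key technical contribution of the paper: a deterministic $\Oh(m\log^4 n)$ algorithm to compute a spanning forest of $H_T$ for any given $T$. Combining the three and taking the meet of the $\Oh(\log n)$ partitions yields $\bigwedge \cB_\eps^{nt}$ in total time $\Oh(m\log^5 n)$.

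For the spanning-forest computation I would run Bor\r uvka's algorithm on $H_T$, which terminates in $\Oh(\log n)$ rounds and maintains a partition $\{S_1,\ldots,S_k\}$ of the edges of $T$ together with a spanning tree of each $S_i$. Each round must produce, for every current component $S_i$, an outgoing edge of $H_T$ whenever one exists; it therefore suffices to find, for each tree edge $e \in S_i$, a partner $f^* \in \argmin_f\{\cost(e,f):\{e,f\}\text{ defines a non-trivial cut}\}$ \emph{plus} a second-best partner lying in a different Bor\r uvka component than $f^*$. I will build this on top of the Gawrychowski--Mozes--Weimann heavy-path/interval-tree machinery used for the 2-respecting minimum cut problem, extending each interval node to support a ``categorical top two'' query: report the minimum and the best alternative whose category (component index) differs from that of the minimum. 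Once such a primitive is available, a round of Bor\r uvka on $H_T$ costs $\Oh(m\log^3 n)$, and the $\Oh(\log n)$ rounds give the claimed $\Oh(m\log^4 n)$ per tree.

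The main obstacle is designing and amortizing the categorical top two data structure. Two issues must be handled together. First, Karger's and GMW's procedures find only the globally minimum 2-respecting cut, and they exploit the freedom to swap which endpoint ``pays''; restricting to non-trivial cuts eliminates the 1-respecting shortcut and restricting to cuts crossing a component eliminates many candidates, so the search structure must be rebuilt to carry the non-triviality constraint and the dynamically changing Bor\r uvka labelling. Second, maintaining a ``top two by distinct category'' invariant under the batched interval-tree operations (path sums, prefix minima, subtree add) is more delicate than maintaining a single minimum: merging two nodes of the search tree requires picking the two smallest values from up to four candidate values while respecting the distinctness constraint. I expect a careful case analysis to show that each merge is $\Oh(1)$ and that the total recomputation per Bor\r uvka round stays within $\Oh(m\log^3 n)$.

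Finally, recovering $\bigwedge \cB_\eps$ from $\bigwedge \cB_\eps^{nt}$ is an easy postprocessing step: a trivial $(1+\eps)$-near minimum cut is one that isolates a single vertex $v$ with weighted degree at most $(1+\eps)\lambda(G)$, so after computing $\lambda(G)$ by any near-linear minimum-cut algorithm (e.g.\ \cite{GawrychowskiMW20}) we scan the weighted degrees once and split each such $v$ off into its own part of the partition. This adds only $\Oh(m)$ time and yields the full bound $\Oh(m\log^5 n)$ claimed in the theorem.
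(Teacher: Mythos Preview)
Your proposal follows essentially the same route as the paper: Karger's tree packing, the Apers--Lee reduction to spanning forests of the auxiliary graphs $H_T$, the $\Oh(m\log^4 n)$ spanning-forest routine via Bor\r uvka with categorical-top-two queries layered on the Gawrychowski--Mozes--Weimann machinery, and the final split-off of low-degree vertices to pass from $\bigwedge \cB_\eps^{nt}$ to $\bigwedge \cB_\eps$.

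Two details you gloss over are worth flagging. First, the graph $H_T$ as you define it captures only the \emph{strictly} 2-respecting near-minimum cuts (pairs $\{e,f\}$); the 1-respecting ones, where $|\Delta_T(S)|=1$, are not edges of $H_T$ and must be added to the generating set separately---the Apers--Lee lemma you invoke (\cref{lem:ALstree}) is stated only for families with $|\Delta_T(S)|=2$. The paper handles this with an $\Oh(m)$ pass via \cref{lem:one_respect} to collect the set $\cA_i$ of 1-respecting near-minimum cuts before taking the meet. Second, ``reading off a partition $\cP_T$'' from a generating set of $\Oh(n)$ bipartitions is itself nontrivial: the naive meet of $\Theta(n)$ bipartitions of an $n$-element set costs $\Theta(n^2)$. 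The paper devotes \cref{lem:euler-tour} to this step, computing the meet in $\tOh(n)$ time via a preorder traversal that maintains, for each vertex, a bit-string of cut memberships using the persistent-sequence signatures of Mehlhorn--Sundar--Uhrig. Neither omission breaks the plan, but both are needed to reach the stated $\Oh(m\log^5 n)$ bound.
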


In the rest of this introduction we focus on computing $\bigwedge \cB_\eps^{nt}$.  It is easy to construct $\bigwedge \cB_\eps$ from $\bigwedge \cB_\eps^{nt}$ deterministically in $\Oh(n)$ time.

The first obstacle we face in designing a near-linear time algorithm to compute the meet of $\cB_\eps^{nt}$ is that the number of near-minimum cuts in $G$ can be $\Omega(n^2)$, so we cannot afford to 
consider all of them.  An idea to get around this is to try the following:
\begin{enumerate}
\item Efficiently find a ``small'' subset $\cB' \subseteq \cB_\eps^{nt}$ such that $\bigwedge \cB' = \bigwedge \cB_\eps^{nt}$.  We call such a subset a \emph{generating set}.
\end{enumerate}
A greedy argument shows that such a subset $\cB'$ exists of size at most $n-1$.  We initialize $\cB' = \{ \{S, \bar{S}\} \}$ for some element $\{S, \bar{S}\}$ in $\cB_\eps^{nt}$.  We then iterate through the elements 
$\{T, \bar{T}\}$ of $\cB_\eps^{nt}$ and add it to $\cB'$ iff $\bigwedge \cB' \cup \{T, \bar{T}\} \ne \bigwedge \cB'$.  Each bipartition added to $\cB'$ increases the number of elements in $\bigwedge \cB'$ by at least $1$.
As this size can be at most $n$, and begins with size $2$ the total number of sets at termination is at most $n-1$. 
This shows that a small generating set exists, but there still remains the problem of finding such a generating set \emph{efficiently}.

Assuming we get past the first obstacle, there remains a second obstacle.
The most straightforward algorithm to compute the meet of $k$ partitions of a set of size $n$ takes time $\Theta(kn \log n)$, which is 
again too slow if $k = \Theta(n)$.  Thus we will also need to 
\begin{enumerate}
\setcounter{enumi}{1}
\item Exploit the structure of $\cB'$ to compute $\bigwedge \cB'$ efficiently.
\end{enumerate}

Apers and Lee \cite{ApersL20} give an approach to accomplish (1) and (2) following Karger's framework of tree respecting cuts.
Karger shows that in near-linear time one can compute a set of $K \in \Oh(\log n)$ spanning trees $T_1, \ldots, T_K$ of $G$ such that every $(1+\eps)$-near minimum cut of $G$ 2-respects at least one of these trees.
Let $\cB_i \subseteq \cB_\eps^{nt}$ be the bipartitions corresponding to non-trivial near-minimum cuts that 2-respect $T_i$.  To compute $\bigwedge \cB_\eps^{nt}$ it suffices to compute 
$\cC_i = \bigwedge \cB_i$ for each $i=1, \ldots, K$ and then compute $\bigwedge_{i=1}^K \cC_i$.  The latter can be done in time $\Oh(n \log^2 n)$ by the aforementioned algorithm.  
This leaves the problem of computing $\bigwedge \cB_i$.

A key observation from \cite{ApersL20} gives a generating set $\cB_i'$ for $\cB_i$ of size $\Oh(n)$.  One intializes $\cB_i'$ to be empty and then adds bipartitions in $\cB_i$ that 1-respect $T_i$ to $\cB_i'$.  
This is a set of size $\Oh(n)$, and Karger has shown that all near-minimum cuts that 1-respect a tree can be found in time $\Oh(m)$.  

Now we focus on the cuts that strictly 2-respect $T_i$.  To handle these one creates a graph $H$ whose nodes are the \emph{edges} of $T_i$ and where there is an edge between 
nodes $e$ and $f$ iff the 2-respecting cut of $T_i$ defined by $\{e,f\}$ is a near-minimum cut in $\cB_i$.  One then adds to $\cB_i'$ the bipartitions corresponding to a set of 2-respecting 
cuts that form a spanning forest of $H$.  The resulting set $\cB_i'$ has size $\Oh(n)$ and it can be shown to be a generating set for $\cB_i$.  

Apers and Lee give a \emph{quantum} algorithm to find a spanning forest of $H$ with running time $\tOh(n^{3/2})$.  They then give a randomized algorithm to compute $\bigwedge \cB_i'$ 
in time $\tOh(n)$. 
As our main technical contribution, we give a deterministic algorithm to find a spanning forest of $H$ in time $\Oh(m \log^4 n)$.  
We also replace the randomization used in the algorithm to compute $\bigwedge \cB_i'$ with an appropriate data structure to give an $\tOh(n)$ deterministic algorithm to compute the meet.

\section{Preliminaries}
For a natural number $n$ we use $[n] = \{1,\ldots, n\}$.
\paragraph*{Graph notation}
For a set $S$ we let $S^{(2)}$ denote the set of unordered pairs of elements of $S$.
We represent an undirected edge-weighted graph as a triple $G = (V,E,w)$ where $E \subseteq V^{(2)}$ and $w: E \rightarrow \R_{+}$ gives the weight of an edge $e \in E$.  
We will also use $V(G)$ to denote the vertex set of $G$ and $E(G)$ to denote the set of edges.
We always use $n$ for the number of vertices in $G$ and $m$ for the number of edges.  
We will overload the function $w$ to let $w(F) = \sum_{e \in F} w(e)$ for a set of edges $F$ and for two disjoint sets $S,T \subseteq V$ we use $w(S,T)$
 to denote $\sum_{e \in E: |e \cap S| = |e \cap T| = 1} w(e)$, that is the sum of the weights of 
edges with one endpoint in $S$ and one endpoint in $T$.  For a subset $\emptyset \ne S \subsetneq V$ we let $\Delta(S)$ be the set of edges with 
exactly one endpoint in $S$.  This is the \emph{cut} defined by $S$.
We let $\lambda(G)$ denote the weight of a minimum cut in $G$, i.e., $\lambda(G) = \min_{\emptyset \neq S \subsetneq V} w(\Delta(S))$.  

\paragraph*{Heavy path decomposition}
We use the standard notion of \emph{heavy path decomposition} of $T$ \cite{ST1983,HT1984}, which is a partition of
the edges of $T$ into \emph{heavy paths}.
We define this partition recursively: first,
find the heavy path starting at the root by repeatedly descending to the child of the current node with the largest subtree.
This creates the topmost heavy path starting at the root (called its head) and terminating at a leaf (called its tail). Second,
remove the topmost heavy path from $T$ and repeat the reasoning on each of the obtained smaller trees. The crucial property
is that, for any node $u$, the path from $u$ to the root in $T$ intersects at most $\log n$ heavy paths.

\paragraph*{Algorithmic preliminaries}
We collect here a few theorems from previous work that we will need.
The first is Karger's result \cite{Karger} about finding $\Oh(\log n)$ many spanning trees of a graph $G$ such that every minimum cut of 
$G$ will 2-respect at least one of these trees.  We will need the easy extension of this result to near-minimum cuts, which has been explicitly stated by Apers and Lee.
\begin{theorem}[{\cite[Theorem 4.1]{Karger}, \cite[Theorem 24]{ApersL20}}] 
\label{thm:karger1}
Let $G$ be a weighted graph with $n$ vertices and $m$ edges.  There is a randomized algorithm that in time 
$\Oh(m \log^2n + n\log^4n)$ time constructs a set of $\Oh(\log n)$ spanning trees such that every $(1+1/16)$-near minimum cut of $G$ 2-respects $1/4$ of them with high probability.
\end{theorem}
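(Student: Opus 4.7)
The plan is to follow Karger's tree-packing approach with an averaging/concentration argument adapted to near-minimum cuts. The main ingredient is an \emph{approximate fractional tree packing}: a collection of spanning trees $T_1,\dots,T_N$ with nonnegative weights $\mu_1,\dots,\mu_N$ such that for every edge $e$ one has $\sum_{i:\,e\in T_i}\mu_i \le w(e)$, and the total packing weight $W=\sum_i \mu_i$ satisfies $W\ge (1-\delta)\lambda(G)/2$ for some small constant $\delta$ of our choosing. The existence of a packing with $W\ge \lambda/2$ (integrally, up to floors) is Nash--Williams, and a fractional packing with $W\ge (1-\delta)\lambda/2$ can be computed by a Plotkin--Shmoys--Tardos style multiplicative-weights routine running in $\Oh(m\log^2 n + n\log^4 n)$ time on a weighted graph with $n$ vertices and $m$ edges. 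I would cite this as a black box from Karger.

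Given such a packing, the algorithm samples $K = \Theta(\log n)$ trees independently from the distribution $\Pr[T_i] = \mu_i / W$. I need to show that every $(1+1/16)$-near minimum cut $C$ is 2-respected by at least $K/4$ of the sampled trees, w.h.p. Fix such a $C$ with $w(\Delta(C))\le (1+1/16)\lambda$. For any spanning tree $T$ in the packing, $T$ crosses $C$ in at least one edge, so $|E(T)\cap \Delta(C)|\ge 1$. Summing across the packing and using the packing constraint edge-by-edge,
\[
\sum_i \mu_i\,|E(T_i)\cap\Delta(C)| \;\le\; \sum_{e\in\Delta(C)} w(e) \;=\; w(\Delta(C)) \;\le\; (1+1/16)\lambda.
\]
Dividing by $W$, the expected number of edges a random sampled tree contributes to $\Delta(C)$ is at most $(1+1/16)\lambda/W \le 2(1+1/16)/(1-\delta)$. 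Choosing $\delta$ small enough (say $\delta=1/32$) makes this expectation at most $17/8 + o(1)$. Since the random variable $X=|E(T)\cap \Delta(C)|$ is always a positive integer, Markov applied to $X-1\ge 0$ gives $\Pr[X\ge 3]\le (E[X]-1)/2$, which is bounded away from $1$ by a constant; concretely $\Pr[X\le 2]\ge p_0$ for some absolute constant $p_0 > 3/8$.

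With each sampled tree 2-respecting $C$ independently with probability at least $p_0$, a Chernoff bound gives that the fraction of the $K$ sampled trees that 2-respect $C$ is at least $1/4$ except with probability at most $\exp(-\Omega(K)) = n^{-\Omega(1)}$ once $K=\Theta(\log n)$ with sufficiently large constant. Finally I union bound over all $(1+1/16)$-near minimum cuts. The point here is Karger's cut-counting bound: the number of $(1+\alpha)$-near minimum cuts is $\Oh(n^{2(1+\alpha)})$, which for $\alpha=1/16$ is $\Oh(n^{17/8})$, a fixed polynomial. Taking the constant in $K$ large enough defeats this union bound, proving that w.h.p.\ every near-minimum cut is 2-respected by at least $K/4$ of the trees.

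The main obstacle is ensuring the tree packing runs in the claimed time $\Oh(m\log^2 n+n\log^4 n)$ on weighted graphs; the cut-counting and Chernoff steps are routine. The rest of the runtime (sampling $\Oh(\log n)$ trees of size $n-1$ each) is absorbed in $\Oh(n\log n)$, so the bottleneck is indeed the packing subroutine, matching the stated bound.
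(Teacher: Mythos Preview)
The paper does not give its own proof of this statement; it is quoted as a black-box result from Karger and its extension to near-minimum cuts in Apers--Lee. Your sketch is essentially the argument those references contain: pack trees to weight close to $\lambda/2$, bound $\mathbb{E}[|E(T)\cap\Delta(C)|]$ via the packing constraint, use Markov on $X-1$ to get a constant lower bound on $\Pr[X\le 2]$, then Chernoff plus the $\Oh(n^{2(1+\eps)})$ cut-counting bound for the union bound. The arithmetic with $\eps=1/16$ and $\delta=1/32$ checks out and gives $p_0>1/4$ with room to spare.

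One remark on the runtime, since you flag it as the main obstacle: Karger does not run a Plotkin--Shmoys--Tardos packing directly on the input weighted graph. He first builds a sparse skeleton (an $\eps$-cut sparsifier with $\Oh(n\log n)$ edges and scaled weights so that the min cut is $\Theta(\log n)$), and only then packs $\Oh(\log n)$ trees on the skeleton. The $\Oh(m\log^2 n)$ term is the sparsification cost; the $\Oh(n\log^4 n)$ term is the packing on the sparse graph. This matters for correctness too: the averaging argument is carried out with respect to the skeleton's weights and its $\lambda$, and one needs that a $(1+1/16)$-near mincut of $G$ is still a $(1+\eps')$-near mincut of the skeleton for a slightly larger constant $\eps'$, which is exactly what the sparsifier guarantees. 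Your write-up elides this step, but since you already defer the runtime to a citation, it does not create a real gap.
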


We will also need the fact that for a weighted graph $G=(V,E,w)$ the values in $G$ of all 1-respecting cuts of a tree $T$ can be computed quickly.  For a rooted spanning tree $T$ of $G$ and an edge $e \in E(T)$, let 
$T_e$ be the set of vertices in the component not containing the root when $e$ is removed from $T$.  We use the shorthand $\cost(e) = w(\Delta_G(T_e))$.
\begin{lemma}[{\cite[Lemma 5.1]{Karger}}]
\label{lem:one_respect}
Let $G$ be a weighted graph with $n$ vertices and $m$ edges, and $T$ a spanning tree of $G$.
There is a deterministic algorithm that computes $\cost(e)$ for every $e \in E(T)$ in time $\Oh(m+n)$.
\end{lemma}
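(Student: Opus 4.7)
\medskip

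\noindent\textbf{Proof proposal for \cref{lem:one_respect}.} The plan is to reduce the computation of $\cost(e)$ for every tree edge to a standard ``path updates, point queries on subtree sums'' problem on $T$, and to solve it by a single DFS after an offline LCA preprocessing.

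First I would root $T$ at an arbitrary vertex and, for each non-root vertex $v$, write $e_v$ for the unique tree edge joining $v$ to its parent and $T_v$ for the subtree of $T$ rooted at $v$. Removing $e_v$ splits $T$ into $T_v$ and its complement, so
\[
\cost(e_v) = w(e_v) + \sum_{\{a,b\} \in E \setminus E(T)} w(a,b)\cdot \mathbf{1}\bigl[\,|\{a,b\} \cap V(T_v)| = 1\,\bigr].
\]
For a non-tree edge $\{a,b\}$, exactly one of $a,b$ lies in $V(T_v)$ iff $e_v$ lies on the (unique) tree path between $a$ and $b$, i.e.\ iff $v$ is on that path and $v$ is not the $\lca(a,b)$. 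So the task becomes: for each non-tree edge, add its weight to every tree edge on its endpoints' path; then read off $\cost(e_v)$ for all $v$.

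To do this in linear time I would use the classical endpoint/LCA trick. For each non-tree edge $\{a,b\}$ set $c = \lca(a,b)$ and increment an auxiliary array by $w(a,b)$ at $a$, by $w(a,b)$ at $b$, and by $-2w(a,b)$ at $c$. Then, during a single post-order DFS of $T$, compute for every $v$ the sum $S(v)$ of these contributions over all descendants of $v$ (including $v$ itself). A straightforward case analysis on whether $v$ is an ancestor of $a$ only, of $b$ only, of both, or of neither (using that $c = \lca(a,b)$) shows that each non-tree edge $\{a,b\}$ contributes exactly $w(a,b)$ to $S(v)$ precisely when $v$ lies strictly below $c$ on the $a$-to-$b$ tree path, which is exactly the condition that $e_v$ lies on the path. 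Therefore $\cost(e_v) = w(e_v) + S(v)$, which can be output during the same DFS.

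The only nontrivial ingredient is computing $\lca(a,b)$ for all $m - (n-1)$ non-tree edges in total time $\Oh(m+n)$; the rest is $\Oh(n)$ setup, $\Oh(m)$ updates to the counter array, and a single $\Oh(n)$ DFS for the subtree sums. This is the main obstacle, and I would handle it by invoking an offline batched-LCA procedure on $T$ (for instance Tarjan's union-find based algorithm, which runs in $\Oh((m+n)\alpha(m+n))$ and is easily upgraded to $\Oh(m+n)$ via the $\Oh(n)$-preprocessing/$\Oh(1)$-query LCA data structure of Harel and Tarjan applied to $T$). Combining the LCA batch with the counter-and-DFS step yields the claimed $\Oh(m+n)$ deterministic algorithm.
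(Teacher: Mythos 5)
Your proof is correct and is essentially the standard argument underlying the cited Karger Lemma 5.1 (the paper itself does not reprove it): reduce $\cost(e_v)$ to a subtree sum by charging each non-tree edge $+w$ at its two endpoints and $-2w$ at their LCA, then accumulate with one post-order DFS after linear-time LCA preprocessing. The only cosmetic difference from the usual presentation is bookkeeping — placing weights at endpoints versus aggregating weighted degrees and subtracting twice the weight of edges assigned to each LCA — which is the same computation.
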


We will also make use of the improvement by Gawrychowski, Mozes and Weimann of Karger's mincut algorithm.
\begin{lemma}[{\cite[Theorem 7]{GawrychowskiMW20}}]
\label{lem:mlogn}
Let $G$ be a weighted graph with $n$ vertices and $m$ edges and $T$ a spanning tree of $G$.
A cut of minimum weight in $G$ that 2-respects $T$ can be found deterministically in $\Oh(m \log n)$ time.  Using this, there is a randomized algorithm that finds a minimum cut in $G$ with high probability in time $\Oh(m \log^2 n)$.
\end{lemma}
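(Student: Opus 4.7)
The second assertion follows immediately from the first together with Theorem \ref{thm:karger1}: spend $\Oh(m\log^2 n + n\log^4 n)$ time to produce $\Oh(\log n)$ spanning trees of $G$ such that a minimum cut 2-respects at least one of them with high probability, then run the deterministic 2-respecting routine on each tree and return the best cut found. So the plan focuses on the deterministic $\Oh(m\log n)$ bound for finding a minimum-weight 2-respecting cut of a fixed tree $T$.

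First I would apply Lemma \ref{lem:one_respect} to tabulate $\cost(e)$ for every $e \in E(T)$ in $\Oh(m+n)$ time, which in particular resolves all 1-respecting candidates. Root $T$ arbitrarily and classify a pair of distinct tree edges $\{e,f\}$ as either incomparable, or comparable with $f$ a strict ancestor of $e$. Writing $T_f^{c} = V \setminus T_f$, a direct calculation on the three components induced by removing $e$ and $f$ gives
\begin{align*}
\cost(e,f) &= \cost(e) + \cost(f) - 2\, w(T_e, T_f) && \text{(incomparable),}\\
\cost(e,f) &= \cost(e) + \cost(f) - 2\, w(T_e, T_f^{c}) && \text{(comparable).}
\end{align*}
The task reduces to computing, for every $e \in E(T)$, the minimum over partners $f$ of $\cost(f)$ minus twice an ``interaction'' term. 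Each non-tree edge $(u,v)$ of weight $w_{uv}$ contributes $w_{uv}$ to the interaction of exactly those pairs $\{e,f\}$ for which one of $e,f$ lies on the tree path from $u$ to $\lca(u,v)$ and the other lies on the path from $v$ to $\lca(u,v)$ (with the orientation dictated by the case).

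Next, apply a heavy path decomposition. Because the $u$--$v$ tree path of any non-tree edge crosses $\Oh(\log n)$ heavy paths, its interaction contribution decomposes into $\Oh(\log n)$ ``rectangles'' in the product (heavy path containing $e$) $\times$ (heavy path containing $f$). For each ordered pair of heavy paths $(P,Q)$, I would build a segment tree indexed by $Q$ supporting range additive updates and global minimum queries storing $\cost(f)$ offset by the accumulated interaction, then sweep $e$ along $P$: when the first coordinate of a rectangle becomes (in)active, add or subtract its contribution on the corresponding subinterval of $Q$, and query the tree to obtain the best partner $f \in Q$ for $e$. Across all such heavy-path pairs the total number of range updates is $\Oh(m\log n)$, because each non-tree edge contributes at each level of the heavy-path hierarchy to $\Oh(1)$ rectangles, and every update and query costs $\Oh(\log n)$.

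The main obstacle is the within-path case, where both $e$ and $f$ lie on the same heavy path (this also captures the most delicate ancestor-descendant geometry because a heavy path is itself a root-to-leaf chain). I would handle it by divide-and-conquer on each heavy path $P$: split $P$ at its midpoint, recurse into the two halves, and treat pairs with endpoints in distinct halves by the same cross-path segment-tree machinery as above. Each non-tree edge is touched at $\Oh(\log n)$ levels of this recursion, so the within-path work fits within the same $\Oh(m\log n)$ budget. Combining the best partner found for every $e\in E(T)$ with the best 1-respecting cut, the overall minimum is a minimum-weight 2-respecting cut of $T$, establishing the deterministic bound and hence, as noted above, the randomized mincut bound of $\Oh(m\log^2 n)$.
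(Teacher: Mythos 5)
This lemma is imported verbatim from Gawrychowski--Mozes--Weimann (Theorem 7 of \cite{GawrychowskiMW20}); the paper does not reprove it, so there is no in-paper proof to compare against. Your reduction of the second assertion to the first via \cref{thm:karger1} is correct, your $\cost(e,f)$ identities are correct, and the heavy-path/rectangle framework is the right vocabulary. But the sketch has a genuine quantitative gap. You assert that ``each non-tree edge contributes at each level of the heavy-path hierarchy to $\Oh(1)$ rectangles,'' concluding $\Oh(m\log n)$ rectangles. That is wrong: for a non-tree edge $\{u,v\}$ with $x=\lca(u,v)$, the $u$-to-$x$ path crosses up to $\Theta(\log n)$ heavy paths and, \emph{independently}, so does the $v$-to-$x$ path, so the edge contributes a rectangle to $\Theta(\log^2 n)$ heavy-path pairs. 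This $\Oh(m\log^2 n)$ bound is exactly what this paper's \cref{lem:lists} proves. Feeding $\Oh(m\log^2 n)$ range updates into a segment tree at $\Oh(\log n)$ each gives $\Oh(m\log^3 n)$, which is worse than even Karger's original $\Oh(m\log^2 n)$ and far from the claimed $\Oh(m\log n)$. You also cannot afford to literally ``build a segment tree for each ordered pair of heavy paths $(P,Q)$'' since there are $\Theta(n^2)$ such pairs; one has to separate out the pairs with empty interaction lists and batch them, as done in \cref{lem:empty} here and in \cite{MN20,GawrychowskiMW20}.

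The missing idea is precisely what makes GMW Theorem 7 nontrivial: for the comparable (ancestor--descendant) case they use an Euler-tour/link-cut-tree scheme (essentially \cref{lem:ET} of this paper) that costs $\Oh(m\log n)$ outright, and for the incomparable case they avoid the $\log^2 n$ rectangle blow-up by exploiting additional structure (only $\Oh(m)$ ``interesting'' heavy-path pairs need any per-pair work, and the per-pair processing is charged more carefully to the contributing edges rather than paying $\Oh(\log n)$ per rectangle per segment-tree touch). Your divide-and-conquer for the within-path case also does not in itself handle all ancestor--descendant pairs (which can span many heavy paths), and would introduce yet another $\log$ factor if implemented as written. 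As a blind reconstruction your outline correctly identifies the two cost formulas and the role of heavy paths, but it does not reach the stated $\Oh(m\log n)$ bound; citing \cite{GawrychowskiMW20} as the paper does is the honest move here.
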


Finally we give the formal statement of the result from \cite{ApersL20} that underlies our algorithm to construct a KT partition.
\begin{lemma}[{\cite[Lemma 29]{ApersL20}}]
\label{lem:ALstree}
Let $T = (V,E)$ be a tree and $\cM \subseteq 2^V$ a family of subsets of $V$ such that $|\Delta_T(S)| = 2$ for each $S \in \cM$.  
Let $Q = \{ \Delta_T(S) : S \in \cM\}$ be a set of pairs of edges in $E$.  Suppose $F$ is spanning forest 
for the graph $L = (E,Q)$.  Then the set of shores of the 2-respecting cuts defined by edges in $E(F)$ is a generating set for $\bigwedge \cM$.
\end{lemma}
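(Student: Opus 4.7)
The plan is to show that the subset $\cM' := \{S \in \cM : \Delta_T(S) \in E(F)\}$ of cuts indexed by edges of the spanning forest $F$ satisfies $\bigwedge \cM' = \bigwedge \cM$. Because $\cM' \subseteq \cM$, we have the inequality $\bigwedge \cM \preceq \bigwedge \cM'$ for free (a meet over more partitions is finer). The real content is the reverse inequality, which reduces to showing that for each $S \in \cM$ the bipartition $\{S, \bar S\}$ is refined by $\bigwedge \cM'$.

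A preliminary structural observation is crucial: removing any two distinct edges from the spanning tree $T$ produces three connected components which, after contracting each component to a point, form a path on three vertices. Exactly one component is the ``middle'' one, and separating it from the other two is the unique bipartition whose tree boundary is the chosen pair of edges. Consequently the map $\{S, \bar S\} \mapsto \Delta_T(S)$ is a bijection between 2-respecting bipartitions of $T$ and unordered pairs of distinct tree edges, so edges of $L$ are in bijection with bipartitions arising from $\cM$.

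Fix $S \in \cM$ with $\Delta_T(S) \notin E(F)$ (the case $\Delta_T(S) \in E(F)$ is immediate). Since $F$ is a spanning forest of $L$, the two endpoints of the edge $\Delta_T(S)$ are connected by a unique path inside $F$, and together they close a simple cycle in $L$. List its vertices in cyclic order as distinct tree edges $f_0, f_1, \ldots, f_k$, with $\Delta_T(S) = \{f_0, f_k\}$ and each consecutive pair $\{f_{i-1}, f_i\}$ equal to $\Delta_T(S_i)$ for some $S_i \in \cM'$. Working in the $\mathbb{F}_2$ cut space, the identity $\Delta(A) \triangle \Delta(B) = \Delta(A \triangle B)$ restricted to $E(T)$ telescopes the tree boundary of $S_1 \triangle \cdots \triangle S_k$ to $\{f_0,f_1\} \triangle \{f_1,f_2\} \triangle \cdots \triangle \{f_{k-1},f_k\} = \{f_0,f_k\} = \Delta_T(S)$. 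By the bijection established above, the bipartition of $S_1 \triangle \cdots \triangle S_k$ coincides with $\{S, \bar S\}$.

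To conclude, I would observe that the meet $\bigwedge_{i=1}^k \{S_i, \bar{S_i}\}$ groups vertices by the signature vector whose $i$-th coordinate records whether $v \in S_i$, and the bipartition $\{S, \bar S\}$ is determined by the XOR of that signature vector, so it is refined by the meet. Therefore $\{S, \bar S\}$ is refined by $\bigwedge \cM'$, establishing the reverse inequality and finishing the proof. The main obstacle I anticipate is cleanly proving the uniqueness claim underlying the bijection $\{S, \bar S\} \mapsto \Delta_T(S)$; once that is settled the rest is a routine telescoping computation in the cut space plus a bookkeeping observation about signatures.
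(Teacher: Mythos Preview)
The paper does not prove this lemma; it is quoted verbatim from \cite[Lemma~29]{ApersL20} in the preliminaries section, so there is no proof here to compare against. That said, your argument is correct and is essentially the natural one: the bijection between strictly $2$-respecting bipartitions and unordered pairs of tree edges (which follows from the three-component path structure you describe), together with the $\mathbb{F}_2$ telescoping along the fundamental cycle of $\Delta_T(S)$ in $F$, cleanly shows that $\{S,\bar S\}$ is the symmetric difference of bipartitions already in $\cM'$ and hence refined by their meet. The uniqueness you flag as the main obstacle is immediate once you observe that among the three components $A$--$B$--$C$ only the middle component $B$ (equivalently its complement) has tree boundary equal to both removed edges.
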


\section{Data structures} \label{sec:data-structures}
In this section we develop the data structure we will need for a fast implementation of our spanning forest algorithm.  We 
want to maintain a tree $T$ with root $r$, in which each edge has a score and a color, so that we can support the following queries and updates.  For an edge $e$ of the tree, let $T_e$ be the set of edges 
in the component not containing $r$ when $e$ is removed from the tree.  On query an edge $e$ we want to find the edge $f \in T_e$ with the 
smallest score, and the edge $f' \in T_e$ with the smallest score among edges whose color is different from that of $f$.  We call such a query 
a categorical top two query.  We want to answer these queries while allowing adding $\Delta$ to the score of every edge on the path between two nodes.
We could use link-cut trees \cite{ST1983} to accomplish this with $\Oh(\log n)$ update and query time using the fact that 
link-cut trees can be modified to support any semigroup operation under path updates.
However, in our case the tree is static, and this allows for a simple and self-contained solution
that requires only a well-known binary tree data structure coupled with the standard
heavy path decomposition of a tree.
This comes at the expense of implementing updates in $\Oh(\log^{2}n)$ time instead of $\Oh(\log n)$ time.
The construction can be seen as folklore and has been explicitly stated by Bhardwaj, Lovett and Sandlund~\cite{BLS20} for the case when each edge
maintains its score and there are no colors.
We provide a detailed description of such an approach in \cref{app:data-structure}.
We note that the increased update time does not to affect the overall time complexity of our algorithm.

\begin{restatable}{lemma}{toptwopath}
\label{lem:top2path}
Let $A[1],\ldots, A[n]$ be an array where each element has two fields, a color $A[i].\edgecolor$ and a score $A[i].\score$.  In $\Oh(n)$ time we can create a 
data structure using $\Oh(n)$ space and supporting the following operations in $\Oh(\log n)$ time per operation.
\begin{enumerate}
\item $\textsc{Add}(\Delta,i,j)$: for all $i \le k \le j$ do $A[k].\score \gets A[k].\score + \Delta$,
\item $\textsc{CatTopTwo}(i,j)$: return $(k_1, k_2)$ where $k_1 = \argmin \{A[k].\score : i \le k \le j\}$ and $k_2 = \NULL$ if $A[k].\edgecolor = A[k_1].\edgecolor$ for 
all $i \le k \le j$ and $k_2 = \argmin \{A[k].\score : i \le k \le j, A[k].\edgecolor \ne A[k_1].\edgecolor\}$ otherwise.  
\end{enumerate}
\end{restatable}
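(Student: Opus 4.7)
The plan is to implement the data structure as a standard segment tree over the indices $1,\ldots,n$, with lazy propagation for the range additions. Each node $v$ covering a range $[\ell_v,r_v]$ stores a summary of $A[\ell_v],\ldots,A[r_v]$ consisting of: the index $i_1(v)$ of a minimum-score element in the range, together with its color $c_1(v)$ and effective score $s_1(v)$; and the index $i_2(v)$ of a minimum-score element among those whose color differs from $c_1(v)$, together with its color $c_2(v)$ and score $s_2(v)$ (with $i_2(v)=\NULL$ if the range is monochromatic). In addition, each node stores a pending additive value $\mathrm{lazy}(v)$ that has already been folded into $s_1(v)$ and $s_2(v)$ but not yet propagated to its children.

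The crux is the $O(1)$-time \emph{merge} of two sibling summaries into that of the parent. Given summaries from the left child $L$ and the right child $R$, the merged top-1 is whichever of $L$'s and $R$'s top-1 tuple has the smaller score, ties broken arbitrarily; call the chosen side $W$ and the other side $O$, and set $c_1 = c_1(W)$. The merged top-2 is then the smaller-score of two candidates: (a) $(s_2(W),c_2(W),i_2(W))$, which is the best element in $W$'s subrange whose color differs from $c_1$; and (b) from $O$, its top-1 tuple if $c_1(O)\ne c_1$, and its top-2 tuple otherwise --- in either case, the best element in $O$'s subrange whose color differs from $c_1$. A short case analysis confirms that (a) and (b) together cover the minimizers over both subranges with color $\ne c_1$, so their minimum is the correct merged top-2.

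The operations \textsc{Add}$(\Delta,i,j)$ and \textsc{CatTopTwo}$(i,j)$ then follow the standard recursive segment-tree schema. A call descending into a node $v$ whose range lies entirely inside $[i,j]$ either adds $\Delta$ to $s_1(v)$, $s_2(v)$, and $\mathrm{lazy}(v)$ (for \textsc{Add}) or returns the stored summary (for \textsc{CatTopTwo}); a call whose range is disjoint from $[i,j]$ returns trivially. Otherwise the call first pushes $\mathrm{lazy}(v)$ down to the children (by adding it to their $s_1$, $s_2$, and $\mathrm{lazy}$ fields, and then zeroing $\mathrm{lazy}(v)$), recurses into both children, and finally reconstructs $v$'s summary, or combines the two returned partial summaries, using the merge above. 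Building the tree via one bottom-up pass takes $O(n)$ time and space, and the usual argument bounds each operation by the $O(\log n)$ nodes visited with $O(1)$ work per node.

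The main obstacle is getting the invariant for the top-2 field right. One might be tempted to store a simpler-sounding quantity such as ``the two smallest-score elements of distinct colors,'' but this is not composable: after a merge, one loses track of whether the second color still survives in the other subrange. Storing instead ``best element whose color differs from that of the top-1 winner'' is tailored precisely to the two-candidate merge rule above, and once this invariant is fixed, lazy propagation causes no further difficulty because a uniform additive shift preserves both the identity and the relative order of the top-1 and top-2 minimizers.
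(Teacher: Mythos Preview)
Your proposal is correct and takes essentially the same approach as the paper: a balanced binary tree over the array where each node stores the categorical top-two summary of its interval, with additive tags at nodes to support range additions. The only cosmetic difference is that you use lazy propagation (pushing pending increments down on access), whereas the paper keeps an $\mathrm{update}$ value at each node with the invariant that the true score at a leaf is the sum of $\mathrm{update}$ values on the root-to-leaf path, and recomputes summaries at the $\Oh(\log n)$ ancestors after an \textsc{Add}; both variants yield the same bounds and rely on the same $\Oh(1)$ merge of two child summaries that you spell out explicitly.
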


\begin{restatable}{lemma}{toptwotree}
\label{lem:top2tree}
Let $T$ be a tree on $n$ nodes, with each edge $e\in T$ having its color and score. In $\Oh(n)$ time we can create a data structure
using $\Oh(n)$ space and supporting the following operations.
\begin{enumerate}
\item $\textsc{AddPath}(\Delta,p)$: add $\Delta$ to the score of every edge on a path $p$ in $T$ in $\Oh(\log^{2} n)$ time.
\item $\textsc{CatTopTwo}(e)$: categorical top-two query in $T_{e}$ in $\Oh(\log n)$ time.
\end{enumerate}
\end{restatable}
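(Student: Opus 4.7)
The plan is to combine heavy-path decomposition with the building block of \cref{lem:top2path}, augmented on each heavy path by an auxiliary structure that aggregates the contribution of off-path light subtrees. Root $T$ arbitrarily and compute its heavy-path decomposition. For each heavy path $P$ with nodes $v_1,\dots,v_k$ from head to tail, build an instance $S_P^{\mathrm{edge}}$ of the data structure from \cref{lem:top2path} over the array whose $l$-th entry stores the $(\score,\edgecolor)$ of the edge $e_{v_l}$ from $v_{l-1}$ to $v_l$ (for $l=1$ this is the light edge above the head of $P$). For each node $v$, maintain a segment tree $\cD_v$ indexed by the light children of $v$, whose $u$-entry stores the categorical top-two summary $\hat D_u$ of all edges in the subtree rooted at $u$; internal nodes of $\cD_v$ aggregate via a categorical-combine operation, which a short case analysis (on whether the overall best and the best-with-different-color come from the left or right half) shows is associative. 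Finally, for each heavy path $P$, build a segment tree $S_P^A$ whose $l$-th entry is $A_{v_l}$, the root value of $\cD_{v_l}$, aggregated by the same categorical-combine. A post-order DFS initializes all three families of structures in total $\Oh(n)$ time and space.

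For $\textsc{CatTopTwo}(e_{v_j})$ with $v_j$ at position $j$ on its heavy path $P$, split $T_{e_{v_j}}$ into (i) the heavy edges $e_{v_{j+1}},\dots,e_{v_k}$ of $P$ strictly below $v_j$, and (ii) the edges in light subtrees hanging off $v_j,\dots,v_k$. Part (i) is a range categorical top-two query on $S_P^{\mathrm{edge}}$ over $[j+1,k]$, answered in $\Oh(\log n)$ by \cref{lem:top2path}. Part (ii) is a range top-two query on $S_P^A$ over $[j,k]$, since the edges in the light subtree off $v_l$ are exactly those summarized by $A_{v_l}$. Combining the two categorical top-twos yields the answer in $\Oh(\log n)$ time.

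For $\textsc{AddPath}(\Delta,p)$, decompose $p$ into $\Oh(\log n)$ heavy-path subpaths and apply a range-add to each corresponding $S_{P_i}^{\mathrm{edge}}$, costing $\Oh(\log n)$ each. After these score changes, the summary $\hat D_{\mathrm{head}(P_i)}$ may have changed, which alters $A_{\mathrm{parent}(\mathrm{head}(P_i))}$ and hence the corresponding entry of its parent heavy path's $S^A$ tree. Process these changes bottom-up along the two arms of $p$ meeting at $\lca$, continuing upward from $\lca$'s heavy path toward the root: for each affected heavy path $P$, read the new $\hat D_{\mathrm{head}(P)}$ via full-range queries on $S_P^{\mathrm{edge}}$ and $S_P^A$, update the corresponding leaf of $\cD_{\mathrm{parent}(\mathrm{head}(P))}$, and update the entry of $\mathrm{parent}(\mathrm{head}(P))$ in $S^A$ of its parent heavy path. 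The set of affected heavy paths is contained in the union of two chains of length $\Oh(\log n)$ and one chain of ancestors above $\lca$'s heavy path, so there are $\Oh(\log n)$ such paths; at $\Oh(\log n)$ work each, the total update cost is $\Oh(\log^2 n)$.

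The main technical point to verify is the associativity of the categorical-combine, without which neither $S_P^A$ nor $\cD_v$ could aggregate ranges correctly. The remaining work is careful bookkeeping of index ranges and the bottom-up order of the cascade; the overall space is $\Oh(n)$ since every edge contributes to $\Oh(1)$ leaves across the three families of segment trees.
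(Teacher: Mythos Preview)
Your construction is correct and achieves the stated bounds, but it takes a noticeably more elaborate route than the paper. The paper's proof rests on a single observation you did not use: if one concatenates the heavy paths into a single array $A[1..(n-1)]$ in the right recursive order (write the topmost heavy path, then recurse on the subtrees hanging off its nodes from tail to head), then \emph{every} subtree $T_e$ corresponds to one contiguous interval of $A$. With this layout, both operations reduce directly to \cref{lem:top2path} on a single array: $\textsc{CatTopTwo}(e)$ is one range query, and $\textsc{AddPath}$ is $\Oh(\log n)$ range-adds, one per heavy-path segment of $p$. No auxiliary summaries, no cascade.

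Your approach instead keeps a separate $S_P^{\mathrm{edge}}$ per heavy path and layers on two families of aggregation structures ($\cD_v$ and $S_P^A$) that store categorical top-two summaries of entire light subtrees, which must then be maintained under $\textsc{AddPath}$ by a bottom-up cascade through $\Oh(\log n)$ heavy paths. This is sound---the cascade touches $\Oh(1)$ entries of $S_P^A$ and $\cD_v$ per affected heavy path, and the categorical-combine is indeed associative---so the $\Oh(\log^2 n)$ update bound holds. What it buys you is nothing beyond the paper's bounds; what it costs you is a fair amount of bookkeeping (the order of the cascade, consistency between $\hat D_u$ and the light edge into $u$, and point updates to two different segment trees at each level) that the paper's single-array layout avoids entirely. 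The paper's key trick is worth internalizing: arranging a tree's edges so that every subtree is an interval is a general and reusable device.
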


\section{Spanning tree of near-minimum 2-respecting cuts in near-linear time} \label{sec:sp-tree-algorithm}
Let $G = (V,E,w)$ be a weighted undirected graph.  We will assume throughout that $G$ is connected, and 
in particular that $m \ge n-1$, as the KT partition of a disconnected graph can be easily determined from its connected components.
Let $T$ be a spanning tree of $G$.  We will choose an $r \in V$ with degree 1 in $T$ to be the root of $T$. We view $T$ as a directed graph with all 
edges directed away from $r$.  With some abuse of notation, we will also use $T$ to refer to this directed version.  
If we remove any edge $e \in E(T)$ from $T$ then $T$ becomes disconnected into two components.  We use $e^\downarrow \subseteq V$ to denote the 
set of vertices in the component not containing the root, and $T_e \subseteq E(T)$ to denote the set of edges in the subtree rooted at the head of $e$, 
i.e.\ the edges in the subgraph of $T$ induced by $e^\downarrow$.  We further use the shorthand $\cost(e) = w(\Delta(e^\downarrow))$ for the weight of the cut with shore $e^\downarrow$.  

Two edges $e,f \in E(T)$ define a unique cut in $G$ which we denote by $\cut_T(e,f)$ (or $\cut(e,f)$ if it is clear from the context which $T$ we are referring to).  The cut depends on the relationship between $e$ and $f$.  If $e \in T_{f}$ or $f \in T_e$
then we say that $e$ and $f$ are \emph{descendant} edges.  Without loss of generality, say that $f \in T_e$.  Then the cut 
defined by $e$ and $f$ is $\cut(e,f) = \Delta(e^\downarrow \setminus f^\downarrow)$. 
If $e$ and $f$ are not descendant edges, then we say they are \emph{independent}.  For independent edges we see that $\cut(e,f) = \Delta(e^\downarrow \cup f^\downarrow)$.  In both cases we use $\cost(e,f)$ to denote the weight of the corresponding cut.

In a KT partition we are only interested in non-trivial cuts.  We first prove the following simple claim that characterizes when $\cut(e,f)$ is trivial.  
\begin{proposition}
\label{clm:trivial}
Let $G = (V,E,w)$ be a connected graph with $n$ vertices and let $T$ be a spanning tree of $G$ with root $r$.  For $e,f \in E(T)$ if $\cut(e,f)$ is trivial then 
\begin{enumerate}
	\item If $e,f$ are independent then they must be the unique edges incident to the root.
	\item If $e,f$ are descendant then there is a vertex $v \in V$ such that $e$ is the edge incoming to $v$ and $f$ is the unique edge outgoing from $v$, or vice versa.
\end{enumerate}
\end{proposition}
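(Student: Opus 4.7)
My plan is to prove the proposition by a direct case analysis that follows the dichotomy of the statement: whether $e,f$ are independent or descendant. The basic idea is that "trivial" means one side of the cut has size exactly $1$, and the structure of $T$ (together with the convention that $r$ is a root of degree $1$) tightly constrains which pairs $(e,f)$ can produce such a cut. In both cases I will use the explicit formula for $\cut(e,f)$ given just before the claim.

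For the independent case, I set $S = e^\downarrow \cup f^\downarrow$, so the cut is $\Delta(S)$. Since $e^\downarrow$ and $f^\downarrow$ are nonempty and, by independence, disjoint, we have $|S|\ge 2$, so triviality forces $|V\setminus S|=1$. Because $r\notin e^\downarrow$ and $r\notin f^\downarrow$, this complement must be $\{r\}$, and hence $e^\downarrow$ and $f^\downarrow$ together partition $V\setminus\{r\}$. Now $V\setminus\{r\}$ is itself the disjoint union of the children-subtrees of $r$, and each of $e^\downarrow, f^\downarrow$ is a subtree hanging below some vertex. I will argue that the only way to partition $V\setminus\{r\}$ into two such subtrees is for both $e$ and $f$ to be incident to $r$ and for $r$ to have exactly two children (otherwise one of the $e^\downarrow$, $f^\downarrow$ would have to contain two distinct child-subtrees of $r$ without containing $r$, which no proper subtree can do). Since we chose $r$ to have degree $1$, this situation cannot occur, matching the stated conclusion vacuously.

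For the descendant case, I may assume WLOG that $f\in T_e$, so $\cut(e,f)=\Delta(e^\downarrow\setminus f^\downarrow)$. I first discard the ``large shore'' subcase: the complement $V\setminus(e^\downarrow\setminus f^\downarrow)=(V\setminus e^\downarrow)\cup f^\downarrow$ contains both $r$ and the nonempty set $f^\downarrow$, so its size is at least $2$. Triviality therefore forces $|e^\downarrow\setminus f^\downarrow|=1$; call that unique vertex $v$. Let $v_e,v_f$ be the heads of $e$ and $f$. Then $v_e\in e^\downarrow$, and if $v_e\ne v$ we would have $v_e\in f^\downarrow$, making $v_e$ a strict descendant of $v_f$; but $f\in T_e$ forces $v_f$ to be a descendant of $v_e$, so $v_e=v_f$ and hence $e=f$, a contradiction. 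Thus $v=v_e$, and $f^\downarrow=e^\downarrow\setminus\{v\}$ is precisely the subtree rooted at the unique child of $v$, which in particular means $v$ has a single outgoing edge in $T$ and that edge is $f$. This gives exactly the structure asserted: $e$ enters $v$ and $f$ is the unique edge leaving $v$ (the symmetric subcase $e\in T_f$ is handled by swapping roles).

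The only mild obstacle is remembering to rule out the ``big side has size $1$'' subcase of triviality in each branch; in both branches this is dispatched by noting that $r$ sits on the big side together with at least one more vertex, so no actual computation is needed.
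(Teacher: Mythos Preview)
Your proof is correct and follows essentially the same approach as the paper: in both cases you identify the shore $S$, argue that the complement has size at least $2$ (so triviality forces $|S|=1$ in the descendant case and $|\bar S|=1$ in the independent case), and then read off the required structural conclusion. The only cosmetic difference is that in the descendant case the paper observes directly that all vertices on the path from the head of $e$ to the tail of $f$ lie in $e^\downarrow\setminus f^\downarrow$, whereas you reach the same conclusion via a short ancestor/descendant contradiction; also, your remark that item~1 is vacuous when $r$ has degree~$1$ is extraneous to the proposition as stated (which does not assume this), though harmless.
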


\begin{proof}
First suppose that $e,f$ are independent.  Then a shore of $\cut(e,f)$ is $S = e^\downarrow \cup f^\downarrow$.  We have that $|g^\downarrow| \ge 1$ any $g \in E(T)$, thus $|S| \ge 2$.  
Hence for $\cut(e,f)$ to be trivial we must have $|\bar{S}| = 1$.  The root $r$ is not contained in $g^\downarrow$ for any $g \in E(T)$ thus it must be the case that $\bar{S} = \{r\}$.  
For this to happen, $e,f$ must be incident to $r$, and $r$ cannot have any other outgoing edges besides $e$ and $f$.  

Now consider the case that $e,f$ are descendant and suppose without loss of generality that $f \in T_e$.  Let $S = e^\downarrow \setminus f^\downarrow$.  In this case we have 
$|S| < n-1$ as $e^\downarrow$ does not contain the root and $|f^\downarrow| \ge 1$.  Let us understand when $|S| = 1$.  As all vertices on the path from the head of $e$ to and including the 
tail of $f$ are in $S$ it must be the case that the head of $e$ is the tail of $f$.  Call this vertex $v$ and note $v \in S$.  If $v$ has any other child $u$ besides the head of $f$ then we would have $u \in S$ as well, 
thus $f$ must be the unique outgoing edge from $v$.
\end{proof}
By choosing a root $r$ for $T$ that has degree 1 we avoid the case of item~1 of \cref{clm:trivial}.  Thus we only have to worry about trivial cuts when $e,f$ are descendant.

With that out of the way, we now turn to the main theorem of this section.
As outlined in \cref{sec:technical} this theorem is the key routine in our $(1+\eps)$-KT partition algorithm, which we fully describe in \cref{sec:algorithm}.
\begin{restatable}{theorem}{sptree}
\label{thm:sp-tree}
Let $G = (V,E,w)$ be a connected graph with $n$ vertices and $m$ edges and let $T$ be a spanning tree of $G$.  For a given parameter $\beta$, define the graph 
$H$, with $V(H) = E(T)$ and $E(H)=\{ \{e,f\} \in E(T)^{(2)} : \cost(e,f) \leq \beta \text{ and } \cut(e,f) \text{ non-trivial}\}$.  There is a deterministic algorithm that given adjacency list access to $G$ and $T$ 
outputs a spanning forest of $H$ in $\Oh(m\log^{4}n)$ time.  
\end{restatable}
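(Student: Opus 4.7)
The plan is to run Bor\r{u}vka's spanning forest algorithm on $H$. We maintain a growing partition of $V(H) = E(T)$ into the connected components of the current partial forest, encoded as a coloring $\edgecolor: E(T) \to [n]$ in which two tree edges share a color iff they lie in the same component. Bor\r{u}vka terminates after at most $\lceil \log_2 n \rceil$ rounds, so it suffices to implement a single round in $\Oh(m \log^3 n)$ time. The central subroutine in a round computes, for every $e \in E(T)$, a pair $(f_1,f_2)$ such that $f_1$ minimizes $\cost(e,f)$ over all $f$ with $\cut(e,f)$ non-trivial, and $f_2$ minimizes the same quantity subject additionally to $\edgecolor(f_2) \neq \edgecolor(f_1)$. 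Whichever of these partners has color distinct from $\edgecolor(e)$ and cost at most $\beta$ gives a valid outgoing edge of $H$ from the component of $e$; by standard Bor\r{u}vka we only need one such edge per component. This is exactly a categorical top-two query, which is what Lemma~\ref{lem:top2tree} provides.

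To implement the subroutine I would adapt the heavy-path scheme of Gawrychowski--Mozes--Weimann. Root $T$ at a degree-$1$ vertex $r$ (to kill the independent trivial case via \cref{clm:trivial}) and precompute $\cost(e)$ for every $e$ using Lemma~\ref{lem:one_respect}. Split the $\{e,f\}$ pairs into the independent and descendant cases; in both cases
\[
\cost(e,f) = \cost(e) + \cost(f) - 2\, w(A_{e,f}, B_{e,f})
\]
for sets $A_{e,f}, B_{e,f}$ determined by the pair. Traverse the heavy-path decomposition of $T$, and for each edge $e$ on the currently processed heavy path bring the data structure of Lemma~\ref{lem:top2tree} into the state in which $\score(f) = \cost(e,f)$ for every candidate $f$. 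Following \cite{GawrychowskiMW20}, this is done by initializing $\score(f) = \cost(f)$ at the start of a heavy path and then applying $\Oh(\log n)$ \textsc{AddPath} updates per non-tree edge of $G$ as $e$ moves down the path, to track the changing $-2w(A_{e,f},B_{e,f})$ contribution. A single \textsc{CatTopTwo}$(e)$ call then returns the desired $(f_1,f_2)$ for each edge $e$ of the heavy path, where the colors are the current Bor\r{u}vka colors.

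Non-triviality is enforced via \cref{clm:trivial}. The independent case is automatic since $\deg_T(r)=1$. In the descendant case, each $e$ has at most two potentially trivial partners: the unique child of the head of $e$ when its head has only one child, and the parent of $e$ when $e$ is the unique outgoing edge of its tail. Before reading off the answer from \textsc{CatTopTwo}, I would temporarily raise the scores of these $\Oh(1)$ forbidden partners to $+\infty$, run the query, then restore them; this multiplies the per-query cost only by a constant. The resulting pair yields either a valid cross-color partner of cost $\le \beta$, in which case the edge is added to the Bor\r{u}vka forest, or a certificate that no outgoing edge of $H$ exists at $e$.

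The main obstacle I anticipate is engineering the GMW heavy-path update sequence so that it composes cleanly with the two-field (score and color) data structure of Lemma~\ref{lem:top2tree}, while allowing the coloring of $E(T)$ to be refreshed at the start of each Bor\r{u}vka round; since the underlying heavy-path decomposition and non-tree-edge update schedule are fixed once and for all, only the scores and colors need to be re-initialized per round, at linear cost. Given this, the accounting is routine: one round performs $\Oh(m \log n)$ \textsc{AddPath} operations at $\Oh(\log^2 n)$ each and $\Oh(n)$ \textsc{CatTopTwo} queries at $\Oh(\log n)$ each, for $\Oh(m \log^3 n)$ per round and $\Oh(m \log^4 n)$ over all $\Oh(\log n)$ rounds. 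Correctness of Bor\r{u}vka on the produced edge reports then yields a spanning forest of $H$.
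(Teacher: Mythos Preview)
Your high-level plan (Bor\r{u}vka with categorical top-two queries) matches the paper's, but there is a real gap in how you find partners. The query \textsc{CatTopTwo}$(e)$ from \cref{lem:top2tree} searches only $T_e$, the subtree below $e$; it does not return the global minimizer over all non-trivial $f$ as you claim. So a single such call per edge misses ancestor partners (pairs with $e\in T_f$) and independent partners entirely. The ancestor case is the delicate one, and the paper explicitly flags it as ``relatively subtle and does not arise in the minimum weight 2-respecting cut problem,'' so ``following GMW'' does not cover it. Concretely: suppose the only outgoing $H$-edge from $e$'s Bor\r{u}vka component is $\{e,f\}$ with $e\in T_f$. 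Processing $e$ finds nothing in $T_e$. Processing $f$ queries $T_f$, but the two returned edges need not include $e$: the overall minimizer could be some $g$ with $\edgecolor(g)=\edgecolor(e)$, and the second slot could go to some $h$ of a third color, both with $\cost(f,\cdot)<\cost(f,e)$. Then $f$ reports $\{f,h\}$, which is outgoing for $f$'s component but not for $e$'s, and $e$'s component never gets an outgoing edge---Bor\r{u}vka stalls. The paper fixes this with a second \textsc{Above} pass using \emph{de-activation}: when processing $f$, repeatedly query $T_f$, record the partner found, bump its score past $\beta$, and query again until no valid partner remains; each de-activation is charged to the edge that just found its ancestor partner, so the extra cost is $\Oh(n\log n)$.

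Your treatment of the independent case is also too thin. Independent $f$ are not in $T_e$, so \textsc{CatTopTwo}$(e)$ on the tree data structure cannot see them at all, and there is no single score array on $T$ in which $\score(f)=\cost(e,f)$ holds simultaneously for descendant and independent $f$. The paper handles independent pairs by a separate pass over \emph{pairs} of heavy paths $(h,h')$ with nonempty interaction list $L(h,h')$, maintaining a one-dimensional array $A_h$ per heavy path (via \cref{lem:top2path}, not \cref{lem:top2tree}), and again using de-activation inside each pair to peel off every $e\in h$ that has a partner on $h'$. Your accounting of ``$\Oh(m\log n)$ \textsc{AddPath} operations per round'' does not correspond to either the paper's descendant pass (which uses $\Oh(m)$ such calls) or its independent pass (which does not use \textsc{AddPath} on $T$ at all), so the running-time claim is unsupported as written.
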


At a high level, we prove \cref{thm:sp-tree} by following Bor\r{u}vka's algorithm to find a spanning forest of $H$. 
Throughout the algorithm we maintain a subgraph $F$ of $H$ that is a forest, initialized to be the empty graph on vertex set $E(T)$.  At the end of the algorithm, $F$ will be a spanning forest of $H$.  The algorithm
proceeds in rounds. In each round, for every tree in the forest, we find an edge connecting it to another tree in the forest, if
such an edge exists.  If $H$ has $k$ connected components, then in each round the number
of trees in $F$ minus $k$ goes down by at least a factor of $2$, and so the algorithm terminates in $\Oh(\log n)$ rounds.

The main work is implementing a round of Bor\r{u}vka's algorithm.  We will think of the nodes of $F$ as having colors, where nodes in the same tree of the forest have the 
same color, and nodes in distinct trees have distinct colors.  
The goal of a single round is to find, for each color $c$, a pair of edges $e,f \in T$ such that $c=\edgecolor(e) \ne \edgecolor(f)$ and $\{e,f\} \in E(H)$,
or detect that there is no such pair with these properties, in which case the nodes colored $c$ in $F$ already form a connected component of $H$.  As we need to refer to such 
pairs often we make the following definition.
\begin{definition}[partner]
Let $T$ and $H$ be as in \cref{thm:sp-tree}.
Given an assignment of colors to the edges of $T$ we say that $f$ is a partner for $e$ if $\{e,f\} \in E(H)$ and $\edgecolor(e) \ne \edgecolor(f)$.
\end{definition}

We will actually do something stronger than what is required to implement a round of Bor\r{u}vka's algorithm, which we encapsulate in the following code header.
\begin{algorithm}[H]
\caption{$\Round$}
\label{alg:mindeg}
 \hspace*{\algorithmicindent} \textbf{Input:} Adjacency list access to $G$, a spanning tree $T$ of $G$, a parameter $\beta$, and an assignment of colors to each $e \in E(T)$. \\
 \hspace*{\algorithmicindent} \textbf{Output:} For every $e \in E(T)$ output a partner $f \in E(T)$, or report that no partner for $e$ exists.
\end{algorithm}

The implementation of $\Round$ is our main technical contribution.  Let us first see how to use $\Round$ to find a spanning forest of $H$. 
\begin{lemma}
\label{lem:boruvka}
Let $G$, $T$ and $H$ be as in \cref{thm:sp-tree}.
There is a deterministic algorithm that makes $\Oh(\log n)$ calls to $\Round$ 
and in $\Oh(n \log n)$ additional time outputs a spanning forest of $H$.
\end{lemma}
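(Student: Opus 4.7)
The plan is to simulate Bor\r{u}vka's algorithm on $H$, using $\Round$ to implement a single round efficiently. I maintain $F$ as a subgraph of $H$ (initially edgeless on vertex set $E(T)$) together with a union-find data structure whose classes are the connected components of $F$, and I color each $e \in E(T)$ by the identifier of its class. In one round I invoke $\Round$ with the current coloring to obtain, for every $e \in E(T)$, either a partner $f_e$ or a certificate that none exists; then I scan the edges of $T$, and whenever $e$ has a partner $f_e$ whose union-find class is still distinct from that of $e$, I add $\{e,f_e\}$ to $F$ and union the two classes. After the scan I refresh the colors by setting $\edgecolor(e) \gets \mathrm{find}(\edgecolor(e))$ for each $e$, so that the next invocation of $\Round$ sees the updated partition. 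The algorithm terminates when $\Round$ reports that no vertex has a partner.

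Correctness follows from the definition of a partner: a color class $C$ contains a vertex with a partner iff there is an edge of $H$ with exactly one endpoint in $C$. Hence the algorithm halts precisely when every color class coincides with a connected component of $H$, so the final $F$ meets every component of $H$; and because we add an edge only when it joins two distinct union-find classes, $F$ remains a forest throughout, so the output is indeed a spanning forest of $H$.

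For the round count I appeal to the standard Bor\r{u}vka analysis: if $c$ is the current number of color classes and $k$ is the number of connected components of $H$, then at least $c-k$ classes each get merged with at least one other during the round, so $c-k$ at least halves and the algorithm terminates after $\Oh(\log n)$ rounds. Each round performs $\Oh(n)$ union-find operations and $\Oh(n)$ work to refresh the coloring, for $\Oh(n \log n)$ additional work overall, matching the claimed bound. The only genuine subtlety is that several partner pairs may compete to merge the same pair of classes; the union-find check is precisely what ensures only one such edge is added to $F$, preventing cycles. I do not expect a serious obstacle at this step—the real content is offloaded to $\Round$, whose implementation is the focus of the remaining subsections.
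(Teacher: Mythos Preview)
Your proposal is correct and follows essentially the same approach as the paper: both run Bor\r{u}vka's algorithm on $H$, using one call to $\Round$ per round and the standard halving argument to bound the number of rounds by $\Oh(\log n)$. The only difference is in how a round is implemented---the paper selects one partner edge per color class and then extracts a maximal acyclic subset, whereas you scan all returned partner edges greedily with a union-find structure; both are standard realizations of a Bor\r{u}vka phase and yield the stated bounds (your per-round cost is $\Oh(n\,\alpha(n))$ rather than $\Oh(n)$, but this is immaterial here).
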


\begin{proof}
We construct a spanning forest of $H$ by maintaining a collection of trees $F$ that will be updated in rounds by Bor\r{u}vka's algorithm until 
it becomes a spanning forest.  We initialize $F = (E(T), \emptyset)$ and give all $e \in E(T)$ distinct colors.  
We maintain the invariants that $F$ is a forest and that nodes in the same tree have the same color and those in different trees 
have distinct colors.

Consider a generic round where $F$ contains $q$ trees.  We call $\Round$ with the current color assignment.  For every $e$ which has one we obtain a partner $f$ such that 
$\{e,f\} \in E(H)$ and $\edgecolor(e) \ne \edgecolor(f)$. For each color class $c$ we select one $e$ with $\edgecolor(e) = c$ which has a returned partner (if it exists) and let $X$ be the set of selected edges.  
We then find a maximal subset of edges $X' \subseteq X$ that do not create a cycle among the color classes by computing a spanning forest of the graph whose supervertices are given by the color classes 
and edges given by $X$. 
We add the edges in $X'$ to $E(F)$.
Finally we merge the color classes of the connected components in $F$ by appropriately updating the color assignments, and we pass the updated forest and color assignments to the next round of the algorithm.
Each of the steps in a single round can be executed in $\Oh(n)$ time.

We have that $|X'| \ge (q - \cc(H))/2$ where $\cc(H)$ is the number of connected components of $H$.  Each edge from $X'$ added to $F$ decreases the number of trees in $F$
by one.  Thus the number of trees in $F$ minus $\cc(H)$ decreases by at least a factor of 2 in each round and the algorithm terminates after $\Oh(\log n)$ rounds.  The time spent 
outside of the calls to $\Round$ is $\Oh(n)$ for each of the $\Oh(\log n)$ rounds.
This is $\Oh(n \log n)$ overall.
\end{proof}

If a node $e$ has a partner $f$, then $\{e,f\}$ can either be a pair of descendant or independent edges.  
To implement $\Round$ we will separately handle these cases, as described in the next two subsections.

\subsection{Descendant edges}
\label{sub:descendant}

We follow the approach from~\cite{GawrychowskiMW20} originally designed to find a single pair $\{e,f\}$ of descendant edges that minimizes $\cost(e,f)$ over all $e,f \in E(T)$ in $\Oh(m \log n)$ time.  
Their approach actually does something stronger (as does Karger's original algorithm): for every $e \in E(T)$ it finds the best match in the subtree $T_e$, i.e., it returns an edge $f^* \in \argmin \{ \cost(e,f) \mid f \in T_e\}$.  
In order to implement the descendant edge part of $\Round$ we have three additional complications to handle:
\begin{enumerate}
  \item The edge $f^*$ might have the same color as $e$.
  \item The resulting $\cut(e,f^*)$ might be a trivial cut.
  \item Edge $e$ may have no partner in $T_e$ but still have a partner $f$ such that $e \in T_f$.  This partnership may not be discovered when we are looking for partners of $f$ if there is another $g \in T_f$ with $\cost(f,g) \le \cost(e,f)$.
\end{enumerate}

Item 1 can be easily solved by, in addition to finding $f^*$, also finding $g^* \in \argmin\{ \cost(e,f) \mid f \in E(T_e),\, \edgecolor(f) \ne \edgecolor(f^*) \}$.
Phrasing things in this way, rather than simply looking for the edge $h$ with color different from $e$ which minimizes $\cost(e,h)$, helps to limit the dependence of the query on $e$ and thus reduce the query time.
If there is an $f \in T_e$ with $\edgecolor(f) \ne \edgecolor(e)$ and $\cost(e,f) \le \beta$ then at least one of $f^*, g^*$ will 
satisfy this too.

For item 2, we use the result of \cref{clm:trivial} that descendant edges that give rise to trivial cuts have a very constrained structure.  This allow us to avoid trivial cuts when looking 
for a partner of~$e$.

Item~3 is relatively subtle and does not arise in the minimum weight 2-respecting cut problem.  To explain the issue we have to first say something about the high level structure of 
our implementation of $\Round$.  We will perform an Euler tour of $T$ and, when the tour visits edge $e$ for the first time, we will look for a partner $f$ for $e$ in $T_e$.  
The issue is the following, which we explain in the context of the very first round of Bor\r{u}vka's algorithm 
so we do not have to worry about nodes having different colors.  Suppose that in the graph $H$ the only edge incident 
to node $e$ is a node $f$ with $e \in T_f$.  Thus in the execution of $\Round$ we want to find $f$ as a partner of $e$.  When the Euler tour is at $e$ we will not find any suitable 
partner for $e$, as there is none in $T_e$.  We would like to identify $f$ as a partner for $e$ when the Euler tour visits $f$ for the first time. 
However, if there is a $g \in T_f$ with $\cost(f,g) < \cost(f,e)$ then the algorithm will return $g$ as a partner of $f$ rather than $e$.  To handle this we will actually make two 
passes over $T$.  In the first pass, when we visit edge $e$ for the first time we look for a partner $f$ in $T_e$.  In the second pass, we handle the case where the partner of $e$ might be an ancestor of $e$.
To do this we need to de-activate nodes.  When the Euler tour visits $f$ for the first time, we first find the lowest cost partner for $f$ in $T_f$.
We then de-activate this node, and again find the best active partner for $f$ in $T_f$.  Repeating this process, we will eventually find $e$ if $\{e,f\}$ is indeed an edge of $H$ and $e,f$ have 
different colors.

Now we turn to more specific implementation details.  A key idea in~\cite{GawrychowskiMW20} is that we can do an Euler tour of $T$ while maintaining a data structure such that when we first visit an edge $e$ we can easily look up $\cost(e,f)$ for any $f \in T_e$.  The way this is maintained can be best understood by noting that for $f \in T_e$:
\begin{align} \label{eq:score}
\cost(e,f) &= w(\Delta(e^\downarrow \setminus f^\downarrow)) \nonumber\\
&= w(e^\downarrow \setminus f^\downarrow, (e^\downarrow)^c) + w(e^\downarrow \setminus f^\downarrow, f^\downarrow) \nonumber\\
&= \cost(e) + \underbrace{\cost(f) - 2 w(f^\downarrow, (e^\downarrow)^c)}_{\score_e(f)} \enspace ,
\end{align}
where for convenience we defined $\score_e(f) = \cost(f) - 2 w(f^\downarrow, (e^\downarrow)^c)$, where the superscript $c$ denotes taking the complement.

We begin the algorithm by computing $\cost(e)$ for every $e \in E(T)$, which can be done in $\Oh(m)$ time by \cref{lem:one_respect}.  
We then do an Euler of $T$ while maintaining a data structure from \cref{lem:top2tree} such that, when we are considering $e \in E(T)$, for every
$f \in T_e$ the value of the data structure at location $f$ is $\cost(e,f)$.  For $f \not \in T_e$ this will not in general be the case.

As can be seen from \cref{eq:score}, the key to maintaining this data structure is how to update the values $w(f^\downarrow, (e^\downarrow)^c)$ when we descend edge $e$.
Consider the case where we are currently at edge $e' = (z,x)$ and move to a descending edge $e = (x,y)$.  For two vertices $u,v$ let $p(u,v)$ be the set of edges on the path from $u$ to $v$ in $T$, and let $\lca(u,v)$ be their least common ancestor in $T$.
For $f \in T_e$ we see that
\begin{equation}
w(f^\downarrow, (e^\downarrow)^c) = w(f^\downarrow, (e'^\downarrow)^c) + \sum_{\{u,v\} \in E \atop f \in p(u,v), \lca(u,v) = x} w(\{u,v\}) \enspace.
\end{equation}
By its definition in \eqref{eq:score} we can compute $\score_e(f)$ from $\score_{e'}(f)$ by \emph{subtracting} $2w(\{u,v\})$ from for every $\{u,v\} \in E$ such that $f \in p(u,v)$ and $\lca(u,v) = x$.
The for-loop on line 2 of \cref{alg:dfs_shell} implements this step for all $f$ by looping over all $\{u,v\} \in E$ with $\lca(u,v) = x$.
After this update we have that $\cost(e,f) = \cost(e) + \score(f)$ for every $f \in T_e$.
This shows how to descend down $T$ while keeping the invariant.
The full tree is then explored by taking an Euler tour through $T$, and whenever we go back up in the tree we revert the score updates (for-loop on line 10 of \cref{alg:descendant}).
This allows us to find candidate $f\in T_{e}$ for every $e\in E(T)$.
To bound the number of updates, note that each of the $m$ edges has a unique lca, and we only do an update corresponding to an edge when the lca is visited by the Euler tour.
Since the Euler tour visits every vertex at most twice, the number of updates is at most $2m$.
In addition, the number of categorical top two queries is $n-1$.
The data structure from Lemma~\ref{lem:top2tree} then yields $\Oh(m\log^{2} n)$ time overall.

The algorithm is formalized in \cref{alg:dfs_shell}, whose correctness we prove in the following lemma.

\begin{lemma}[{cf.\ \cite[Lemma 8]{GawrychowskiMW20}}]
\label{lem:ET}
Assume that we first initialize $e.\score \leftarrow \cost(e)$ for every $e \in E(T)$, and then run \cref{alg:dfs_shell} (doing nothing in line \ref{line:process}).
Then whenever an edge $e = (x,y)$ is followed on line~\ref{line:forloop} in the call to \textsc{Traverse}$(x)$ 
it holds that $\cost(e,f) = \cost(e) + f.\score$ for all~$f \in T_e$.
\end{lemma}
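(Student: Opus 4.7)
The plan is to prove by induction on the progress of the Euler tour of $T$ carried out by \cref{alg:dfs_shell} that, immediately before edge $e=(x,y)$ is followed on line~\ref{line:forloop} inside the call \textsc{Traverse}$(x)$, every $f \in T_e$ satisfies
\[
f.\score \;=\; \cost(f)\;-\;2\!\!\sum_{\substack{\{u,v\}\in E\\ f\in p(u,v)\\ \lca(u,v)\in A(x)}}\!\! w(\{u,v\}),
\]
where $A(x)$ denotes the set of ancestors of $x$ in $T$, including $x$ itself. Combined with \eqref{eq:score}, the claim $\cost(e,f) = \cost(e) + f.\score$ will follow from the combinatorial identity
\[
\sum_{\substack{\{u,v\}\in E\\ f\in p(u,v)\\ \lca(u,v)\in A(x)}} w(\{u,v\}) \;=\; w\bigl(f^\downarrow,\,(e^\downarrow)^c\bigr) \qquad \text{for every } f\in T_e.
\]

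To maintain the score invariant I would track how the for-loop on line~2 and the revert phase of the for-loop on line~\ref{line:forloop} affect the right-hand side. Initially every $f.\score$ equals $\cost(f)$, which matches the empty sum (no ancestor has yet been visited). Each call \textsc{Traverse}$(x)$ begins by iterating over all edges $\{u,v\}$ with $\lca(u,v)=x$ and subtracting $2w(\{u,v\})$ from $f.\score$ for every $f \in p(u,v)$; this precisely extends the set of $\lca$-ancestors appearing in the sum to include $x$. When a recursive call returns, its revert step exactly undoes the updates it applied, so when the loop on line~\ref{line:forloop} advances to the next sibling of $y$ the invariant still holds with $A(x)$ on the right.

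The heart of the argument is the combinatorial identity, which I would prove by showing the two index sets coincide. For the forward inclusion, if $u\in f^\downarrow$ and $v\in (e^\downarrow)^c$, then since $f^\downarrow\subseteq e^\downarrow$ the path in $T$ from $u$ to $v$ crosses $f$ (one endpoint is below $f$, the other is not), so $f\in p(u,v)$; and since the same path must leave the subtree rooted at $y$, the vertex $\lca(u,v)$ cannot be a proper descendant of $x$, so $\lca(u,v)\in A(x)$. Conversely, if $f\in p(u,v)$ and $\lca(u,v)\in A(x)$, then without loss of generality $u$ lies on the lower side of $f$, so $u\in f^\downarrow$; if $v$ were also in $e^\downarrow$, then $\lca(u,v)$ would lie in the subtree rooted at $y$, contradicting $\lca(u,v)\in A(x)$, so $v\in (e^\downarrow)^c$.

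The main obstacle is the combinatorial identity: one must simultaneously handle the case $\lca(u,v)=x$, in which $u$ and $v$ lie in distinct subtrees rooted at children of $x$, and the case where $\lca(u,v)$ is a strict ancestor of $x$, in which the path from $u$ to $v$ leaves the subtree at $x$ through the edge entering $x$. Once the identity is verified, substituting $f.\score = \cost(f) - 2 w(f^\downarrow, (e^\downarrow)^c) = \score_e(f)$ into \eqref{eq:score} yields $\cost(e,f)=\cost(e)+f.\score$ as required.
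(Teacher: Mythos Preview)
Your proposal is correct and follows essentially the same approach as the paper: both maintain the invariant $f.\score = \cost(f) - 2w(f^\downarrow,(e^\downarrow)^c)$ via the Euler tour, with the revert step restoring scores between siblings. The only difference is presentational---the paper argues by induction on the depth of $x$, verifying at each step that the incremental update at a node $y$ accounts for the difference $w(f^\downarrow,(e'^\downarrow)^c)-w(f^\downarrow,(e^\downarrow)^c)$, whereas you state the closed-form invariant (sum over $\lca(u,v)\in A(x)$) and prove the global identity $\sum_{\lca(u,v)\in A(x),\,f\in p(u,v)} w(\{u,v\}) = w(f^\downarrow,(e^\downarrow)^c)$ in one shot; the set-equality argument you give for this identity is correct and arguably cleaner.
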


\begin{proof}
We will prove this by induction on the depth of $x$.  Consider the case where $x$ is the root $r$.  Before the call to \textsc{Traverse}$(r)$ we initialized all scores to $e.\score  \gets \cost(e)$.  Then, on line~\ref{line:update} of \textsc{Traverse}$(r)$, for each $\{u,v\} \in E(G)$ with $\lca(u,v) = r$ we subtract $2w(\{u,v\})$ from the score 
of every edge on the $u$ to $v$ path in $T$.  Let us refer to scores at this point in time as ``at time zero.'' We first claim that at time zero for any outgoing edge $e = (r,y)$ from the root this makes $\cost(e,f) = \cost(e) + f.\score$ for all $f \in T_e$.  

Let $p(u,v)$ be the set of edges on the path from $u$ to $v$ in $T$.
By \cref{eq:score} we have $\cost(e,f) = \cost(e) + \cost(f) - 2w(f^\downarrow, (e^\downarrow)^c)$ thus it suffices to show that for any $f \in T_e$
\[
w(f^\downarrow, (e^\downarrow)^c) = \sum_{\{u,v\} \in E(G) \atop f \in p(u,v), \lca(u,v) = r} w(\{u,v\}) \enspace .
\]
This holds because by definition $h\in E(f^\downarrow, (e^\downarrow)^c)$ iff one endpoint is in $f^\downarrow$ the other endpoint is in $(e^\downarrow)^c)$, which in turn happens 
iff the least common ancestor of the endpoints is $r$ and $f$ lies on the path between the endpoints.  

To finish the base case, we claim that at each iteration of the for loop all scores are the same as at time zero.  This is because in the recursive calls that follow the update to the scores on line~\ref{line:update} 
the update is exactly canceled out by the reverse update on line~\ref{line:revert} when the recursive call exits.

For the inductive step, let us suppose that when an edge $e = (x,y)$ is followed on line~\ref{line:forloop} in the call to \textsc{Traverse}$(x)$ 
it holds that $\cost(e,f) = \cost(e) + f.\score$ for all $f \in T_e$.  Let us now refer to scores at this point in time as ``at time zero.''  We then want to show that on line~\ref{line:forloop} in the call to \textsc{Traverse}$(y)$ that for an outgoing 
edge $e' = (y,z)$ it holds that $\cost(e',f) = \cost(e') + f.\score$ for all $f \in T_{e'}$.  The change in the scores from time zero to the execution of the for loop in the call to \textsc{Traverse}$(y)$ occurs in 
the update on line~\ref{line:update}.  Let us refer to scores at this point in time as ``at time one.''  We first show that at time one for any outgoing edge $e' = (y,z)$ of $y$ it holds that $\cost(e',f) = \cost(e') + f.\score$ for all $f \in T_{e'}$.
The key to this is to consider the difference between $\cost(e,f)$ and $\cost(e',f)$ for an $f \in T_{e'}$.  By \cref{eq:score} we have $\cost(e',f) = \cost(e') + \cost(f) - 2 w(f^\downarrow, (e'^\downarrow)^c)$, and 
by the inductive hypothesis at time zero $\score.f = \cost(f) - 2 w(f^\downarrow, (e^\downarrow)^c)$.  Thus so that $\cost(e',f) = \cost(e') + \score.f$ we need to update the $\score.f$ by 
\begin{equation}
\label{eq:ind_update}
2 \left(w(f^\downarrow, (e^\downarrow)^c) - w(f^\downarrow, (e'^\downarrow)^c) \right) = -2 \sum_{\{u,v\} \in E(G) \atop f \in p(u,v), \lca(u,v) = y} w(\{u,v\}) \enspace.
\end{equation}
To see this, first note that $E(f^\downarrow, (e^\downarrow)^c) \subseteq E(f^\downarrow, (e'^\downarrow)^c)$.  This confirms that we should subtract something to perform this update.  
An edge $\{u,v\}$ is in $E(f^\downarrow, (e'^\downarrow)^c)$ but not $E(f^\downarrow, (e^\downarrow)^c)$ iff one endpoint, say $u$, is in $f\downarrow$ and the other endpoint $v$ is 
in $(e'^\downarrow)^c \setminus (e^\downarrow)^c$.  This means that $v \in e^\downarrow$ but now $e'^\downarrow$ and so $y = \lca(u,v)$.  The condition $u \in f^\downarrow$ is then 
equivalent to having $f$ on the path between $u$ and $v$.  This confirms that \cref{eq:ind_update} performs the correct update.

To finish the proof, we claim that $\cost(e',f) = \cost(e') + \score.f$ not just at time one, but at the time when the for loop with $e'$ is executed.  This is again because in the changes to the scores 
on line~\ref{line:update} that are made in a recursive call are reversed when the recursive call exits on line~\ref{line:revert}, thus every time the for loop is executed the scores are the same 
as the scores at time one.
\end{proof}

\begin{algorithm}[!htbp]
\caption{Euler tour maintaining $\cost(e,f)$ for $f \in T_e$}
\label{alg:dfs_shell}
\begin{algorithmic}[1]
\Function{Traverse}{$x$}
\ForAll{$\{u,v\} \in E(G)$ such that $\lca(u,v)=x$}
  \State \textsc{AddPath}($-2 w(\{u,v\}),u\text{-to-}v$) \label{line:update}
\EndFor
\ForAll{$y$ such that $e = (x,y) \in E(T)$} \label{line:forloop}
  \State Process $e$. \Comment{``Process'' depends on context.} \label{line:process}
  \State \Call{Traverse}{$y$}
\EndFor
\ForAll{$\{u,v\} \in E(G)$ such that $\lca(u,v)=x$}
   \State \textsc{AddPath}($2 w(\{u,v\}),u\text{-to-}v$)  \label{line:revert}
\EndFor
\EndFunction
\end{algorithmic}
\end{algorithm}

Given \cref{lem:ET} to maintain $\cost(e,f)$ for $f \in T_e$ during an Euler tour of the tree, and with a link cut tree data structure to handle 
categorical top two queries, it is now straightforward to design an algorithm to find for every edge $e$ a partner for $e$ that is a descendant or 
ancestor, if such a partner exists.  

\begin{algorithm}[!htbp]
\caption{The descendant edge portion of $\Round$.}
\label{alg:descendant}
\begin{algorithmic}[1]
\State $X \gets \emptyset$ \Comment{$X$ will hold all edges found during the round}
\For{$e \in E(T)$} $e.\score  \gets \cost(e)$ \Comment{scores are maintained with \cref{lem:top2tree}} \label{line:score_init}
\EndFor  
\State Run \Call{Traverse}{$r$} where ``Process $e$'' means running \Call{Below}{$e$}.
\State Run \Call{Traverse}{$r$} where ``Process $e$'' means running \Call{Above}{$e$}.
\end{algorithmic}
\vspace{2mm}
\begin{algorithmic}[1]
\Function{Below}{$e$} \Comment{Find a partner for $e$ in $T_e$.}
\If{the head of $e$ has outdegree 1}
  \State Let $h$ be the outgoing edge of the head of $e$. \Comment{In this case $\cut(e,h)$ is trivial}
  \State $h.\score \pluseq \beta + 1$
\EndIf
 \State $(f,g) =$ \Call{CatTopTwo}{$e$}
 \If{$\edgecolor(f) \ne \edgecolor(e) \And \cost(e) + \score(f) \le \beta$}  
         \State $X \gets X \cup \{e,f\}$
 \ElsIf{$\edgecolor(g) \ne \edgecolor(e) \And \cost(e) + \score(f) \le \beta$}
        \State $X \gets X \cup \{e,g\}$
 \EndIf
\If{the head of $e$ has outdegree 1}
  \State $h.\score \minuseq \beta + 1$
\EndIf
\EndFunction
\end{algorithmic}
\vspace{2mm}
\begin{algorithmic}[1]
\Function{Above}{$e$} \Comment{Find all $f$ such that $f \in T_e$ and $e$ is a partner of $f$.}
\If{the head of $e$ has outdegree 1}
  \State Let $h$ be the edge coming into the head of $e$. \Comment{In this case $\cut(e,h)$ is trivial}
  \State $h.\score \pluseq \beta + 1$
\EndIf
\State noMore = \False
\Repeat \label{line:repeat}
 \State $(f,g) =$ \Call{CatTopTwo}{$e$}
 \If{$\edgecolor(f) \ne \edgecolor(e) \And \cost(e) + \score(f) \le \beta$}  
         \State $X \gets X \cup \{e,f\}$
         \State $f.\score \pluseq \beta + 1$
 \ElsIf{$\edgecolor(g) \ne \edgecolor(e) \And \cost(e) + \score(f) \le \beta$}
        \State $X \gets X \cup \{e,g\}$
        \State $g.\score \pluseq \beta + 1$
 \Else 
 	\State noMore = True
 \EndIf
\Until noMore
\If{the head of $e$ has outdegree 1}
  \State $h.\score \minuseq \beta + 1$
\EndIf
\EndFunction
\end{algorithmic}
\end{algorithm}

\begin{theorem}
\label{thm:descendant}
Given an assignment $e.\edgecolor$ for each $e \in E(T)$, there is a deterministic algorithm that runs in time $\Oh(m \log^{2} n)$ and for each $e$ finds 
an $f$ such that 
\begin{enumerate}
  \item $\{e,f\} \in H$
  \item $e \in T_f$ or $f \in T_e$
  \item $e.\edgecolor \ne f.\edgecolor$
\end{enumerate}
if such an $f$ exists.
\end{theorem}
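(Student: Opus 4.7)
The plan is to verify correctness of Algorithm \ref{alg:descendant} using the invariant from Lemma \ref{lem:ET}, and then to bound the running time. The two passes of \textsc{Traverse} through $T$ are essentially independent in terms of score bookkeeping: in the first pass (Below), every score increment from either the lca loop in \textsc{Traverse} or the trivial-cut inflation in Below is reversed before the pass exits, so the scores are restored to $e.\score = \cost(e)$ by the start of the second pass, and Lemma \ref{lem:ET} applies again. Consequently, whenever Below($e$) or Above($e$) is invoked, the data structure satisfies $\cost(e,f) = \cost(e) + f.\score$ for every $f \in T_e$ that has not been deactivated in the current Above call or an earlier one.

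For the first pass, I would show that Below($e$) discovers a valid descendant partner whenever one exists. By \cref{clm:trivial}, the only candidate $f \in T_e$ that can produce a trivial cut with $e$ is the unique outgoing edge $h$ of the head of $e$ when that head has outdegree one, and we temporarily inflate $h.\score$ by $\beta+1$ to rule it out. The categorical top-two query on $T_e$ returns the minimizer $f$ and the best second candidate $g$ of color distinct from $\edgecolor(f)$. If $\edgecolor(f) \ne \edgecolor(e)$, then $f$ itself is a valid partner whose cost is no larger than that of any true partner $f^*$, hence $\le \beta$. Otherwise $\edgecolor(f) = \edgecolor(e)$, in which case $g$ has a color different from $e$ and minimum cost among such edges in $T_e$, giving $\cost(e,g) \le \cost(e,f^*) \le \beta$.

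For the second pass, I would argue that Above($e$) iteratively harvests every descendant $f \in T_e$ that admits $e$ as a valid ancestor partner. The query again picks out the best valid candidate (with a second different-colored one as backup), and the trivial-cut treatment mirrors that of Below. Once a valid pair $\{e,f\}$ is recorded, we raise $f.\score$ by $\beta + 1$, so the next iteration cannot return $f$, and the repeat loop terminates as soon as no further candidate satisfies the cost threshold. These inflations persist across later Above calls, but this is acceptable since once $f$ has been matched with some ancestor, we do not need to rediscover it. Combining the two passes, for any edge with at least one partner we obtain one: Below handles descendant partners directly, and Above transfers ancestor partnerships from the ancestor side.

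The hard part is ensuring the persistent inflations in the Above pass do not silently invalidate Lemma \ref{lem:ET} for later queries. This is handled by observing that inflation is by $\beta + 1$ and $\cost(e'') \ge 0$ for any later ancestor $e''$, so the inflated edge's apparent cost exceeds $\beta$ in every subsequent query and is harmlessly filtered by the cost check; meanwhile non-inflated edges still satisfy the invariant exactly. For the running time, each edge of $G$ triggers exactly two \textsc{AddPath} calls per pass (a forward update and its reversal at the lca in \textsc{Traverse}), each costing $\Oh(\log^2 n)$ by \cref{lem:top2tree}, for $\Oh(m\log^2 n)$ total. The number of \textsc{CatTopTwo} queries is $\Oh(n)$ per pass — one per edge in the first pass, and in the second pass because each successful query deactivates a distinct edge (at most $n-1$ overall) while each unsuccessful query terminates its repeat loop — and each costs $\Oh(\log n)$. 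Summing yields the claimed $\Oh(m \log^2 n)$ bound.
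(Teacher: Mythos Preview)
Your proposal is correct and follows essentially the same approach as the paper's own proof: verify the two passes of Algorithm~\ref{alg:descendant} using the invariant of Lemma~\ref{lem:ET}, handle trivial cuts via temporary inflation per \cref{clm:trivial}, and bound the running time by charging \textsc{AddPath} calls to edges of $G$ and \textsc{CatTopTwo} calls to either tree edges (Below) or deactivations (Above). You are in fact slightly more explicit than the paper about why the persistent deactivations in the Above pass are harmless---namely that an inflated edge always has apparent cost exceeding $\beta$ in any later query, so it can never be accepted and can never block a genuine non-deactivated candidate from being returned by \textsc{CatTopTwo}; the paper leaves this step implicit.
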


\begin{proof}
The algorithm is given by \cref{alg:descendant}.  Suppose that an edge $e$ has a partner $f$ satisfying the 3 conditions of the theorem.  Then 
either $f \in T_e$ or $e \in T_f$.  We claim that if $f \in T_e$ then we will find a partner of $e$ in the call to \textsc{Traverse}($r$)
using \textsc{Below}($e$) to process edge $e$,
and if $e \in T_f$ 
then we will find a partner of $e$ in the call to \textsc{Traverse}($r$) using \textsc{Above}($e$) to process edge $e$.

Let us show these statements separately, starting with the case $f \in T_e$.  Consider the time when $e$ is considered in the for loop line~\ref{line:forloop} 
in a recursive call from \textsc{Traverse}($r$) using \textsc{Below}($e$) to process edge $e$.   In the call to 
\textsc{Below}($e$) we first check if the tail of $e$ has a single outgoing edge $h$.  If this is the case then $\cut(e,h)$ is a trivial cut and thus we do not want to 
find $h$ as a partner for $e$.  We thus add $\beta+1$ to the score of $h$ ensuring that it will never be a valid partner for $e$.  By \cref{clm:trivial} this is the only situation
where $\cut(e,f)$ is trivial for $f \in T_e$.
By \cref{lem:ET} for all other $g \in T_e$ it holds that $\cost(e,g) = \cost(e) + g.\score$. 
Thus the call to CatTopTwo($e$) will perform correctly, and one of the returned edges must be a valid partner for $e$.  We then reset the score of $h$, if it was changed, to 
maintain the property given by \cref{lem:ET}.

Now consider the case where $e$ has a partner $f$ with $e \in T_f$.  Let $f$ be the first such partner that is encountered in an Euler tour of $T$.  We claim 
that the edge $\{e,f\}$ will be added to $X$ in the call to \textsc{Traverse}($r$) using \textsc{Above}($e$) to process edge $e$.  First note that after the previous call to \textsc{Traverse}($r$) terminates it holds that 
$g.\score = \cost(g)$ as all changes to the scores in the recursive calls are reverted after the call returns.  Thus we are again in position to apply \cref{lem:ET}, 
although we have to be slightly more careful this time as scores are modified within the body of the for loop on line \ref{line:forloop} of \cref{alg:dfs_shell} when we run \textsc{Above}($e$) to process $e$.  We again handle 
the possibility of trivial cuts as in the ``below'' case.   We also modify a score for an edge $e$ after a partner for $e$ has been found and thus our job for $e$ is done and we no longer need to use its score.  As by 
assumption $f$ is the first potential partner for $f$ encountered in the Euler tour, the score of $e$ has not been modified at this point.  Thus by \cref{lem:ET} 
it holds that $\cost(e,f) = \cost(f) + e.\score$.  This means that $e$ will eventually be found in the repeat loop on line~\ref{line:repeat} of the function \textsc{Above}($e$).

Let us now analyze the running time.  Computing $\cost(e)$ for each edge $e$ can be done in time $\Oh(m + n)$ by \cref{lem:one_respect}.
Before proceeding with the traversal, we gather, for each node $x$, all edges $\{u,v\} \in E(G)$ such that $\lca(u,v)=x$.
This can be done in $\Oh(m)$ time by constructing in $\Oh(n)$ time a constant-time LCA structure~\cite{BenderF00}, and iterating over the edges.
Next consider the body of the for loop in the \textsc{Below} function.  Here we make a single \textsc{CatTopTwo} query which takes 
time $\Oh(\log n)$ by \cref{lem:top2tree}.  Thus over the entire Euler tour these queries contribute $\Oh(n \log n)$ to the running time.  In the \textsc{Above} function
all \textsc{CatTopTwo} queries in the body of the for loop except for one (when noMore becomes true) will result in de-activating an edge.  Thus again the total 
query time over the Euler tour is $\Oh(n \log n)$.

Finally consider the cost of updating the scores in the Euler tour.  As discussed earlier, over the course of the Euler tour this requires doing 2 calls to \textsc{AddPath} for every edge of $G$.  Each \textsc{AddPath} call can be done in time $\Oh(\log^{2} n)$ by \cref{lem:top2tree}, thus the overall time for this is $\Oh(m \log^{2} n)$, which dominates the complexity of the algorithm.
\end{proof}

\subsection{Independent edges}
\label{sub:independent}
The goal now is to find, for every edge $e\in E(T)$, a partner $f \in E(T)$ such that $e,f$ are independent, or decide that there is no such $f$.  As we chose the root of $T$ to have degree~1, by \cref{clm:trivial} we do 
not have to worry about trivial cuts in the independent edge case.  Instead of considering all edges $e\in E(T)$ one-by-one, we first find a heavy path decomposition of $T$ and then iterate over 
all pairs of heavy paths $h,h'$ to look for a partner $f \in h'$ for every $e \in h$.  We cannot literally carry out this plan as the number of pairs of heavy paths 
can be $\Omega(n^2)$ and so we cannot explicitly consider every pair. We show next that many pairs $h,h'$ result in a trivial case 
and that all these trivial pairs can be solved together in one batch.  We then bound the number of non-trivial pairs and show that in near-linear time we can explicitly process all of them.
The idea of processing pairs of heavy paths, and explicitly considering only the non-trivial ones, was introduced in the context of 2-respecting cuts by
Mukhopadhyay and Nanongkai~\cite{MN20} (see also~\cite{GawrychowskiMW20}).

Consider two distinct heavy paths $h,h'$, where $h$ is the path $u_1-u_2-\cdots-u_q$ and $h'$ is the path $v_{1}-v_{2}-\ldots v_{q'}$.  We let $e_i = (u_i, u_{i+1})$ for 
$i = 1, \ldots, q-1$ and $f_i = (v_i, v_{i+1})$ for $i = 1, \ldots, q'-1$.  It can be that not all pairs $e_i, f_j$ are independent, see \cref{fig:rectangle}.  However, we can easily 
identify the subpaths of $h,h'$ containing pairwise independent edges in constant time by computing the least common ancestor $v$ of the tails of $h,h'$.  If $v=v_{p'}$ 
lies on $h'$ then $e_i, f_j$ will be independent for $1\le i < q$ and $p' \le j <q'$, and similarly if $v$ lies on $h$.  In general we assume that $p,p'$ have been determined 
so that $e_i, f_j$ are independent for all $p' \le i < q$ and $p' \le j <q'$, and that these pairs comprise all of the independent pairs on $h,h'$.  
We can associate to $h,h'$ a $(q-1)$-by-$(q'-1)$ matrix $\M$ where for $p' \le i < q$ and $p' \le j <q'$
\begin{align}
\label{eq:score_ind}
\M[i,j]
&= \cost(e_i, f_j) \nonumber \\
&= \cost(e_i) + \cost(f_j)- 2 w(e_i^\downarrow, f_j^\downarrow) \enspace ,
\end{align}
and $\M$ is undefined otherwise.\footnote{We could restrict $\M$ to the submatrix on which it is defined, but find it notationally easier for the 
$i,j$ indices in $\M$ to match the edge labels.} 
By \cref{lem:one_respect}, all values of $\cost(e)$ can be computed in $\Oh(m)$ total time.
To efficiently evaluate $\M$, we will prepare a list $L(h,h')$ of all edges that contribute to $w(e^\downarrow, f^\downarrow)$ for independent $e,f$ with $e\in h, f\in h'$.  
For many $h,h'$ the list $L(h,h')$ will be empty, leading to the trivial case mentioned above.
The following lemma bounds the size of all the non-empty lists and shows they can be constructed efficiently.

\begin{lemma}
\label{lem:lists}
The total length of all lists $L(h,h')$ is $\Oh(m \log^2 n)$ and all non-empty lists $L(h,h')$ can be constructed deterministically in time $\Oh(m \log^2 n)$.
\end{lemma}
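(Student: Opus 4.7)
The plan is to establish the size bound via a charging argument, then describe a construction that realizes it.

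For the size bound, fix an edge $\{u,v\} \in E(G)$ and let $a = \lca(u,v)$. I claim that a tree edge $e$ with $u \in e^\downarrow$ admits an independent partner $f$ with $v \in f^\downarrow$ if and only if $e$ lies strictly below $a$ on the $u$-to-root path, i.e., on the $u$-to-$a$ segment of $T$. Indeed, if $e$ were at or above $a$, it would be an ancestor of both $u$ and $v$, and any $f$ with $v \in f^\downarrow$ would then be a descendant of $e$; conversely, if $e$ lies on the $u$-to-$a$ segment and $f$ on the $v$-to-$a$ segment, their subtrees reside in different subtrees of $a$ and are therefore disjoint, making $(e,f)$ independent. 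By the standard heavy-path property, each of the two segments $u$-to-$a$ and $v$-to-$a$ meets at most $\log n$ heavy paths, so $\{u,v\}$ can enter $L(h,h')$ only when $h$ comes from the first family and $h'$ from the second (or vice versa). Hence $\{u,v\}$ appears in at most $\Oh(\log^2 n)$ ordered pairs, and summing over the $m$ edges of $G$ yields $\sum_{h,h'} |L(h,h')| = \Oh(m \log^2 n)$.

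For the construction, I would first compute the heavy path decomposition of $T$ together with a constant-time LCA oracle in $\Oh(n)$ preprocessing time (e.g.\ via the classical reduction to range minimum queries). Then, for each edge $\{u,v\} \in E(G)$, I compute $a = \lca(u,v)$ and enumerate the $\Oh(\log n)$ heavy paths crossed by the $u$-to-$a$ segment: starting from $u$, repeatedly record the current heavy path, jump to its head, and take one parent step into the next heavy path, stopping as soon as the walk reaches $a$. The analogous walk from $v$ gives the heavy paths crossed by the $v$-to-$a$ segment. Iterating over the Cartesian product of the two families and appending $\{u,v\}$ to each corresponding $L(h,h')$ (and, symmetrically, $L(h',h)$) in $\Oh(1)$ time each yields $\Oh(\log^2 n)$ work per edge, for a total of $\Oh(m \log^2 n)$.

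The main point to be careful about is the heavy path containing $a$ itself: the walk from $u$ (resp.\ $v$) must stop precisely at $a$ so that no edges on the $a$-to-root portion of that heavy path are recorded, since any such edge would fail the independence requirement and inflate the list with spurious entries that would break the $\log^2 n$ per-edge charge. Once this boundary is handled cleanly, both the size bound and the running-time bound follow directly from the heavy-path property and constant-time LCA queries.
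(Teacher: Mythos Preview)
Your proposal is correct and follows essentially the same approach as the paper: charge each graph edge $\{u,v\}$ to the $\Oh(\log n)\times\Oh(\log n)$ pairs of heavy paths touched on the $u$-side and $v$-side, then enumerate those pairs explicitly. One small implementation point you gloss over: ``appending $\{u,v\}$ to each corresponding $L(h,h')$ in $\Oh(1)$ time'' presumes you can index into the list for an arbitrary pair $(h,h')$ in constant time deterministically, which is not immediate (a 2D array is $\Theta(n^2)$ space, hashing is randomized); the paper handles this by emitting all triples $(h,h',\{u,v\})$ into one auxiliary list and radix-sorting it so that each nonempty $L(h,h')$ becomes a contiguous block.
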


\begin{proof}
Observe that an edge $\{u,v\}\in E$ can contribute to $w(e^\downarrow,f^\downarrow)$ for independent $e,f$ with $e\in h, f\in h'$ only if
$u$ is in the subtree rooted at the head of $h$ and $v$ is in the subtree rooted at the head of $h'$.
There are at most $\log n$ heavy paths intersecting
the path from $u$ to the root and from $v$ to the root, and we can iterate over all such heavy paths in time proportional to
their number (for example, by storing for each edge of $T$, a pointer to the head of the heavy path that contains it). Thus, we can
iterate over all relevant $h,h'$ in $\Oh(\log^{2}n)$ time, and add a triple $(h,h',\{u,v\})$ to an auxiliary list in which the heavy paths are identified by their heads. 
The total size of the auxiliary list is now $\Oh(m\log^2 n)$ and it can be lexicographically sorted in the same time with radix sort. After sorting each non-empty list $L(h,h')$ constitutes a contiguous fragment of the auxiliary list.
\end{proof}

We can now describe how to find a partner $f$ for every $e$ such that $e,f$ are independent.
The algorithm first solves together in one batch the case where the partner of $e \in h$ is in a heavy path $h'$ where $L(h,h')$ is empty.
After that we explicitly consider all $h,h'$ with $L(h,h')$ non-empty.  We consider these two cases in the next two subsections.

\subsubsection{Empty lists}

\begin{lemma}
\label{lem:empty}
There is a deterministic algorithm that in time $\Oh(m + n)$ finds a partner for every edge $e \in E(T)$ that has a partner $f$ such that $e,f$ are independent and $e \in h, f \in h'$ with $L(h,h')$ empty.
\end{lemma}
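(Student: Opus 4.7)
The plan rests on the observation that when $L(h,h')=\emptyset$ the cost decouples. Indeed, by the argument in the proof of \cref{lem:lists}, an edge $\{u,v\}\in E$ can contribute to $w(e^\downarrow,f^\downarrow)$ for some independent $e\in h$, $f\in h'$ only if it appears in $L(h,h')$; hence when $L(h,h')=\emptyset$ we have $w(e^\downarrow,f^\downarrow)=0$ and by \eqref{eq:score_ind}, $\cost(e,f)=\cost(e)+\cost(f)$ for all independent $e\in h, f\in h'$. Consequently, if $e$ has at least one empty-list partner $f$ then $\cost(e)+\cost(f)\le\beta$, and so the minimum of $\cost(f')$ over \emph{all} edges $f'$ independent of $e$ with $\edgecolor(f')\ne\edgecolor(e)$ is at most $\beta-\cost(e)$. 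Any such minimizer $f'$ satisfies $\cost(e,f')\le\cost(e)+\cost(f')\le\beta$ (the $-2w(\cdot,\cdot)$ correction in \eqref{eq:score_ind} is non-positive) and defines a non-trivial cut since $e,f'$ are independent and we chose the root of degree one (cf.\ \cref{clm:trivial}), so $f'$ is a valid partner for $e$.

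It therefore suffices to find, for every $e\in E(T)$ simultaneously, the minimum of $\cost(f)$ over $f$ independent of $e$ together with the minimum over independent $f$ with a different color from that minimizer, i.e.\ a categorical top-two. Number the edges of $T$ by an Euler tour and associate to each $f$ its subtree interval $[l_f,r_f]$. These intervals form a laminar family, so $e,f$ are independent iff their intervals are disjoint, which in a laminar family is exactly $r_f<l_e$ or $l_f>r_e$. Hence the edges independent of $e$ split into a ``before'' class $\{f:r_f<l_e\}$ and an ``after'' class $\{f:l_f>r_e\}$.

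Compute all $\cost(f)$ in $\Oh(m+n)$ time using \cref{lem:one_respect}. Sort edges by $r_f$ with counting sort and sweep left-to-right, maintaining a categorical top-two summary (a cheapest element seen and a cheapest element of color different from it) and recording its value at every position $k$. Inserting one element into such a summary is an $\Oh(1)$ update, so this sweep produces the categorical top two of $\{f:r_f<k\}$ for every $k$ in $\Oh(n)$ time. Sweep symmetrically from the right to obtain the categorical top two of $\{f:l_f>k\}$ for every $k$.

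Finally, for each $e$ merge the two summaries stored at positions $l_e$ and $r_e$: the top one of the union is the cheaper of the two stored top ones, and the top two is the cheapest among the at most four stored edges whose color differs from that of the new top one, which is a constant-time operation. From the merged top two pick the first entry whose color differs from $\edgecolor(e)$ and, if $\cost(e)+\cost(f)\le\beta$, output it as a partner of $e$. By the first paragraph this succeeds whenever some empty-list partner of $e$ exists, and since the whole procedure does $\Oh(n)$ constant-time operations on top of the $\Oh(m+n)$ preprocessing, the total running time is $\Oh(m+n)$. The only potentially subtle point is verifying that the output is indeed a \emph{partner} even though we optimize over all independent $f$ rather than only empty-list ones; this is handled precisely by the cost-inflation argument in the first paragraph.
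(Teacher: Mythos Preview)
Your proof is correct, but it follows a different and more elaborate route than the paper's.

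The paper does not restrict attention to edges independent of $e$. It observes (as you do) that $\cost(e,f') \le \cost(e)+\cost(f')$ for \emph{every} $f'\in E(T)$, descendant or independent, so any $f'$ with $\edgecolor(f')\ne\edgecolor(e)$, $\cost(e)+\cost(f')\le\beta$, and $\cut(e,f')$ non-trivial is already a valid partner. It then computes, by three successive categorical top-two passes over all of $E(T)$ (each time excluding the previously found pair), six global edges $f_1,\dots,f_6$. For any fixed $e$, the candidate of color $\ne\edgecolor(e)$ in each pass has cost at most $\cost(f)$ and is a valid partner unless it forms a trivial cut with $e$; by \cref{clm:trivial} at most two edges can do so, so by the third pass the candidate must succeed. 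Thus the same six edges serve every $e$, and no per-$e$ computation is needed beyond checking these six.

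Your approach instead enforces non-triviality structurally by restricting to independent $f$, and then does genuine per-$e$ work: you compute a categorical top-two over the $e$-dependent set of independent edges via Euler-tour intervals and prefix/suffix sweeps. This is sound and also runs in $\Oh(m+n)$, but it is heavier machinery than necessary here; the paper gets away with a constant number of global scans precisely because trivial cuts are so scarce that three iterations absorb them. On the other hand, your technique gives a little more: it actually produces, for each $e$, the \emph{cheapest} independent differently-colored partner, which the paper's six-edge shortcut does not.
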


\begin{proof}
The key observation is that if $L(h,h')$ is empty then $\cost(e,f) = \cost(e) + \cost(f)$ by \cref{eq:score_ind}.  As can be seen from \cref{eq:score} and \cref{eq:score_ind}, for any edge $f'$ 
it always holds that $\cost(e,f') \le \cost(e) + \cost(f')$, whether $e,f'$ are descendant or independent.  
Thus in this case it suffices for us to find \emph{any} $f'$ of color 
different from $e$ such that $\cost(e) + \cost(f') \le \beta$, and $\cut(e,f')$ is non-trivial as this ensures $\cost(e,f') \le \beta$.  We are guaranteed such an $f'$ exists as $f$ satisfies this.

By \cref{lem:one_respect} we can compute $\cost(f')$ for every $f' \in E(T)$ in time $\Oh(m)$.  Then in time $\Oh(n)$ with one pass over $E(T)$ we compute the edge $f_1$ of lowest cost and 
the edge $f_2$ of lowest cost that is of color different to $f_1$.  We then repeat this categorical top two query twice more, each time excluding all previously found edges.  At the end we obtain edges 
$f_1, \ldots, f_6$.  We claim that for every $e$, at least one of these must be a valid partner.  

Consider any particular $e$.
The first categorical top two query can only fail to find a valid partner for $e$ if one of $f_1, f_2$ creates a trivial cut with $e$.  In this case, the second categorical top two query 
can only fail if one of $f_3,f_4$ creates a trivial cut with $e$ as well.  By \cref{clm:trivial}, however, there are at most two possible edges that can create a trivial cut with $e$, thus in this case the 
third categorical top two query must succeed and we find a valid partner for $e$.
\end{proof}

\subsubsection{Non-empty lists}
The more difficult case is to find partners among pairs $h,h'$ with $L(h,h')$ non-empty.  To solve this case 
we will use the special structure of $\M$.  As above, say that $h$ is the path $u_1-u_2-\cdots-u_q$ and $h'$ is the 
path $v_{1}-v_{2}-\ldots v_{q'}$, and let $e_i = (u_i, u_{i+1})$ for $i = 1, \ldots, q-1$ and 
$f_i = (v_i, v_{i+1})$ for $i = 1, \ldots, q'-1$.  Further suppose $e_i,f_j$ are independent for all $p\le i < q, p' \le j < q'$.  
We have that $\M[i,j] = \cost(e_i) + \cost(f_j) -2w(e_i^\downarrow,f_j^\downarrow)$ 
for $p\le i < q, p' \le j < q'$.
Recall that $L(h,h')$ is defined precisely as the list of edges that contribute to $w(e^\downarrow,f'^\downarrow)$ for independent $e \in h, f' \in h'$.
The contribution of a specific edge $\{u,v\} \in L(h,h')$ can be understood as follows: let $u_{i}$ be the lowest common ancestor of $u$ and $u_{q}$, and $v_{j}$ be the lowest common ancestor of $v$ and $v_{q'}$.
Then the weight of $\{u,v\}$ contributes to $M[a,b]$ for every $p \le a \leq i$, $p' \leq b \leq j$.
This is depicted in \cref{fig:rectangle}.  We will compute these indices $i$ and $j$ for every $\{u,v\} \in L(h,h')$.
This takes constant time per edge using an appropriate LCA structure~\cite{BenderF00}, and so total time $\Oh(|L(h,h')|)$.
Let $\cL(h,h') = \{(i,j) \mid \{u,v\} \in L(h,h') , u_i = \lca(u,u_q), v_j = \lca(v,v_{q'})\}$ denote the resulting list of index pairs, each of which has an associated weight.

\begin{figure}[h]
\includegraphics[width=\textwidth]{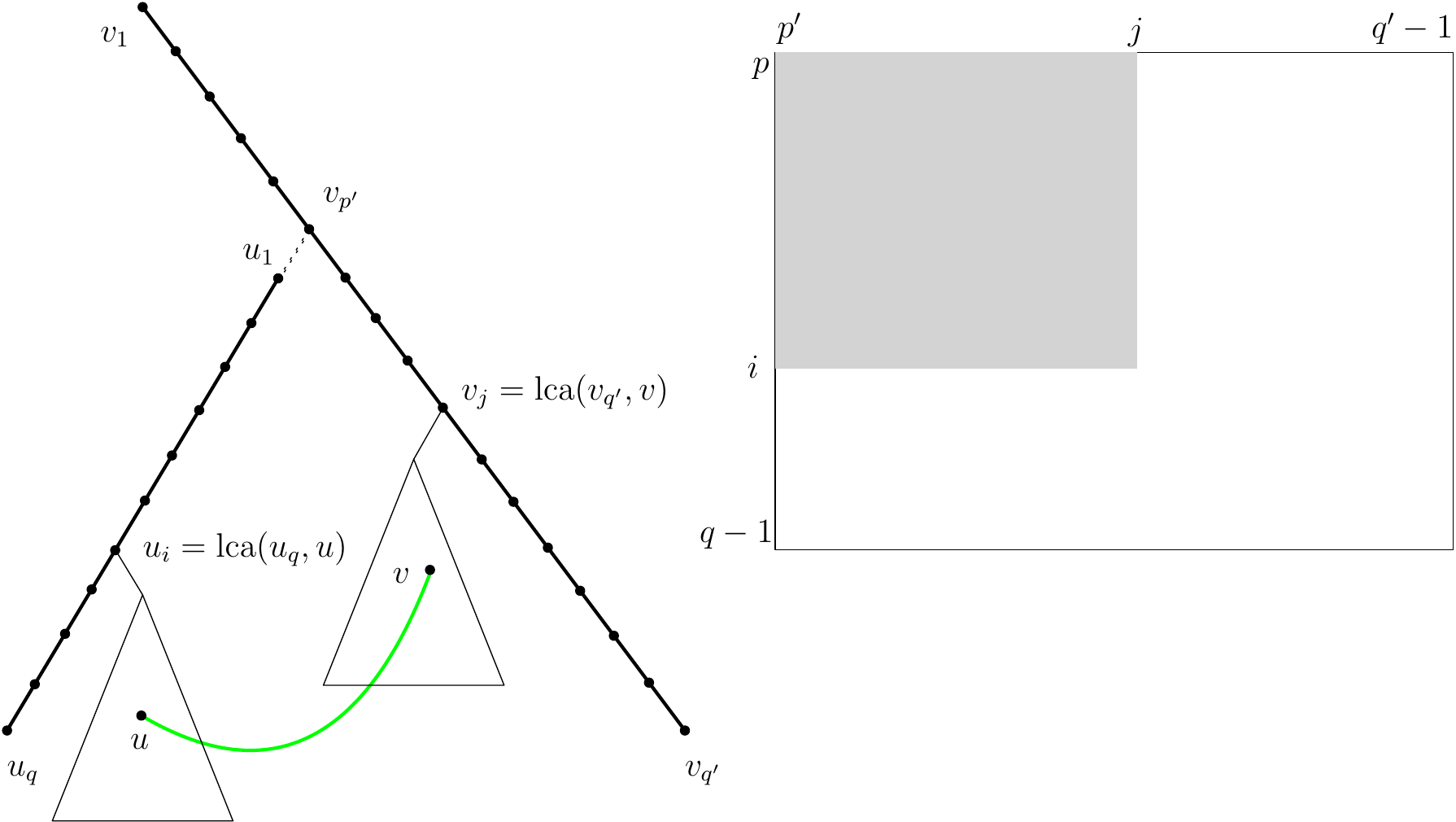}
\caption{Contribution of an edge $\{u,v\} \in L(h,h')$ (denoted in green on the left) to $M^{(h,h')}[\cdot,\cdot]$ (denoted in grey on the right).}
\label{fig:rectangle}
\end{figure}

\begin{lemma}
\label{lem:ind_non-empty}
Let $\cF = \{ e \mid \exists h,h',f: e \in h, f \in h', e,f \text{ are partners and } L(h,h') \text{ non-empty}\}$.  There is a deterministic algorithm 
to find a partner for every $e \in \cF$ in time $\Oh(m \log^3 n)$.
\end{lemma}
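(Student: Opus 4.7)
The plan is to process, for each pair of heavy paths $(h,h')$ with $L(h,h') \ne \emptyset$, a sweep that emits $\Oh(1)$-size ``range candidate'' partners tagged by intervals of $h$, and then, for each heavy path $h$, aggregate all candidates (across all $h'$) via an offline segment tree to choose the best partner for every $e_i \in h$. Since any independent pair $e \in h, f \in h'$ must satisfy $h \ne h'$ and, by \cref{clm:trivial} together with our choice of a degree-one root, cannot give rise to a trivial cut, this is the only remaining case to handle and no extra filtering is needed.

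For the sweep I would build, for each $h'$, a data structure $D_{h'}$ from \cref{lem:top2path} indexed by positions $j \in [p', q'-1]$ with initial score $\cost(f_j)$ (precomputed using \cref{lem:one_respect}) and color $\edgecolor(f_j)$. Before processing each $h$ with non-empty $L(h,h')$, I would inject the contributions of $L(h,h')$ into $D_{h'}$ by subtracting $2w(\{u,v\})$ from the prefix $[p',b]$ for every $\{u,v\} \in L(h,h')$ with $h'$-lca-index $b$, establishing the invariant $D_{h'}[j] = \cost(f_j) - 2 w(e_p^\downarrow, f_j^\downarrow)$ and hence $\cost(e_p,f_j) = \cost(e_p) + D_{h'}[j]$. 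I would then sweep $i$ through the distinct $h$-lca-indices appearing in $\cL(h,h')$ in increasing order. Between consecutive events the scores in $D_{h'}$ are stationary, so a single $\textsc{CatTopTwo}$ query on $[p',q'-1]$ returns a pair $(j^*, j^{**})$ of distinct colors that together dominate every $e_i$ in the current interval regardless of $\edgecolor(e_i)$; I emit this pair (with their scores) as a range candidate tagged by that interval. Crossing an event value $a$ is implemented by undoing the $-2w$ contributions of edges whose $h$-lca-index equals $a$, and once I finish with $h$ I also undo the remaining updates so that $D_{h'}$ returns to its initial state for reuse with the next paired heavy path.

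For the aggregation I would build, for each heavy path $h$, a segment tree $B_h$ over positions of $h$, interpret each emitted range candidate as two triples (interval, score, color), and insert each triple into the $\Oh(\log n)$ canonical segment-tree nodes covering its interval. At every node I would precompute the categorical top two of the triples stored there. To answer the query at $e_i$, I would walk from the root of $B_h$ down to the leaf $i$ and combine the stored top-twos along the path into one overall categorical top two $(s_1,c_1,s_2,c_2)$; the best valid partner is whichever of $s_1,s_2$ has color different from $\edgecolor(e_i)$, and I output the corresponding edge whenever $\cost(e_i) + s \le \beta$. By \cref{lem:lists} the total number of $D_{h'}$ operations in the first phase and of segment-tree insertions in the second phase is $\Oh(\sum |L(h,h')|) = \Oh(m \log^2 n)$; each costs $\Oh(\log n)$, so together with the $\Oh(n \log n)$ cost of the final $n$ queries the algorithm runs in $\Oh(m \log^3 n)$.

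The main obstacle is justifying that the two candidates emitted per event interval are enough to answer the color-constrained query after aggregation. This relies crucially on the semantics of $\textsc{CatTopTwo}$ in \cref{lem:top2path}, which guarantees that the returned pair has distinct colors so that at least one of them survives any fixed $\edgecolor(e_i)$ constraint. A secondary concern is the bookkeeping to make each $D_{h'}$ genuinely reusable: I would bucket $\cL(h,h')$ by $h$-lca-index ahead of time and maintain, for each $h'$, an explicit undo log so that the total setup cost over all $h'$ remains $\Oh(n)$ and no resetting pass is required.
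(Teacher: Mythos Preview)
Your proposal is correct and achieves the stated $\Oh(m\log^3 n)$ bound, but it takes a genuinely different route from the paper's proof.

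The paper processes each ordered pair $(h,h')$ by sweeping over the \emph{$h'$-side} void intervals while maintaining $A_h[i]=\M[i,j]-\cost(f_j)$, and for each void interval it couples a \textsc{CatTopTwo} on $A_{h'}$ (to pick two candidate $f$'s) with repeated \textsc{CatTopTwo} queries on $A_h$ together with \emph{deactivation}: whenever some $e\in h$ is matched it is removed by adding $\beta+1$ to its score, and one keeps querying until no new $e$ is matched. The global bound on work comes from the fact that each $e$ is deactivated at most once across all $h'$, so the inner while-loop costs $\Oh(n\log n)$ in total; there is no second aggregation pass.

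Your approach is the ``transposed'' offline variant: you sweep over the \emph{$h$-side} event indices while maintaining $D_{h'}[j]=\cost(f_j)-2w(e_i^\downarrow,f_j^\downarrow)$, emit a single \textsc{CatTopTwo} pair per stationary interval as a range candidate, and then resolve all edges of $h$ simultaneously with an interval-stabbing segment tree over $h$. This cleanly decouples the two phases and avoids the deactivation trick entirely, at the price of a second pass and $\Oh(m\log^3 n)$ stored triples. Both are valid and hit the same bound; the paper's version is a one-pass online algorithm with a slightly more delicate amortization argument, while yours trades that for a conceptually simple offline reduction. Two small points worth making explicit in a write-up: you must iterate over \emph{ordered} pairs (both $(h,h')$ and $(h',h)$) so that edges of $h'$ also get candidates, and your sweep should emit a candidate for every maximal stationary interval of $i\in[p,q-1]$, including the first $[p,I_1]$ and the last $[I_s{+}1,q-1]$ (the latter is also covered by \cref{lem:empty}, but the lemma as stated promises a partner for every $e\in\cF$, so it is cleanest to include it).
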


\begin{proof}
The algorithm is given in \cref{alg:independent-non-empty}.  We describe the algorithm here and analyze its correctness and running time.

For every heavy path $h$ let $A_h$ be an array with $A_h[e].\score = \cost(e)$ and $A_h[e].\edgecolor = \edgecolor(e)$ for every $e \in h$.  Via \cref{lem:top2path} there is a data structure 
that supports path updates and \textsc{CatTopTwo} queries to $A_h$ in $\Oh(\log n)$ time.  The total time for this initialization step is $\Oh(n)$.  

Let $L$ be an ordered list of pairs that contains $(h,h')$ and $(h',h)$ for every $h,h'$ with $L(h,h')$ non-empty.  We sort $L$ by the name of the first path with radix sort in $\Oh(m \log^2 n)$ time.  
We will follow $L$ to iterate over all $h,h'$ with $L(h,h')$ non-empty.

Let us describe what the algorithm does when considering pairs $h,h'$ where $h$ consists of edges $e_1, \ldots, e_q$ and $h'$ consists of the edges $f_1, \ldots, f_{q'}$, where 
edges $e_i,f_j$ are independent for $p \le i < q, p' \le j < q'$, and these comprise all the independent pairs in $h,h'$.  
We iterate over the columns of $\M$, starting from $q'-1$ and going until $p'$, and maintain the invariant that, when considering column $j$, it holds that $A_h[i].\score = \M[i,j] - \cost(f_j)$ for every \emph{active} edge with index $p \le i < q$ (where active will be defined later).  We postpone describing how to maintain this invariant for the moment.  Then we do a \textsc{CatTopTwo} query on $A_h$ which returns potential candidates $e_a, e_b$.  If there is an edge $e \in h$ for which $f_j$ is a valid partner then 
$f_j$ must be a partner for either $e_a, e_b$.  This can be checked in constant time.  If $f_j$ is not a partner for either then we move on to column $j-1$; if it is a partner for, say $e_a$, then 
we add $\beta+1$ to $A_h[a].\score$ to ``de-activate'' $e_a$ and repeat the process by doing a \textsc{CatTopTwo} query again on $A_h$ until no valid partner is returned.

The basic algorithm we have described considers every column of $\M$ from $q'-1$ to $p'$.  We now show how to accelerate this algorithm by restricting our attention to a subset of the columns of $\M$ in this interval.  
Let $K_{h,h'} = |L(h,h')|$.  
We sort the pairs in $\cL(h,h')$ by the second coordinate in time $\Oh(K_{h,h'}\log K_{h,h'})=\Oh(K_{h,h'}\log n)$.
Let $J_1 < \cdots < J_t$ be the distinct values of the second coordinate that appear in this sorted list, where $t \le K_{h,h'}$.  Set $J_0 = p'-1$ and $J_{t+1} = q'-1$.  For $J_k < j \le J_{k+1}$ we have 
that $w(e_i^\downarrow, f_j^\downarrow)$ is constant over $j$ for every $p \le i < q$ by the definition of $\cL(h,h')$.  We call such an interval a void interval.  Thus the minimum of $\M[i,j]$ over $J_k < j \le J_{k+1}$ necessarily 
occurs at $j^* = \argmin_{J_k < j \le J_{k+1}} \cost(j)$.  This means that if edge $e_i \in h$ has a partner $f_j$ with $J_k < j \le J_{k+1}$ then one of the two edges returned by a \textsc{CatTopTwo}$(J_k+1, J_{k+1})$ to $A_{h'}$ 
must be a partner for $e_i$.   

We can thus amend the algorithm to the following.  For $J = J_{t+1}, J_t, \ldots, J_1$ we iterate over the endpoints of the void intervals.  When $J= J_k$ we maintain the invariant that $A_h[i].\score = \M[i,J_k] - \cost(f_{J_k})$.  
Thus for all $J_{k-1} < j \le J_{k}$ it holds that $\M[i,j] = A_h[i].\score + \cost(f_j)$.  We then do a \textsc{CatTopTwo} query on $A_h[p:q-1]$ and a \textsc{CatTopTwo} query on $A_{h'}$ with the interval $(J_{k-1}, J_k]$.  If any $e \in h$ has a 
partner $f_j$ with $J_{k-1} < j \le J_{k}$ at least one of the 4 possible pairs returned must be partners.  We de-activate any $e \in h$ which finds a partner by adding $\beta+1$ to its score and repeat the process until no valid partners 
are found, at which point we move on to the next void interval.  If $P$ partners are found then the total time spent in this void interval will be $\Oh((P+1) \log n)$ for the \textsc{CatTopTwo} queries and updates to de-activate edges.   

It remains to describe how to maintain the invariant $A_h[i].\score = \M[i,J_k] - \cost(f_{J_k})$ for all $p \le i < q-1$ when $J=J_k$.  To do this for every pair $(i,j)$ in $\cL(h,h')$ with $j = J_k$ with associated weight $w$ we 
subtract $2w$ from $A_h$ in the interval $[p,J_k]$.  Each such interval update can be done in time $\Oh(\log n)$ by \cref{lem:top2path} so the total time for all updates is $\Oh(K_{h,h'} \log n)$.  

Once we finish processing $h'$, we reverse all of the interval updates (but not the edge de-activations) so that we again have $A_h[i].\score = \cost(e_i)$ for all $i$ for all active edges $e_i$.  
This again can be done in time $\Oh(K_{h,h'} \log n)$.  Once we finish processing all pairs $h'$ associated $h$ we subtract $\beta + 1$ from $A_h[e].\score$ for all edges $e$ that were de-activated 
to make them active again.

The total number of edge de-activations is at most $n$ thus this contributes $\Oh(n \log n)$ to the running time and is low order.  
Over all $h,h'$ the total time spent is $\Oh(m \log^3 n)$.  
\end{proof}

\begin{algorithm}[!htbp]
\caption{Find partners among non-empty lists}
\label{alg:independent-non-empty}
\begin{algorithmic}[1]
\ForAll{heavy paths $h$}
  \State
  Initialize data structure $A_h$ with scores $\cost(e)$ and colors $e.\edgecolor$ for all $e \in h$.
\EndFor
\State Compile a list $L$ of all ordered pairs $(h,h')$ and $(h',h)$ with $L(h,h')$ non-empty.  Sort by the name of the first path.
\ForAll{$h$ that appears as a first path in $L$}
  \ForAll{$h'$ paired with $h$ in $L$}
    \State
    Compute the index list $\cL(h,h')$ with associated weights.
    \State
    Partition the interval $[p',q'-1]$ into void intervals $(J_k, J_{k+1}]$ for $k=0,\ldots, t$.
    \State Set $\Found=\emptyset$.
    \ForAll{$k=t$ to $0$} 
      \State 
      For all $(i',J_k) \in \cL(h,h')$, subtract twice its weight from $A_h[p:i'].\score$.
      \State
      Call \textsc{CatTopTwo}($J_k+1,J_{k+1}]$) on $A_{h'}$ to obtain edges $f_c, f_d$.
      \While{\True}
        \State Call \textsc{CatTopTwo}($p,q-1)$ on $A_{h}$ to obtain edges $e_a,e_b$.
        \If{$f_c$ or $f_d$ is a partner for $e_a$} 
        		\State Save this pair, add $e_a$ to $\Found$, and do $A_h[e_a].\score \pluseq \beta+1$.
        \ElsIf{$f_c$ or $f_d$ is a partner for $e_b$} 
        		\State Save this pair, add $e_b$ to $\Found$ and do $A_h[e_b].\score \pluseq \beta+1$.
        \Else \State break
        \EndIf
      \EndWhile
    \EndFor
    \ForAll{$k = t$ to $0$}
    \State
    For all $(i',J_k) \in \cL(h,h')$, add twice its weight to $A_h[p:i'].\score$.
    \EndFor
  \EndFor
  \State Subtract $\beta+1$ from the score of all edges in $\Found$.
\EndFor
\end{algorithmic}
\end{algorithm}

\subsection{Spanning tree algorithm}
We now have all components of the spanning tree algorithm, which we can combine together to implement $\Round$.

\begin{lemma}
\label{lem:round_imp}
There is a deterministic algorithm to implement $\Round$ which runs in time $\Oh(m \log^3 n)$.  
\end{lemma}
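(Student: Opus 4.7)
The plan is to implement $\Round$ by running, in sequence, the three subroutines developed in the previous subsections and merging their outputs. Concretely: first I would invoke the algorithm of \cref{thm:descendant} to find, for every $e \in E(T)$, a candidate partner $f$ with $e \in T_f$ or $f \in T_e$; second, I would invoke the algorithm of \cref{lem:empty} to find, for every $e$, a candidate partner $f$ such that $e,f$ are independent and lie on heavy paths $h,h'$ with $L(h,h')$ empty; third, I would build all non-empty lists $L(h,h')$ using \cref{lem:lists}, and then run the algorithm of \cref{lem:ind_non-empty} to find, for every $e$, a candidate partner $f$ such that $e,f$ are independent and lie on heavy paths $h,h'$ with $L(h,h')$ non-empty. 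For each $e$, if any of the three subroutines returned a candidate, output one such candidate as the partner; otherwise report that no partner exists.

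For correctness, suppose that $e$ admits some partner $f$. Either $f$ is a descendant or ancestor of $e$ in $T$, in which case \cref{thm:descendant} is guaranteed to return a valid partner for $e$; or $e,f$ are independent, so they belong to distinct heavy paths $h \ni e$ and $h' \ni f$, and depending on whether $L(h,h')$ is empty or not one of \cref{lem:empty} or \cref{lem:ind_non-empty} is guaranteed to produce a valid partner for $e$. Conversely, any edge returned by one of the three subroutines is by construction a valid partner, since each of them only outputs pairs $\{e,f\}$ with $\cost(e,f) \le \beta$, $\edgecolor(e) \ne \edgecolor(f)$, and $\cut(e,f)$ non-trivial (trivial cuts among descendant pairs are excluded explicitly in \cref{alg:descendant} via the $\beta+1$ score penalty on the unique problematic edge identified by \cref{clm:trivial}, and the choice of a degree-$1$ root rules out trivial cuts among independent pairs).

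For the running time, computing $\cost(e)$ for all $e$ (used by all three subroutines) takes $\Oh(m+n)$ time by \cref{lem:one_respect}; the heavy path decomposition of $T$ can be computed in $\Oh(n)$ time; \cref{thm:descendant} runs in $\Oh(m \log^2 n)$ time; \cref{lem:empty} runs in $\Oh(m+n)$ time; \cref{lem:lists} builds all non-empty $L(h,h')$ in $\Oh(m\log^2 n)$ time; and \cref{lem:ind_non-empty} runs in $\Oh(m \log^3 n)$ time. Combining and taking the dominant term, the total time is $\Oh(m \log^3 n)$.

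Since all hard work has been encapsulated in the previous lemmas, there is no real obstacle here; the only thing to verify is that the three subroutines together cover every possible type of partner. This is immediate from the case analysis above: partners are either descendant/ancestor or independent, and in the latter case the pair of heavy paths containing $\{e,f\}$ either has an empty or a non-empty interaction list.
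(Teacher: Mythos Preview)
Your proposal is correct and follows essentially the same approach as the paper: you combine the three subroutines from \cref{thm:descendant}, \cref{lem:empty}, and \cref{lem:ind_non-empty} to cover the descendant case and the two independent subcases (empty versus non-empty $L(h,h')$), and you correctly identify \cref{lem:ind_non-empty} as the $\Oh(m\log^3 n)$ bottleneck. Your write-up is, if anything, slightly more explicit than the paper's in justifying exhaustiveness of the case split and validity of returned pairs.
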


\begin{proof}
Given an assignment of colors to the edges of $T$, our task is to find a partner for every $e \in E(T)$ which has one.  If $e$ has a partner 
$f$ such that $e,f$ are in a descendant relationship then a partner for $e$ can be found in time $\Oh(m \log^{2} n)$ by \cref{thm:descendant}.  
The other case is that $e$ has a partner $f$ such that $e,f$ are independent.  This divides into two subcases.  If the heavy paths $h,h'$ 
containing $e,f$ respectively are such that $L(h,h')$ is empty then we will find a partner for $e$ via \cref{lem:empty} in time $\Oh(m)$.  
The bottleneck of the algorithm is where $L(h,h')$ in non-empty in which case we use \cref{lem:ind_non-empty} to find a partner in 
time $\Oh(m \log^3 n)$.  
\end{proof}

We can now prove the main theorem of this section, \cref{thm:sp-tree}, that we can find a spanning tree of $H$ in time $\Oh(m \log^4 n)$.
\begin{proof}[Proof of \cref{thm:sp-tree}]
Follows from \cref{lem:boruvka} and \cref{lem:round_imp}.
\end{proof}

\section{KT partition algorithm} \label{sec:algorithm}
For completeness we state here the full KT partition algorithm, including the reductions from \cite{ApersL20}.  At a high level, we follow Karger's algorithm to find 
$\Oh(\log n)$ spanning trees so that with high probability every $(1+\eps)$-minimum cut 2-respects at least one of them.  We then use our algorithm from \cref{thm:sp-tree}
to, for each tree $T$, find a generating set for the meet of all $(1+\eps)$-minimum cuts that 2-respect $T$.  We are then left with two problems.  
The first is that we still have to find the meet of the partitions in the generating set.  A near-linear time randomized algorithm was given to do this in 
\cite{ApersL20}.  Here we give a deterministic algorithm to do this.  Then we need to take the meet of $\Oh(\log n)$ partitions, one for each tree.  This is 
simple to do and we handle this first.

\begin{lemma}
\label{lem:par_meet}
Let $\{\cS_1, \ldots, \cS_K\}$ be a set of $K$ partitions of $[n]$.  There is a deterministic algorithm to compute $\bigwedge_{i=1}^K \cS_i$ in time $\Oh(Kn \log n)$.
\end{lemma}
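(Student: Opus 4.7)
The plan is to reduce the problem to repeated pairwise meets. First I would preprocess each $\cS_i$ into a labeling $\ell_i : [n] \to [n]$ with the property that $\ell_i(a) = \ell_i(b)$ iff $a$ and $b$ lie in the same part of $\cS_i$; reading off such a labeling from an explicit list-of-parts representation takes $\Oh(n)$ time per partition, so $\Oh(Kn)$ overall for preprocessing. The core subroutine is then to compute $\cX \wedge \cY$ given labelings $\ell_\cX, \ell_\cY$ in $\Oh(n \log n)$ time, using the observation that $a$ and $b$ lie in the same part of $\cX \wedge \cY$ iff the ordered pair $(\ell_\cX(a), \ell_\cY(a))$ equals $(\ell_\cX(b), \ell_\cY(b))$. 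I would form these $n$ pairs, sort them lexicographically in $\Oh(n \log n)$ time, and make a single linear scan to group runs of equal pairs as the parts of the meet, simultaneously producing a labeling of the result in $[n]$ suitable for the next iteration.

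Given this subroutine, the main algorithm is a straightforward sweep. Initialize $\cT \gets \cS_1$ together with its label array, and for $i = 2, \ldots, K$ replace $\cT$ by $\cT \wedge \cS_i$ using the subroutine. By associativity of $\wedge$, after $K-1$ iterations we have $\cT = \bigwedge_{i=1}^K \cS_i$. The total running time is $\Oh(Kn)$ for preprocessing plus $\Oh((K-1) \cdot n \log n) = \Oh(Kn \log n)$ for the $K-1$ pairwise meets, matching the claimed bound.

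There is no substantive obstacle: correctness follows immediately from the characterization that two elements are in the same part of a meet of partitions iff they are in the same part of every constituent partition, and the sorting-based pairwise subroutine is standard. If one wanted to shave the logarithmic factor, one could radix-sort the integer pairs in $[n]^2$ in $\Oh(n)$ time per iteration to obtain $\Oh(Kn)$, but the stated $\Oh(Kn \log n)$ bound is simpler to state and already suffices for the application, where it will be invoked with $K = \Oh(\log n)$.
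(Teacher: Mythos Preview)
Your proof is correct and rests on the same idea as the paper's: characterize membership in a part of the meet by a tuple of labels, one per partition, and group elements by sorting. The only cosmetic difference is that the paper forms the full $K$-tuple key $(\ell_1(j),\ldots,\ell_K(j))$ for each $j$ and sorts once on $\Oh(K\log n)$-bit keys, whereas you fold the partitions in one at a time via $K-1$ pairwise meets; both routes yield the stated $\Oh(Kn\log n)$ bound.
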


\begin{proof}
In $\Oh(Kn \log n)$ time we can assign each $j \in [n]$ a $\Oh(K \log n)$ bit key indicating which set contains $j$ in each $\cS_i$.  
Collecting together elements with the same key gives $\bigwedge_{i=1}^K \cS_i$. 
\end{proof}

Next we see how to explicitly construct the meet of all $(1+\eps)$-minimum cuts that 2-respect a tree from a generating set.
We follow the idea of the proof in \cite{ApersL20} but make it deterministic by replacing random hashing with an appropriate data structure.

\begin{lemma}[Mehlhorn, Sundar, and Uhrig~\cite{MehlhornSU97}]
\label{lem:strings}
A dynamic family of persistent sequences, each of length at most $n$, can be maintained under the following
updates. A length-1 sequence is created in constant time, and a new sequence can be obtained  by joining and splitting
existing sequences in $\Oh(\log n(\log U\log^{*}U+\log n))$ time, where $U$ is the number of updates executed so far.
Each sequence $s$ has an associated signature $\mathrm{sig}(s)\in [U^{3}]$ with the property that $s=s'$ iff $\mathrm{sig}(s)=\mathrm{sig}(s')$.
\end{lemma}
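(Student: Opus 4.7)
The plan is to represent each sequence by a balanced tree whose \emph{shape} is a deterministic function of the underlying sequence, so that semantically equal sequences always admit identical trees. Each internal node $v$ stores a signature $\mathrm{sig}(v)\in[U^{3}]$, and these signatures are maintained via a global dictionary $D$ which, given a pair of child signatures $(a,b)$, returns the same identifier whenever it has already seen $(a,b)$ and otherwise allocates a fresh one. With this invariant, two subtrees represent equal subsequences iff their root signatures agree, and the signature of a whole sequence can be read off in $\Oh(1)$ time from the root of its tree.

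I would first fix a concrete canonical balanced-tree scheme --- for instance a weight-balanced tree (or a deterministic skip-list variant) in which the rebalancing decisions depend only on signatures already computed for the subtrees involved --- and verify that for any fixed sequence the resulting tree is unique, regardless of the particular history of joins and splits by which that sequence was built. Length-$1$ sequences are handled trivially: allocate a new leaf, ask $D$ for its signature (or create one), all in $\Oh(1)$ time. Then I would implement join and split by path-copying persistence: to join $T_{1}$ and $T_{2}$, walk down the right spine of $T_{1}$ and the left spine of $T_{2}$ to find the canonical merge depth, rebuild the $\Oh(\log n)$ internal nodes along the affected path as fresh copies, and for each of them perform one $D$-operation to assign a signature. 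Splits are symmetric. Since $D$ is maintained over a universe of size $U^{3}$ by a deterministic dynamic dictionary supporting $\Oh(\log U \log^{*} U)$ per operation, the total cost of a single join or split is $\Oh(\log n(\log U \log^{*} U + \log n))$, where the extra $\log n$ accounts for the pure balanced-tree bookkeeping itself.

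The hard part is canonicality: two different sequences of operations that yield the same underlying string must produce \emph{exactly} the same tree, for otherwise two equal sequences could end up with different root signatures and the ``iff'' in the statement would fail. Designing a rebalancing rule that depends only on the signatures (hence ultimately only on the content) of the subsequences involved, and checking that join and split preserve this rule, is the delicate core of the argument; once it is in place, signature maintenance and persistent rebuilding are essentially a routine composition of balanced-tree surgery with deterministic dynamic hashing.
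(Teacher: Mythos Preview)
The paper does not prove this lemma at all: it is stated as a black-box citation to Mehlhorn, Sundar, and Uhrig~\cite{MehlhornSU97} and used without further justification. So there is no ``paper's own proof'' to compare against.

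That said, your sketch is broadly faithful to how the cited result actually works. The MSU construction does indeed maintain a canonical balanced-tree representation whose shape depends only on the sequence content, and assigns signatures bottom-up via a deterministic dictionary keyed on pairs of child signatures. Your identification of canonicality as the crux is exactly right: the technical heart of \cite{MehlhornSU97} is designing rebalancing rules for join and split that are history-independent, so that any two derivations of the same string produce the same tree. You correctly leave this as the ``hard part'' rather than claiming it is routine; filling it in would essentially require reproducing the MSU paper. The $\log U\log^{*}U$ term in the bound comes, as you say, from the deterministic dynamic dictionary used for signature lookup, and the outer $\log n$ from the number of nodes touched per join or split.

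One small inaccuracy: length-$1$ sequence creation is constant time in the statement, so the leaf signature must be assignable in $\Oh(1)$ rather than via a full dictionary operation; in the actual construction this is handled by the alphabet being fixed (or by direct addressing on the symbol).
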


For the proof of the lemma it will be useful to use the following definition.
\begin{definition}[separate]
Let $V$ be a finite set and $X \subseteq V$.  For $u,v \in V$ we say that $X$ separates $u,v$ if exactly one of 
them is in $X$.
\end{definition}

\begin{lemma}[cf.\ {\cite[Lemma 31]{ApersL20}}] \label{lem:euler-tour}
Consider as input a tree $T$ on a vertex set $V$ of size $n$, and sets of edge singletons $\cC_1 \subseteq E(T)$ and edge pairs $\cC_2 \subseteq E(T)^{(2)}$.
These sets define sets of 1-respecting and 2-respecting cuts, respectively.
There is an algorithm that in time $\Oh((n+|\cC_1| + |\cC_2|)\log^{2}n\log^{*}n)$ returns the meet of the bipartitions induced by these cuts.
\end{lemma}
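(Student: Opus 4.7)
The plan is to give each vertex a small canonical fingerprint so that $u,v$ have equal fingerprints iff they lie in the same part of the meet, and then group by fingerprint. Root $T$ at an arbitrary vertex $r$, and for $e\in E(T)$ let $b_{e}(v):=\mathbf{1}[v\in e^{\downarrow}]$. A short case analysis shows that the shore-indicator of a cut $\{e\}\in\cC_{1}$ is $b_{e}(v)$, and the shore-indicator of a cut $\{e,f\}\in\cC_{2}$ is $b_{e}(v)\oplus b_{f}(v)$ regardless of whether $e,f$ are descendant or independent (in the descendant case the shore $e^{\downarrow}\setminus f^{\downarrow}$ is carved out correctly because $f^{\downarrow}\subseteq e^{\downarrow}$; in the independent case the shores $e^{\downarrow}$ and $f^{\downarrow}$ are disjoint). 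Hence the label vector $\ell(v)\in\{0,1\}^{K}$ with $K:=|\cC_{1}|+|\cC_{2}|$ obtained by evaluating these expressions for every cut satisfies $\ell(u)=\ell(v)$ iff $u,v$ are on the same side of every cut, i.e.\ iff they lie in the same part of the meet.

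I would maintain $\ell$ incrementally along a depth-first traversal of $T$. First, enumerate the cuts and, for every $e\in E(T)$, collect the list $c(e)$ of cut indices containing $e$; the total size is $\sum_{e}|c(e)|=|\cC_{1}|+2|\cC_{2}|$, so this costs $\Oh(|\cC_{1}|+|\cC_{2}|)$. The key invariant is that $\ell(r)=0^{K}$ and that when the traversal crosses an edge $e$ the coordinates of $\ell$ that flip are exactly those in $c(e)$: flipping $b_{e}$ flips $b_{e}$ itself and also flips $b_{e}\oplus b_{f}$ for every 2-respecting cut $\{e,f\}$ in $c(e)$. I represent $\ell$ as a persistent sequence of single-bit tokens via \cref{lem:strings}; a single coordinate flip is two splits (to isolate the target token), the creation of a one-token sequence for the opposite bit, and two joins, each costing $\Oh(\log n(\log U\log^{*}U+\log n))=\Oh(\log^{2}n\log^{*}n)$ since $U$ will remain polynomial in $n+|\cC_{1}|+|\cC_{2}|$.

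The DFS then proceeds as follows: when descending an edge $e$, toggle each $i\in c(e)$ and record $\mathrm{sig}(\ell(v))$ at the newly entered child $v$; when ascending $e$, re-toggle the same positions to restore the invariant. The total number of toggles is $\Oh(|\cC_{1}|+|\cC_{2}|)$ and there are $n$ signature queries, so adding the initial $\Oh(K\log^{2}n\log^{*}n)$ needed to build $0^{K}$ from $K$ length-one tokens and a final radix-sort of the pairs $(v,\mathrm{sig}(\ell(v)))$ (each an integer in $[U^{3}]$) to group vertices into parts, the running time is $\Oh((n+|\cC_{1}|+|\cC_{2}|)\log^{2}n\log^{*}n)$ as claimed. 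The main obstacle is simply the derandomization of the corresponding step in \cite{ApersL20}, which used random hashing of label bits; \cref{lem:strings} supplies a deterministic canonical fingerprint under split/join updates, and the identity $\chi=b_{e}\oplus b_{f}$ is what keeps the number of bit-level updates per edge crossing proportional to $|c(e)|$ rather than to the full dimension $K$.
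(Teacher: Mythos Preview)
Your proposal is correct and follows essentially the same approach as the paper: represent each vertex by a length-$K$ bit string recording, for each cut, which shore it lies on; observe that crossing a tree edge $e$ flips exactly the coordinates indexed by cuts containing $e$; maintain these strings deterministically via the persistent-sequence data structure of \cref{lem:strings}; and group vertices by signature with radix sort. The only cosmetic difference is that the paper builds $s(v)$ directly from $s(\mathrm{parent}(v))$ using persistence during a preorder traversal, whereas you keep a single current string and undo the toggles on ascent; this doubles the number of flips but does not change the asymptotics.
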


\begin{proof}
We root the tree at an arbitrary vertex $r \in V$.  When we speak of the shore of a cut we always refer to the shore not containing $r$.  Arrange all elements of $\cC_1$
and $\cC_2$ in an arbitrary order to obtain a sequence of $N=|\cC_1| + |\cC_2|$ cuts.  Our goal is to construct, for each node $v \in V$, a string $s(v) \in \{0,1\}^N$ 
where the $\ith$ bit of $s(v)$ is $1$ iff the shore of the $\ith$ cut contains $v$. 
Assuming that we can indeed efficiently construct such strings, the meet is obtained by grouping together nodes $v$ with the
same string $s(v)$. However, the difficulty is that we cannot afford to construct $s(v)$ for all $v$ explicitly as this would require $nN$ bits. Instead, we will
use \cref{lem:strings} for representing a collection of strings of length $N$.  

Consider the preorder traversal of $T$ starting from the root $r$.  By definition $s(r) = 0^N$, which we create in the data structure by $N$ joins of $0$.  We then 
create $s(v)$ from the string $s(\parent(v))$ during the preorder traversal, where $\parent(v)$ 
is the parent of $v$.  To do this we set $s(v) \gets s(\parent(v))$ and then flip the bits of $s(v)$ corresponding to cuts whose shore contains $v$ but not $\parent(v)$ or vice versa.  
Thus we need to understand when the shore of a 2-respecting cut separates $v$ from $\parent(v)$.  The shore of a one respecting cut defined by edge $e$ is $e^{\downarrow}$,
and hence separates $v$ and $\parent(v)$ iff $e = \{v, \parent(v)\}$.  A 2-respecting cut defined by edges $\{e,f\}$ separates two vertices $u$ and $v$ iff exactly one of $e,f$ 
is on the path from $u$ to $v$ in $T$.  Thus a  2-respecting cut will separate $v$ and $\parent(v)$ iff either $e = \{v, \parent(v)\}$ or $f = \{v, \parent(v)\}$.  Hence there will be 
at most $2N$ bit flips in total and in $\Oh(N)$ time we can annotate the tree with which bits should be flipped at each node.  

A bit flip can be implemented in the data structure by a constant number of splits, joins and the creation of a length-1 sequence.  As there are 
$\Oh(n+N)$ total operations on the data structure, the total time for all updates is $\Oh((n+N) \log^2 n  \log^* n)$ by \cref{lem:strings}. 

Having obtained all the strings $s(v)$, we can group together nodes $v$ with the same string $s(v)$ by sorting their signatures $\mathrm{sig}(s(v))$.
Because each signature is a positive integer bounded by $\Oh(N^{3})$ by \cref{lem:strings}, this can be implemented with radix sort in $\Oh(N)$ time.
This gives the lemma.
\end{proof}

\begin{algorithm}[!htbp]
\caption{$(1+\eps)$-KT partition}
\label{alg:KT-partition}
 \hspace*{\algorithmicindent} \textbf{Input:} A weighted graph $G = (V,E,w)$ an a parameter $0 \le \eps \le 1/16$ \\
 \hspace*{\algorithmicindent} \textbf{Output:} $(1+\eps)$-KT partition
\begin{algorithmic}[1]
\State
Use Karger's tree packing algorithm to construct a set of $K \in \Oh(\log n)$ spanning trees $\cT = \{T_1, \ldots, T_k\}$ so that with high probability every $(1+\eps)$-minimum cut 2-respects at least one of them (\cref{thm:karger1}).
\State
Compute the weight of a minimum 2-respecting cut for each tree in $\cT$ by \cref{lem:mlogn}, and let $\lambda$ be the minimum value found.
\For{$i = 1,2,\dots,K$}
  \State Find the set $\cA_i = \{ e \in E(T_i) \mid \cost(e) \leq (1+\eps) \lambda \}$ indexing the 1-respecting near-minimum cuts of $T_i$ by \cref{lem:one_respect}.
  \State Use \cref{thm:sp-tree} with tree $T_i$ to find a spanning forest $T_{H_i}$ of the graph $H_i$ with edge set $\{ \{e,f\} : e,f \in E(T_i), \cost(e,f) \le \beta, \cut(e,f) \text{ non-trivial} \}$.  Set $\cB_i = \{ \{e,f\} \in E(T_{H_i}) \}$.
  \State Use \cref{lem:euler-tour} to construct the partition $\cS_i$ induced by the cuts indexed by $\cA_i$ and $\cB_i$. \label{line:meetTi}
\EndFor
\State
Output the meet $\cS = \bigwedge_{i=1}^K \cS_i$ by \cref{lem:par_meet}. \label{line:par_meet}
\end{algorithmic}
\end{algorithm}

We are now ready to prove our main theorem.
\main*
\begin{proof}
We first prove the theorem for $\bigwedge \cB_\eps^{nt}$.
The algorithm for computing $\bigwedge \cB_\eps^{nt}$ is given in \cref{alg:KT-partition}.
Let us first argue the correctness.  Step~1 succeeds with high probability by \cref{thm:karger1}, and the rest of the algorithm is deterministic.  Thus if we show that the algorithm is correct 
assuming that Step~1 succeeds, then the algorithm will be correct with high probability.  

Let us now assume that Step~1 succeeds.  Then $\lambda = \lambda(G)$ in Step~2 .  Let 
$\cT_i$ be the set of bipartitions of all non-trivial $(1+\eps)$-minimum cuts of $G$ that 2-respect $T_i$, for $i = 1, \ldots, K$.  We have that $\cup_i \cT_i = \cB_\eps^{nt}$.  Therefore
\[
\bigwedge \cB_\eps^{nt} = \bigwedge_{i=1}^K \bigwedge \cT_i \enspace.
\]
For each $i$ we have $\bigwedge (\cA_i \cup \cB_i) = \bigwedge \cT_i$ by the correctness of our main algorithm \cref{thm:sp-tree} and \cref{lem:ALstree}.  We compute 
$\bigwedge (\cA_i \cup \cB_i)$ via \cref{lem:euler-tour}.  Finally, we compute 
$\bigwedge_{i=1}^K \bigwedge \cT_i$ in Step~\ref{line:par_meet} by \cref{lem:par_meet}.

Now let us go over the time complexity.  Step~1 runs in time $\Oh(m \log^2(n) + n \log^4(n))$ by \cref{thm:karger1}.  Step~2 takes time $\Oh(m \log^2 n)$ by \cref{lem:mlogn}.  In the for loop, Step~4 takes time $\Oh(m)$ by \cref{lem:one_respect}; Step~5 takes time $\Oh(m \log^4 n)$ by \cref{thm:sp-tree}; Step~6 takes time 
$\Oh(n \log^{2}n\log^{*}n)$ by \cref{lem:euler-tour}.  Thus the time in the for loop is dominated by Step~4, and the total time taken over the $K = \Oh(\log n)$ iterations is 
$\Oh(m \log^5 n)$.  The last step takes time $\Oh(n \log^2(n))$.  Thus the complexity overall is $\Oh(m \log^5 n)$.

To finish the proof of the theorem let us handle the case of $\bigwedge  \cB_\eps$.  We claim that given the value of $\lambda(G)$ we can compute $\bigwedge \cB_\eps$ from $\bigwedge \cB_\eps^{nt}$ deterministically 
in $\Oh(m)$ time.  In $\Oh(m)$ time we can identify 
the set $Z = \{v \in V : \Delta_G(\{v\}) \le (1+\eps) \lambda(G) \}$.  Let $\cD_\eps = \{ \{v, V \setminus v\} : v \in Z\}$ be the corresponding set of bipartitions and note that 
$\bigwedge  \cB_\eps = \left(\bigwedge \cB_\eps^{nt} \right) \wedge \left( \bigwedge \cD_\eps \right)$.  The meet $\bigwedge \cD_\eps$ is simply the partition consisting 
of the sets $\{v\}$ for $v \in Z$ and $V \setminus Z$.  To take the meet of this partition with $\cP = \bigwedge \cB_\eps^{nt}$ we simply cycle through each $S \in \cP$ 
and split $S$ into the sets $\{v\}$ for $v \in S \cap Z$ and $S \setminus Z$, which can be done in time $\Oh(n)$.  Thus the total time of computing  $\bigwedge  \cB_\eps$ is 
dominated by the computation of $\bigwedge \cB_\eps^{nt}$, and can be done asymptotically in the same time.
\end{proof}

\section{Applications}
In this section we give two applications of our main result: an improved quantum algorithm for minimum cut in weighted graphs in the adjacency list model, and a new 
randomized algorithm with running time $\Oh(m + n \log^6 n)$ to compute the edge connectivity of a simple graph.

\subsection{Quantum algorithm for minimum cut in weighted graphs} \label{sec:quant-algo}
In a recent work by Apers and Lee \cite{ApersL20} the quantum complexity of the minimum cut problem was studied.
They distinguish two models for querying a weighted graph as an input.
In the \emph{adjacency matrix} model a query is a pair of vertices $i,j \in V$ and the answer to the query reveals whether $\{i,j\} \in E$, and if so, also returns the weight $w(\{i,j\})$.
In the \emph{adjacency list} model a query is a vertex $i \in V$ and an integer $k \in [n]$, and the answer to the query is the $k$-th neighbor $j$ of vertex $i$ (if it exists) and the corresponding weight $w(\{i,j\})$.
The main results from \cite{ApersL20} depend on the edge-weight ratio $\tau$, defined as the ratio of the maximum edge weight over the minimum edge weight.
These results can be summarized as follows:
\begin{itemize}
\item
In the \emph{adjacency matrix model}, finding a minimum cut of a weighted graph with edge-weight ratio $\tau$ has quantum query and time complexity $\widetilde\Theta(n^{3/2} \sqrt{\tau})$.
This compares to the $\Theta(n^2)$ query complexity of any classical algorithm for minimum cut in this model \cite{DHHM06}.
\item
In the \emph{adjacency list model}, finding a minimum cut of a weighted graph with edge-weight ratio $\tau$ requires quantum query complexity $\widetilde\Oh(\sqrt{m n \tau})$ and quantum time complexity $\widetilde\Oh(\sqrt{m n \tau} + n^{3/2})$.
There are also lower bounds of $\Omega(n^{3/2})$ for $\tau > 1$ and $\Omega(\tau n)$ for $1 \leq \tau \leq n$.
This compares to the $\Theta(m)$ query complexity of any classical algorithm for minimum cut in this model \cite{BGMP20}.
\end{itemize}
While this fully resolves the quantum complexity of minimum cut in the adjacency matrix model, there are two apparent gaps in the adjacency list model.
On the one hand there is a gap between the upper and lower bounds on the quantum query complexity.
On the other hand there is a gap between the upper bounds on the quantum \emph{query} complexity and the quantum \emph{time} complexity.
Using our new result (\cref{thm:main}) we can close this second gap.

Let $\kappa(n)$ denote the (quantum) time complexity for finding a $(1+\eps)$-KT partition of a weighted graph with $n$ vertices and $\tOh(n)$ edges.
The following lemma is proven in \cite{ApersL20}.
\begin{lemma}[{\cite[Lemma 22]{ApersL20}}]
Let $G$ be a weighted graph with $n$ vertices, $m$ edges, and edge-weight ratio $\tau$.
There is a quantum algorithm to compute the weight and shores of a minimum cut of $G$ with time complexity $\kappa(n) + \tOh(\sqrt{mn\tau})$ in the adjacency list model.
\end{lemma}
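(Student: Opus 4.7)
The plan is to follow the structure of the adjacency-matrix algorithm of Apers and Lee, using the $(1+\eps)$-KT partition to reduce minimum cut in $G$ to minimum cut in a contracted multigraph $G'$ on $\Oh(n)$ super-vertices that preserves all non-trivial $(1+\eps)$-near minimum cuts. By the observation cited right before the lemma (Lemma 2 of \cite{ApersL20}), the total weight of edges of $G'$ is $\Oh(\tau n)$. Since only non-trivial cuts survive contraction, the algorithm must additionally handle singleton cuts separately.

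First, quantumly build a sparse proxy for $G$ containing enough information to drive the KT partition routine. Use quantum minimum spanning tree computations in the adjacency list to implement Karger's tree packing from \cref{thm:karger1}, producing $\Oh(\log n)$ spanning trees such that every $(1+\eps)$-near minimum cut 2-respects one of them with high probability, and at the same time collect a cut sparsifier of $G$ with $\tOh(n)$ edges. Using weight-bucketing (grouping edges into $\Oh(\log \tau)$ weight classes and applying Grover-based MST/sampling within each bucket), both structures can be produced in $\tOh(\sqrt{mn\tau})$ quantum time, where the $\sqrt{\tau}$ factor is the price of the edge-weight spread across buckets.

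Next, run the $(1+\eps)$-KT partition subroutine on the sparsifier in $\kappa(n)$ time to obtain a partition $\{P_1,\ldots,P_k\}$. Contract $G$ along this partition to form $G'$; determining the multiplicities between super-vertices costs a further $\tOh(\sqrt{mn\tau})$ quantum queries, since the contracted graph has $\Oh(\tau n)$ edges (after discretising weights). Then classically compute the minimum cut of $G'$ in time $\tOh(n)$, for instance by applying Karger's near-linear weighted minimum cut algorithm (\cref{lem:mlogn}) to a sparsification of $G'$. In parallel, find the minimum trivial cut by a quantum minimum-finding procedure over the weighted degrees of $G$: estimating each weighted degree via quantum summation and taking the minimum with Grover costs $\tOh(\sqrt{mn})$ in the adjacency list model. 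Output the smaller of the trivial and the non-trivial cut values together with the corresponding shores.

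The main obstacle is the first step: producing the spanning trees, the sparsifier, and enough data to realize $G'$ using only $\tOh(\sqrt{mn\tau})$ queries in the adjacency list model. This is where the $\sqrt{\tau}$ factor genuinely appears, and it must be controlled by combining quantum minimum-finding inside each of the $\Oh(\log \tau)$ weight buckets with the fact that the total weight of edges relevant for non-trivial near-minimum cuts is $\Oh(\tau n)$, so that the amortised query cost per unit of contracted-graph weight remains $\tOh(\sqrt{m/n})$. Once this reduction is in place, the remaining work on the $\tOh(n)$-edge instance is dominated by $\kappa(n)$, yielding the claimed $\kappa(n)+\tOh(\sqrt{mn\tau})$ bound.
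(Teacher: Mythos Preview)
The paper does not prove this lemma; it is quoted verbatim from \cite[Lemma~22]{ApersL20} and used as a black box, so there is no proof in the present paper to compare your attempt against.

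That said, your high-level plan matches the structure one would expect from \cite{ApersL20}: (i) quantumly construct an $\tOh(n)$-edge sparsifier of $G$, (ii) run the $(1+\eps)$-KT partition routine on it in time $\kappa(n)$, (iii) learn the contracted multigraph $G'$, which by the observation you cite has total weight $\Oh(\tau n)$, and (iv) compute the minimum cut of $G'$ classically and compare with the minimum weighted vertex degree. The $\sqrt{mn\tau}$ term genuinely arises in step~(iii): learning a subgraph with $\Oh(\tau n)$ edges in the adjacency list model via Grover-based neighbour search costs $\tOh(\sqrt{m\cdot \tau n})$.

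Where your sketch is loose is in the details of step~(i) and the handling of trivial cuts. Your description of building the sparsifier and tree packing via ``quantum MST inside $\Oh(\log\tau)$ weight buckets'' is not obviously a complete argument, and the accounting you give (``amortised query cost per unit of contracted-graph weight remains $\tOh(\sqrt{m/n})$'') is heuristic rather than a proof. Similarly, ``estimating each weighted degree via quantum summation'' does not immediately give an exact minimum weighted degree, which is what you need. These are the places where, if you actually wanted to reconstruct the proof of \cite[Lemma~22]{ApersL20}, you would have to supply the real arguments; the present paper does not and need not do so.
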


In \cite{ApersL20} a \emph{quantum} algorithm was proposed for finding the KT partition of a weighted graph with $m$ edges in time $\tOh(m+n^{3/2})$, giving an upper bound $\kappa(n) \in \tOh(n^{3/2})$ and an upper bound $\tOh(\sqrt{mn\tau} + n^{3/2})$ on the quantum time complexity.
Our main result gives a \emph{classical} algorithm that improves this upper bound to $\kappa(n) \in \tOh(n)$, and hence this yields a quantum algorithm for minimum cut with time complexity $\tOh(\sqrt{mn\tau})$.

\begin{corollary}
Let $G$ be a weighted graph with $n$ vertices, $m$ edges, and edge-weight ratio $\tau$.
There is a quantum algorithm to compute the weight and shores of a minimum cut of $G$ with time complexity $\tOh(\sqrt{mn\tau})$ in the adjacency list model.
\end{corollary}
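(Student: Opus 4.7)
The plan is simply to combine our main theorem on computing the classical KT partition with the black-box quantum reduction from Apers and Lee. The lemma quoted immediately above from \cite{ApersL20} asserts that the quantum time complexity of minimum cut in the adjacency list model is $\kappa(n) + \tOh(\sqrt{mn\tau})$, where $\kappa(n)$ is the time complexity of computing a $(1+\eps)$-KT partition of a weighted graph on $n$ vertices with $\tOh(n)$ edges. The previous bottleneck was that the best available bound was $\kappa(n) \in \tOh(n^{3/2})$, obtained via a quantum subroutine, which left the additive $\tOh(n^{3/2})$ term in the overall time complexity.

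First I would invoke \cref{thm:main} to obtain a randomized classical algorithm for the $(1+\eps)$-KT partition that runs in time $\Oh(m \log^5 n)$ with high probability on any weighted graph with $m$ edges (for any $0 \le \eps \le 1/16$, which in particular covers whatever constant $\eps$ is used by the reduction). Applying this algorithm to a graph with $\tOh(n)$ edges immediately gives $\kappa(n) \in \tOh(n)$. Plugging this back into the cited lemma yields total quantum time complexity $\tOh(n) + \tOh(\sqrt{mn\tau})$. Since we may assume $m \ge n-1$ (else the graph is disconnected and the minimum cut is trivial) and $\tau \ge 1$ by definition, we have $n \in \Oh(\sqrt{mn\tau})$, so the bound collapses to $\tOh(\sqrt{mn\tau})$ as claimed.

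The only point requiring any care is verifying that a classical randomized subroutine can legitimately be used inside the otherwise quantum algorithm. This is immediate from the structure of the reduction in \cite{ApersL20}: at the stage where the KT partition is needed, the sparsified $\tOh(n)$-edge graph has already been produced explicitly, so the KT partition step is a purely classical computation on that explicit graph, and \cref{thm:main} applies directly. There is no real obstacle; the corollary is a plug-and-play improvement once \cref{thm:main} is in hand.
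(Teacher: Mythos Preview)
Your proposal is correct and follows essentially the same approach as the paper: invoke the Apers--Lee reduction giving quantum time $\kappa(n) + \tOh(\sqrt{mn\tau})$, plug in $\kappa(n) \in \tOh(n)$ from \cref{thm:main}, and absorb the $\tOh(n)$ term. Your added remarks on why $n \in \Oh(\sqrt{mn\tau})$ and why the classical subroutine slots into the quantum algorithm are reasonable bits of extra justification that the paper leaves implicit.
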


\subsection{Randomized algorithm for edge connectivity} \label{sec:conn-algo}
We can use our algorithm for finding the KT partition of a \emph{weighted} graph to give a randomized algorithm that computes the edge connectivity of a \emph{simple} graph $G$ with high probability in time 
$\Oh(m + n \log^{6} n)$.  For graphs that are not too sparse this equals the best known $\Oh(m + n \log^2 n)$ complexity of the random contraction based algorithm by Ghaffari, Nowicki and Thorup \cite{GNT20}.

Our new algorithm uses the key idea from Kawarabayashi and Thorup \cite{KT19}: (i) find the KT partition of the graph and contract the components of the partition, and (ii) find a minimum cut in the contracted graph.
By definition of the KT partition, this contraction will preserve the set of non-trivial minimum cuts, and so it suffices to find a minimum cut in the contracted graph and the minimum degree 
of a vertex.  Moreover, the contracted graph has only $\Oh(n)$ edges and so we can find a minimum cut in this graph quickly.

Our algorithm follows the same blueprint, except that in order to obtain $\Oh(m)$ leading complexity we first find an \emph{$\eps$-cut sparsifier} $F$ of the input graph, for a small constant $\eps$.
For this step we can use the $\Oh(m)$ sparsification algorithm from Fung, Hariharan, Harvey and Panigrahi \cite[Theorem 1.22]{FHHP19}.  Provided that the sparsification step is successful, any
minimum cut of the original simple graph $G$ will be a $\gamma$-near minimum cut of $F$ for $\gamma = (1+\eps)/(1-\eps) \le 1+ 3\eps$.  Thus if we find a $(1+3\eps)$-KT partition of $F$ and 
contract the sets of resulting partition in $G$ we obtain a multigraph $G'$ which preserves all non-trivial minimum cuts of $G$.  
In this way we only need to find the KT partition of $F$, which has $\Oh(n \log n)$ edges rather than $m$ edges.  On the other hand, 
the sparsifier $F$ will in general be weighted, and hence we cannot run the near-linear time algorithm from \cite{KT19} to find its KT partition.  This is a prime example where 
finding the KT partition of a weighted graph is very useful.  

The next theorem fleshes out this algorithm.  For this, we need the fact that for a simple graph there are only $\Oh(n)$ inter-component edges in a KT partition.  We take the 
version from \cite{ApersL20} which gives an explicit constant in the bound.
\begin{lemma}[{\cite[Lemma 2.6]{RSW18},\cite[Lemma 2]{ApersL20}}]
\label{lem:RSW}
Let $G = (V,E)$ be a simple graph with $|V| =n$.  Let $d = \min_{u \in V} \deg(u)$.  For a nonnegative $\varepsilon < 1$, let 
$\cT = \{X : |X|, |\overline{X}| \ge 2 \mbox{ and } |\Delta_G(X)| \le \lambda(G) + \varepsilon d\}$ and let $G'$ be the multigraph 
formed from $G$ by contracting the sets in $\bigwedge \cT$.  Then 
\[
|E(G')| \le \frac{68 n}{(1-\varepsilon)^2} \enspace .
\]
\end{lemma}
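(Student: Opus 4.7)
The plan is to bound the sum $\sum_{A \in \bigwedge \cT} \deg_{G'}(A) = 2|E(G')|$, where $\deg_{G'}(A)$ counts edges of $G$ with exactly one endpoint in atom $A$. The starting point is the elementary bound $\lambda(G) \le d$ (take the cut isolating the minimum degree vertex), so every cut in $\cT$ has weight at most $(1+\varepsilon)d$. I would also use the fact, established by a greedy argument identical to the one used earlier in the paper for generating sets, that there exists a subfamily $\cT' \subseteq \cT$ of size at most $n-1$ with $\bigwedge \cT' = \bigwedge \cT$; each cut in $\cT'$ contributes at most $(1+\varepsilon)d$ crossing edges, but this naive union bound already gives the very weak estimate $|E(G')| \le (1+\varepsilon)dn$, which is too large.

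To sharpen this to $O(n/(1-\varepsilon)^2)$ I would exploit the structural rigidity of near-minimum cuts. The key inputs are (a) that the minimum degree is at least $d$, so in the contracted graph each atom still has $\deg_{G'}(A) \ge d - \binom{|A|}{2}$-type lower bounds coming from vertices trapped in $A$, and (b) that any two crossing non-trivial $(1+\varepsilon)$-near minimum cuts must have all four ``quadrants'' of roughly balanced weight when $\varepsilon$ is small. Property (b) lets one show that the poset of $(1+\varepsilon)$-near minimum non-trivial cuts decomposes into a laminar skeleton plus a bounded number of ``circular'' families, each of which can be parameterized with $O(n)$ degrees of freedom. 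I would then charge each edge of $G'$ to a pair (atom, position in a laminar/circular family) and argue that each atom receives at most $O(1/(1-\varepsilon)^2)$ charges.

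More concretely, the main step is a \textbf{local density argument}: fix any atom $A$ and a non-trivial near-minimum cut $\{S,\bar S\}$ minimally separating $A$ from some other atom; because both shores have at least two atoms and the cut has at most $(1+\varepsilon)d$ edges, a counting argument (using that removing one edge from such a cut still leaves weight $\ge (1-\varepsilon)d$ on the small side by minimality of $d$) shows $\deg_{G'}(A) \le c/(1-\varepsilon)^2$ for some absolute constant $c$. Summing over atoms and tracking constants carefully — factoring in that each boundary edge is counted twice and that crossing cuts must be handled by a separate ``quadrant'' subcase — yields the stated bound with $c \le 34$, hence the $68$ after doubling.

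The main obstacle I expect is the careful bookkeeping in the quadrant analysis of crossing near-minimum cuts, which is where the $(1-\varepsilon)^{-2}$ dependence arises rather than a naive $(1-\varepsilon)^{-1}$: one must verify that when two near-minimum cuts cross, the resulting four shores each have weight within $(1\pm \varepsilon)$ factors of $\lambda$, and then exploit posimodularity of the cut function to derive the quadratic dependence. Beyond this, everything reduces to standard submodular manipulations of the cut function combined with the simple-graph bound $\lambda \le d$.
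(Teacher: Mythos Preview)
This lemma is not proved in the paper; it is quoted from \cite{RSW18} and \cite{ApersL20} and used as a black box in the edge-connectivity application. So there is no in-paper argument to compare against beyond the citation.

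Your proposal, however, contains a real gap. The central step---the ``local density argument'' asserting $\deg_{G'}(A) \le c/(1-\varepsilon)^2$ for an absolute constant $c$---is false: atoms can have arbitrarily large degree in $G'$. A clean counterexample is a vertex $v$ joined by a single edge to each of $D$ disjoint cliques $K_s$, with $D \ge s \ge 2$. Here $d = s-1$, $\lambda = 1$, the non-trivial cuts in $\cT$ (for small $\varepsilon$) are exactly the $D$ bridges incident to $v$, the atoms are $\{v\}$ together with the $D$ cliques, and $\deg_{G'}(\{v\}) = D$, which is unbounded. Even in the cycle $C_n$ every atom already has $\deg_{G'}(A)=d$, not $O(1)$. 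Summing a per-atom constant over $\le n$ atoms therefore cannot recover the lemma, and the ``$68$ after doubling'' bookkeeping is also inverted: converting $\sum_A \deg_{G'}(A)$ to $|E(G')|$ halves, it does not double.

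The proofs in the cited references do use the simple-graph inequality $|S|(d-|S|+1)\le |\Delta_G(S)|$ (which forces every shore in $\cT$ to have size on the order of $d$) together with the submodular/posimodular uncrossing you allude to, but the $O(n)$ bound on $|E(G')|$ comes from a \emph{global} count over the family of shores, balancing a bound on the number of relevant shores (via the lower bound on shore size) against the $\le (1+\varepsilon)d$ edges each contributes. Neither the number of atoms nor the degree of an individual atom is $O(1)$; it is only their combination that yields the $(1-\varepsilon)^{-2}$ factor. Your sketch does not set up this trade-off and so does not reach the conclusion.
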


\begin{algorithm}[H]
\caption{Randomized algorithm for edge connectivity}
\label{alg:classicalmincut}
 \hspace*{\algorithmicindent} \textbf{Input:} Adjacency list access to a simple graph $G$. \\
 \hspace*{\algorithmicindent} \textbf{Output:} A minimum cut of $G$.
\begin{algorithmic}[1]
\State
Find a vertex $v$ with minimum degree $d_{\min}$.  \Comment{$\Oh(m)$ time.}
\State
Construct an $1/100$-cut sparsifier $F$ of $G$ with $\Oh(n \log n)$ edges. \Comment{$\Oh(m)$ time by \cite{FHHP19}.}
\State
Find the $(101/99)$-KT partition $\mathcal{S} = \{S_1,\dots,S_k\}$ of $F$ using \cref{thm:main}. \Comment{$\Oh(n \log^6 n)$ time.}
\State
Contract the components $S_1,\dots,S_k$ and let $G'$ be the resulting multigraph.  If $G'$ has at most $100n$ edges find a minimum cut $C$ of $G'$, otherwise abort. 
\Comment{Time $\Oh(m+n \log^{2} n)$ using the minimum cut algorithm of \cite{GawrychowskiMW20} from \cref{lem:mlogn}.} 
\State
If $d_{\min} \leq |C|$, return the outgoing edges from $v$.
Otherwise, return $C$.
\end{algorithmic}
\end{algorithm}

\begin{theorem}
Let $G$ be a simple graph with $m$ edges.
There is a classical randomized algorithm that runs in time $\Oh(m + n \log^{6} n)$ and with high probability outputs the edge connectivity of $G$ and a cut realizing this value.
\end{theorem}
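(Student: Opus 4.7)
The plan is to verify the correctness and running time of \cref{alg:classicalmincut}, which follows the Kawarabayashi--Thorup blueprint: contract the KT partition of a cut sparsifier of $G$, then compute the minimum cut of the resulting small multigraph and compare with the minimum degree. Two things need to be checked: that every non-trivial minimum cut of $G$ survives the contraction, and that the contracted multigraph has only $\Oh(n)$ edges so that the final minimum cut computation on it is cheap.

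Condition on the $1/100$-cut sparsifier $F$ being correct, so that $(99/100) w_G(\Delta(S)) \le w_F(\Delta(S)) \le (101/100) w_G(\Delta(S))$ for every $S \subseteq V$. In particular $\lambda(G) \le (100/99) \lambda(F)$, and for any non-trivial minimum cut $X$ of $G$ we have
\[
w_F(\Delta(X)) \le (101/100)\, \lambda(G) \le (101/99)\, \lambda(F),
\]
so $X \in \cB_{2/99}^{nt}(F)$. Since $2/99 \le 1/16$, \cref{thm:main} applies and the $(101/99)$-KT partition $\mathcal{S}$ of $F$ computed in step~3 respects every such $X$. Consequently the multigraph $G'$ preserves every non-trivial minimum cut of $G$, and since $d_{\min}$ equals the weight of every trivial minimum cut, the comparison in step~5 outputs $\lambda(G)$.

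For the edge count of $G'$, the plan is to run the converse inclusion. Any $X \in \cB_{2/99}^{nt}(F)$ satisfies
\[
|\Delta_G(X)| \le (100/99)\, w_F(\Delta(X)) \le (100/99)(101/99)\,\lambda(F) \le (101/99)^{2}\,\lambda(G) = \lambda(G) + \varepsilon_0\, \lambda(G),
\]
where $\varepsilon_0 = (101/99)^2 - 1 < 1/20$. Since $G$ is simple, $\lambda(G) \le d$ for $d$ the minimum degree of $G$, so $X$ lies in the family $\cT$ of \cref{lem:RSW} with parameter $\varepsilon_0$. Thus $\bigwedge\cT$ is a refinement of $\mathcal{S}$, every edge crossing $\mathcal{S}$ also crosses $\bigwedge\cT$, and hence $|E(G')| \le 68n/(1-\varepsilon_0)^2 < 100n$. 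The abort branch in step~4 is therefore only reached when sparsification itself fails, an event of polynomially small probability.

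Summing times: steps~1 and~5 take $\Oh(m)$; step~2 takes $\Oh(m)$ by \cite{FHHP19}; step~3 takes $\Oh(|E(F)|\log^5 n) = \Oh(n\log^6 n)$ by \cref{thm:main}; and step~4 costs $\Oh(m)$ to build $G'$ from $\mathcal{S}$ plus $\Oh(|E(G')|\log^2 n) = \Oh(n\log^2 n)$ to compute its minimum cut via \cref{lem:mlogn}. This gives $\Oh(m + n\log^6 n)$ overall. The only randomized ingredients are the sparsification and \cref{thm:main}, each correct with high probability. The main technical point is the two-sided arithmetic that must be calibrated just so: the sparsifier parameter $1/100$ and the KT-partition parameter $101/99$ are matched so that \emph{both} the forward inclusion (non-trivial minimum cuts of $G$ land inside $\cB_{2/99}^{nt}(F)$) and the backward inclusion (members of $\cB_{2/99}^{nt}(F)$ land inside an RSW family with constant $\varepsilon_0 < 1$) hold simultaneously.
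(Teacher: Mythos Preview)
Your proof is correct and follows the same approach as the paper: condition on the sparsifier succeeding, show non-trivial minimum cuts of $G$ become $(101/99)$-near minimum cuts of $F$ so they survive contraction, invoke \cref{lem:RSW} to bound $|E(G')|$, and tally the running times. In fact you spell out the application of \cref{lem:RSW} more carefully than the paper does, by explicitly arguing the reverse inclusion $\cB_{2/99}^{nt}(F)\subseteq\cT$ and the refinement $\bigwedge\cT\preceq\cS$; the paper simply asserts the $100n$ bound. Two minor points: the abort branch can also be triggered if the randomized KT-partition computation in \cref{thm:main} fails (not only if sparsification fails), and the minimum-cut call in step~4 via \cref{lem:mlogn} is itself randomized; both are still high-probability events, so your conclusion is unaffected.
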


\begin{proof}
The algorithm is given in \cref{alg:classicalmincut}.  The time complexity of each step is given in the comments.  Let us prove correctness.  

The algorithm either outputs a trivial cut or a cut from a contraction $G'$ of $G$.  As contraction cannot decrease the edge connectivity, if the edge connectivity 
of $G$ is realized by a trivial cut the algorithm will be correct.  Let us now assume that the edge connectivity $\lambda(G)$ is realized by a non-trivial cut $C^* = \Delta_G(S)$.  
In step~2 we use the sparsification algorithm of Fung, Hariharan, Harvey and Panigrahi \cite[Theorem 1.22]{FHHP19} to find a $1/100$-cut sparsifier $F = (V, E_F, w_F)$ of $G$, 
which succeeds with high probability.  Thus with high probability $w_F(\Delta_F(S)) \le (1+1/100) \lambda(G)$.  Also with high probability the weight of a minimum cut of $F$ is at least
$(1-1/100)\lambda(G)$, in which case $\Delta_F(S)$ will be a $101/99$-near minimum cut of $F$.  Hence with high probability the $(101/99)$-KT partition of $F$ will be a 
refinement of $\{S, \bar{S}\}$, and in the contraction $G'$ it will hold that $|\Delta_{G'}(S)| = \lambda(G)$ and so the edge connectivity of $G'$ is $\lambda(G)$.  
Further, if $F$ is a valid $1/100$-cut sparsifer of $G$ it will hold that 
$G'$ has at most $100n$ edges by \cref{lem:RSW} and so we can find a minimum cut $C$ of $G'$ in time $\Oh(n \log^2 n)$ using the minimum cut algorithm of \cite{GawrychowskiMW20} 
given in \cref{lem:mlogn}.
Thus in this case with high probability $C$ will be a cut realizing the edge connectivity of $G$ and the algorithm is correct.
\end{proof}

\section{Discussion}
We find the $(1+\eps)$-KT partition of a weighted graph in near-linear time for any $0 \leq \eps \leq 1/16$.  The near-linear time deterministic algorithm of Kawarabayashi and Thorup \cite{KT19}
to find a KT-partition of a simple graph differs from ours with respect to the parameters in an interesting respect.  Recall that we defined $\cB_\eps^{nt}(G)$ to be the set of all bipartitions $\{S, \bar{S}\}$ of the 
vertex set corresponding to non-trivial cuts whose weight is at most $(1+\eps)\lambda(G)$, and a $(1+\eps)$-KT partition to be $\bigwedge \cB_\eps^{nt}$.  Kawarabayashi and Thorup 
consider the larger set of bipartitions $\cK_\eps^{nt}(G)$ corresponding to non-trival cuts of weight at most $\lambda(G) + \eps d$, where $d$ is the minimum degree of $G$.  When $G$ is simple
they can compute $\bigwedge \cK_\eps^{nt}(G)$ for any $\eps < 1$ in near-linear time.  Thus it is stronger than our result with respect to the parameters in two ways: it allows any $\eps < 1$ and also lets 
$\eps$ multiply the minimum degree rather than $\lambda(G)$.  

There is an inherent barrier to extending the 2-respecting cut framework we employ here to this parameter regime.  The reason is that Karger's tree packing lemma \cite[Lemma 2.3]{Karger} only shows that a cut of weight 
$< 3 \lambda(G)/2$ will 2-respect a positive fraction of the trees from a maximum tree packing.  To handle cuts of weight $3 \lambda(G)/2$ one would have to move instead to considering 3-respecting cuts, which seems to 
add a good deal of complexity.  Thus while we have not tried to optimize the constant $1/16$, there is a natural barrier to extending our methods for $\eps \ge 1/2$.  
Pushing to larger $\eps$ and also allowing $\eps$ to multiply the minimum weight of a vertex rather than $\lambda(G)$ both seem to require new techniques, and we leave this as an open question.

\bibliographystyle{alpha}
\bibliography{biblio}

\appendix

\section{Data structures} \label{app:data-structure}

We first show how to implement categorical top two queries on an array while allowing updates to add $\Delta$ to the scores in an interval.  This 
can be accomplished using a well-known binary tree data structure.  We will then port this construction to a tree $T$ by means of the heavy path 
decomposition of $T$ \cite{ST1983,HT1984}.  

The key to the binary tree data structure is the following simple fact.  For a node $u$ of a tree let $\interval(u)$ be the set of labels of leaves that are descendants of $u$.  
\begin{fact}
\label{fact:int_decomp}
Let $n$ be a power of 2 and $T$ a complete binary tree with $n$ leaves labeled by $1, \ldots, n$.  
For any interval $[i,j]$ there are $\Oh(\log n)$ many nodes $u_1, \ldots, u_t$ such that $[i,j] = \interval(u_1) \sqcup \cdots \sqcup \interval(u_t)$.  Moreover $u_1, \ldots, u_t$ 
can be found in $\Oh(\log n)$ time, and the total number of ancestors of $u_1, \ldots, u_t$ is $\Oh(\log n)$.
\end{fact}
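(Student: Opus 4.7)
The plan is to use the standard segment-tree style decomposition. First, I would compute the lowest common ancestor $w$ of the two leaves labeled $i$ and $j$; since $T$ is complete on $n = 2^k$ leaves this can be done in $\Oh(\log n)$ time, for instance by comparing the binary representations of $i-1$ and $j-1$ and locating their longest common prefix, or simply by walking up from both leaves.

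Next I would traverse the path $P_i$ from leaf $i$ up to $w$ and the path $P_j$ from leaf $j$ up to $w$. On the ascent from $i$, whenever I move from a node $v$ that is a \emph{left} child of its parent to that parent, the right sibling of $v$ lies entirely inside $[i,j]$; I add it to the output list. Ascents from a right child contribute nothing. Symmetrically, on the ascent from $j$, every move up from a \emph{right} child contributes the left sibling of that node. Finally I include the leaves $i$ and $j$ themselves. Reading the contributed nodes off in the natural left-to-right order yields a list $u_1, \ldots, u_t$ whose intervals are pairwise disjoint and tile $[i,j]$ exactly, so $[i,j] = \interval(u_1) \sqcup \cdots \sqcup \interval(u_t)$.

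For the size and time bounds: since $T$ has depth $\log n$, each of $P_i$ and $P_j$ contains at most $\log n$ edges, and each edge contributes at most one sibling to the decomposition, so $t = \Oh(\log n)$. The construction walks only along $P_i$ and $P_j$ and performs constant work per level, giving an $\Oh(\log n)$ running time.

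The last claim, on the total number of ancestors of $u_1, \ldots, u_t$, follows because every $u_k$ is either a leaf endpoint or the sibling of some node on $P_i \cup P_j$. In either case the strict ancestors of $u_k$ all lie on $P_i \cup P_j$ together with the path from $w$ up to the root. That entire union has size $\Oh(\log n)$, so the total number of ancestors of the chosen nodes is also $\Oh(\log n)$. The only subtlety worth double-checking in writing this up is the tiling property, i.e.\ verifying by induction on the length of $P_i$ and $P_j$ that the selected siblings, together with the two endpoint leaves, cover $[i,j]$ without overlap; this is routine but is the one place where the bookkeeping of left-versus-right ascents must be done carefully to avoid an off-by-one in whether the leaves $i,j$ themselves are already covered by an earlier contributed sibling.
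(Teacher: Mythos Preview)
The paper does not prove this statement at all; it is stated as a \emph{Fact} and treated as folklore about segment trees, so there is no paper proof to compare against.

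Your argument is the standard segment-tree decomposition and is correct in substance. One small point: as written, your rule ``whenever I move from a left child $v$ to its parent, add the right sibling'' also fires at the last step where the parent is $w$ itself, and in that case the right sibling is the subtree containing $j$, not a subset of $[i,j]$. The fix is simply to stop contributing siblings once the parent is $w$ (equivalently, run the two ascents only up to the child of $w$). You already flag that the bookkeeping needs care, so this is not a real gap, but the off-by-one is at the top of the paths rather than at the leaf endpoints. Your argument for the ancestor bound is clean and correct: every chosen node has all its strict ancestors on $P_i \cup P_j$ or on the $w$-to-root path, which is $\Oh(\log n)$ nodes in total.
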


\toptwopath*

\begin{proof}
By padding the array with scores of infinity and an arbitrary color we may assume that $n$ is a power of $2$. 
The data structure will be a complete binary tree $B$ with $n$ leaves labeled as $\ell_1, \ldots, \ell_n$.  
Each leaf stores a three tuple $(\score, \mathrm{index}, \edgecolor)$ and at leaf $i$ this three tuple is initialized to 
be $(A[i].\score,i,A[i].\edgecolor)$.  Every internal node $u$ will store a pair of such 3-tuples.  The data structure will 
maintain the invariant (Invariant 1) that at every internal node $u$ the indices in this pair of three tuples is the answer to 
the categorical top two query for the interval $\interval(u)$.  The answer to a categorical top two query 
for the interval $\interval(u)$ can be computed in constant time from answers to this query at the children of $u$.  Thus in time $\Oh(n)$ we can propagate 
the answers to the categorical top two queries from the leaves to the root so that Invariant 1 holds.

Each node $u$ will also store an $\update$ value $u.\update$.  We initialize the leaves to have $\ell_i.\update = A[i].\score$ 
and set the update value of all internal nodes of the tree to be zero.  Thus we have the property (Invariant 2) that the sum of the update 
values from $\ell_i$ to the root is $A[i]$, which will be maintained under the $\textsc{Add}(\Delta,i,j)$ updates.  
This completes the pre-processing step and the total pre-processing time is $\Oh(n)$.  

We now show that after an update we can adjust the tree to maintain Invariant 1 and Invariant 2
in $\Oh(\log n)$ time.  If the invariants hold, then we can answer a categorical top two query for the interval $[i,j]$ in time $\Oh(\log n)$.  This is done by first using 
\cref{fact:int_decomp} to find in $\Oh(\log n)$ time nodes $u_1, \ldots, u_t$ such that $\interval(u_1), \ldots, \interval(u_t)$ form a partition of $[i,j]$.  Then by building a binary tree on top of 
$u_1, \ldots, u_t$ and propagating the categorical top two query answers up this tree we can answer the categorical top two query for $[i,j]$ in time 
$\Oh(\log n)$.  

To restore the invariants after $\textsc{Add}(\Delta,i,j)$, we use \cref{fact:int_decomp} to find in $\Oh(\log n)$ time
nodes $u_1, \ldots, u_t$ such that $\interval(u_t), \ldots, \interval(u_t)$ 
form a partition of $[i,j]$. Then for each $i=1, \ldots, t$ we set $u_i.\update \gets u_i.\update + \Delta$.  This restores Invariant 2 under the update.  To restore Invariant 1, we recompute the answers 
to the categorical top two query at all ancestors of $u_1, \ldots, u_t$.  By \cref{fact:int_decomp} there are only $\Oh(\log n)$ many such ancestors, thus we can perform this computation in $\Oh(\log n)$ time as well.
\end{proof}

In order to extend this structure to a general tree $T$, we first construct its heavy path decomposition.
Next, we concatenate the heavy paths to form a list of all edges of $T$ with the property that
any subtree $T_e$ is described by a contiguous range of edges (but potentially containing many heavy paths).
This is done recursively as follows.
Let the topmost heavy path be $h=u_{1}-u_{2}-\ldots u_{k}$. We first write down its edges $(u_{1},u_{2}),(u_{2},u_{3}),\ldots,(u_{k-1},u_{k})$.
Then, we remove them from the tree.
We recurse on the trees consisting of more than one node rooted at $u_{k},u_{k-1},\ldots,u_{1}$ (note that $u_{k}$ is always a root
of tree consisting of size 1), in this order. This guarantees that, for any $e\in h$, $T_{e}$ indeed consists of a
contiguous range of edges, while for other edges this is guaranteed recursively.

\toptwotree*

\begin{proof}
Consider a heavy path decomposition of $T$, and construct the edge array $A[1..(n-1)]$ by concatenating the heavy paths
as described above.
We will use the data structure from \cref{lem:top2path} on $A[1..(n-1)]$.
Any path $p$ can be decomposed into $\Oh(\log n)$ infixes of heavy paths (in fact, at most one proper infix and
a number of prefixes), and hence it corresponds to $\Oh(\log n)$ contiguous ranges of $A[1..(n-1)]$.
Hence we implement the first operation by making $\Oh(\log n)$ calls to $\textsc{Add}(\Delta,i,j)$,
by Lemma~\ref{lem:top2path} this takes $\Oh(\log^{} n)$ time.
Finally, since $T_e$ is described by a single contiguous range $A[i..j]$, 
a categorical top-two query in $T_{e}$ corresponds to operation $\textsc{CatTopTwo}(i,j)$, which again takes time $\Oh(\log n)$.
\end{proof}

\end{document}